\documentclass[reprint,amsmath,amssymb,aps,nofootinbib]{revtex4-2}
\usepackage{bm}
\usepackage{graphicx}
\usepackage{amsmath}
\usepackage{amssymb}
\usepackage{amsthm}
\usepackage{braket}
\usepackage{siunitx}
\usepackage{color}
\usepackage[utf8]{inputenc}
\usepackage{hyperref}
\usepackage{here}
\usepackage{ytableau} 
\usepackage{xcolor}
\usepackage{dsfont}
\usepackage{here}

\usepackage{tikz}
\usepackage{tikz-cd}
\usetikzlibrary{arrows}
\usetikzlibrary{intersections}
\usetikzlibrary{shapes.geometric}
\usetikzlibrary{decorations.pathmorphing, patterns,shapes}
\usetikzlibrary{decorations.markings}

\usepackage[most]{tcolorbox}
\usepackage[lmargin=.7in,rmargin=.7in,tmargin=.7in,bmargin=1in]{geometry}

\hypersetup{ 
setpagesize=false,
 bookmarksnumbered=true,%
 bookmarksopen=true,%
 colorlinks=true,%
 linkcolor=blue,
 citecolor=blue,
 urlcolor=blue
}

\theoremstyle{plain}
\newtheorem{thm}{Theorem}
\newtheorem{lm}{Lemma}
\newtheorem{cor}{Corollary}
\newtheorem*{thm*}{Theorem}
\theoremstyle{definition}
\newtheorem{dfn}{Definition}

\newtheorem{prop}[thm]{Proposition}
\renewcommand{\proofname}{Proof sketch}

\newcommand{\C}{\mathbb C}
\newcommand{\cD}{\mathcal D}

\newcommand{\cN}{\mathcal N}
\newcommand{\cS}{\mathcal S}
\newcommand{\cU}{\mathcal U}

\newcommand{\cE}{\mathcal{E}}
\newcommand{\cT}{\mathcal{T}}

\newcommand{\Path}[1]{\mathsf{Path}(#1)}

\newcommand{\mfS}{\mathfrak{S}}

\newcommand{\bfI}{\mathbf{I}}
\newcommand{\bfO}{\mathbf{O}}

\usepackage{physics}
\usepackage{mathtools}

\begin{document}
\title{Asymptotically optimal purification of noisy unitary channels in any dimension}

\author{Ryotaro Niwa}
\email{ryotaro.niwa@phys.s.u-tokyo.ac.jp}
\affiliation{Department of Physics, Graduate School of Science, The University of Tokyo, Hongo 7-3-1, Bunkyo-ku, Tokyo 113-0033 Japan}

\author{Satoshi Yoshida}
\email{satoshiyoshida.phys@gmail.com}
\affiliation{Department of Physics, Graduate School of Science, The University of Tokyo, Hongo 7-3-1, Bunkyo-ku, Tokyo 113-0033 Japan}

\author{Mio Murao}
\email{murao@phys.s.u-tokyo.ac.jp}
\affiliation{Department of Physics, Graduate School of Science, The University of Tokyo, Hongo 7-3-1, Bunkyo-ku, Tokyo 113-0033 Japan}

\date{\today}

\begin{abstract}
We consider the problem of \textit{noisy unitary purification}. Given access to an unknown $d$-dimensional unitary channel followed by depolarizing noise of strength $p$, we aim to construct a superchannel that universally purifies the noisy unitary back to the original unknown unitary. We optimize over arbitrary adaptive sequential strategies and analytically derive the optimal fidelity to the leading order in the noise strength and number of channel uses, while also providing a concrete $SU(d)$-covariant parallel strategy that attains the optimum. Our result implies the query complexity $\Theta(d^2p/\epsilon)$ for achieving leading-order infidelity $\epsilon$ in the low-noise regime, which scales better than the naive approach combining optimal state purification and storage-and-retrieval of quantum channels. We also consider the dual problem of \textit{noisy unitary conjugation}, where the goal is to obtain the best approximation of the complex conjugate of the original unknown unitary from access to noisy queries. We show that the optimal fidelity for this task coincides with that of noisy unitary purification to the leading-order in the low-noise and large-query limit. 
\end{abstract}

\maketitle

\textit{Introduction.---}
Unitary channels are an important class of quantum channels that describe the dynamics of closed quantum systems. They serve as a fundamental building block of quantum computation: A carefully designed sequence of unitary gates applied to an initial quantum state, followed by subsequent measurement allows one to efficiently obtain solutions to certain computational problems such as factorization~\cite{Shor_1997} and search~\cite{grover1996fastquantummechanicalalgorithm}. In practice, however, quantum systems are sensitive to noise from the environment. In order to perform reliable quantum computation, it is thus crucial to develop methods that recover the original unitary by removing the effect of noise. 

One way to achieve this is fault-tolerant quantum computation (FTQC)~\cite{PhysRevA.57.127, gottesman2014faulttolerantquantumcomputationconstant} with quantum error-correcting codes (QECC)~\cite{Gottesman1997,PhysRevA.54.1098, PhysRevA.54.4741, Kitaev_2003, Dennis_2002}. In FTQC, one first encodes the quantum state into a QECC, applies unitary gates that are potentially faulty, and finally performs decoding to recover  the intended logical output. However, FTQC assumes that the unitary being applied is specified and \textit{known}, which may not always be true in practice. Moreover, to implement unitary gates fault-tolerantly and \textit{universally}, one often needs sophisticated methods such as magic state distillation~\cite{PhysRevA.71.022316, Wills:2024wid, golowich2024asymptotically, golowich2024quantum, nguyen2024good, itogawa2025efficient} or magic state cultivation~\cite{gidney2024magicstatecultivationgrowing}, which necessitates various optimization efforts. 

A complementary alternative approach is \textit{state purification}~\cite{PhysRevLett.82.4344, Childs_2025, grier2025streamingquantumstatepurification,PhysRevX.9.031013, brahmachari2025optimalqubitpurificationunitary, li2025optimalquantumpurityamplification, li2026quantumpurityamplificationarbitrary, scharnhorst2026nonasymptoticboundsquantumpurity}, where one aims to obtain the best possible approximation to the original unknown pure state from access to multiple copies of noise-corrupted states. Compared to FTQC, where the goal is to reduce noise of a known operation using potentially noisy operations, state purification uses noiseless operations to reduce the noise of unknown input states. Notably, recent works have derived the optimal fidelity for this task~\cite{li2026quantumpurityamplificationarbitrary, li2025optimalquantumpurityamplification, scharnhorst2026nonasymptoticboundsquantumpurity} together with efficient quantum circuits that implement the optimal or near-optimal protocols~\cite{li2025optimalquantumpurityamplification, PhysRevX.9.031013, grier2025streamingquantumstatepurification, yang2024quantum}. However, state purification is not sufficient if we want a purified unitary channel that can later be applied to any quantum state at will. Such a noise reduction, for example, is potentially useful for learning unitary channels in a noisy setting, where FTQC cannot reduce the noise in an \emph{unknown} unitary channel to be learned~\cite{cotler2026noisy, kannan2026exponential}. Although storage-and-retrieval may offer a protocol for retrieving the unitary channel stored in a purified quantum state, it requires an exponential cost for the query complexity in terms of the number of qubits~\cite{ishizaka2008asymptotic, ishizaka2009quantum, bisio2010optimal, yang2020optimal, yoshida2026one} and may not be optimal among all possible strategies.

In this Letter, we consider the problem of \textit{noisy unitary purification}: Given $n$ uses of unknown $d$-dimensional unitary channels under depolarizing noise of strength $p$, we aim to construct a protocol that produces a single use of the best possible approximation to the original unitary. Unlike state purification, we have the freedom to apply quantum operations in various temporal orders, which gives rise to richer strategies such as adaptive protocols with feedback from past measurement outcomes. Among extensive degrees of freedom, we derive the optimal fidelity for this task to the leading order in the noise strength and the number of channel uses, while also finding a concrete $\mathrm{SU}(d)$-covariant parallel strategy that attains the optimum. Our result implies the query complexity $\Theta(d^2p/\epsilon)$ for achieving leading-order infidelity $\epsilon$ in the low-noise regime, which has a better $\epsilon$-dependence than the naive approach combining optimal state purification with storage-and-retrieval of quantum channels. Prior to this work, the optimal fidelity for the minimal $d=2, n=3$ case for the restricted family of \textit{parallel} protocols~\cite{zhao2026distillingunitaryoperationsnogo}, as well as a bound for the leading-order term for the $d=2$ case ~\cite{niwa2026scalingoptimalpurificationnoisyqubit} were known. This work improves upon these works and determines the asymptotically optimal fidelity for the arbitrary $d$-dimensional case under general sequential protocols. Note that our work is distinct from virtual channel purification~\cite{PRXQuantum.6.020325}, where the goal is to obtain expectation values of observables through classical post-processing, rather than to obtain the purified channel itself. 

From a broader perspective, our work initiates the study of the robustness of previously known black-box unitary transformation protocols~\cite{PhysRevResearch.1.013007, Ebler_2023, Grinko_2024, grinko2026sequentialquantumprocessesgroup, Yoshida_2023, Chen_2026} under 
noise. Indeed, noisy unitary purification corresponds to the simplest case of analyzing the robustness of the \textit{identity transformation} under depolarizing noise. Following this viewpoint, we also consider the task of \textit{noisy unitary conjugation}, where the goal is to obtain the best possible approximation to a single use of the conjugate unitary from access to noisy unitaries. Surprisingly, we find that the cost for achieving leading-order infidelity $\epsilon$ in the low-noise and large-query regime is the same for noisy unitary purification and noisy unitary conjugation, which can be contrasted with the noiseless case, where a single use of $\mathcal{U}$ obviously suffices to implement $\mathcal{U}$, while $d-1$ uses of $\mathcal{U}$ are necessary and sufficient to realize a single use of $\mathcal{U}^*$~\cite{PhysRevResearch.1.013007, Ebler_2023, quintino2019probabilistic}. 

\textit{Preliminaries.---}
Let $\mathcal{L}(\mathcal{H})$ denote the set of linear operators on a Hilbert space $\mathcal{H}$. Quantum superchannels~\cite{Chiribella_2008, 8678741} are linear maps that transform quantum channels to quantum channels. Mathematically, an $n$-slot superchannel $\Xi$ is a linear map
\begin{align}
    \Xi: \bigotimes_{i=1}^{n} [\mathcal{L}(\mathcal{I}_i) \to \mathcal{L}(\mathcal{O}_{i})] \to [\mathcal{L}(\mathcal{P}) \to \mathcal{L}(\mathcal{F})] 
\end{align}
such that for any set of quantum channels $\Phi_i: \mathcal{L}(\mathcal{I}_i \otimes \mathcal{A}_i) \to \mathcal{L}(\mathcal{O}_{i} \otimes \mathcal{A}_i)$ and auxiliary systems $\mathcal{A}_i$ for $i \in [n]$, the output $\Phi_{\textrm{out}}=(\Xi\otimes \mathrm{id}_{\mathcal{A}_1\ldots \mathcal{A}_n})(\Phi_1\otimes \Phi_2\otimes \cdots \otimes\Phi_n)$ is a quantum channel. We use $\mathrm{id}$ to denote the identity channel. In this work, we are concerned with the case where input channels $\Phi_1, \ldots, \Phi_n$ are identical. Then, the most general quantum superchannel that can be implemented using a quantum circuit has the form 
\begin{align}\label{eq:Seqdecomp}
    \Phi_{\textrm{out}}= \Lambda_{n} \circ (\Phi_n \otimes \mathrm{id}_{\mathcal{A}_n}) \circ \Lambda_{n-1} \circ \cdots \circ (\Phi_{1} \otimes \mathrm{id}_{\mathcal{A}_1}) \circ \Lambda_0 
\end{align}
using auxiliary Hilbert spaces $\mathcal{A}_1, \cdots, \mathcal{A}_n$ and channels $\Lambda_i: \mathcal{L}(\mathcal{O}_{i}\otimes \mathcal{A}_{i}) \to \mathcal{L}(\mathcal{I}_{i+1}\otimes \mathcal{A}_{i+1}),\, (i= 0, 1, \cdots n)$ and $\mathcal{O}_0 := \mathcal{P}$, $\mathcal{I}_{n+1} := \mathcal{F}$, $\mathcal{A}_0 = \mathcal{A}_{n+1} = \mathbb{C}$. Such a superchannel $\Xi$ is called a \textit{sequential superchannel}, a sequential strategy, or a quantum comb~\cite{chiribella2008quantum}. A restricted class of sequential superchannels is the \textit{parallel superchannel}, (or parallel strategy), which can be decomposed as 
\begin{align}
    \Phi_{\textrm{out}} = \mathcal{D} \circ \qty(\bigotimes_{i=1}^n \Phi_i \otimes \mathrm{id}_\mathcal{A}) \circ \mathcal{E}, 
\end{align}
using an encoding channel $\mathcal{E}: \mathcal{L}(\mathcal{P})\to \mathcal{L}(\mathbf{I} \otimes \mathcal{A})$ with $\mathbf{I} = \bigotimes_i \mathcal{I}_{i}$ and the decoding channel $\mathcal{D}:  \mathcal{L}(\mathbf{O} \otimes \mathcal{A}) \to \mathcal{L}(\mathcal{F})$ with $\mathbf{O} = \bigotimes_i \mathcal{O}_{i}$. Analogous to quantum channels, one can define the \textit{Choi matrix} of quantum superchannels~\cite{Chiribella_2008}, which is a positive semidefinite matrix satisfying certain linear constraints (see Appendix for details). 

\textit{Setup.---}
\begin{figure}[t]
  \centering
  \includegraphics[width=\linewidth]{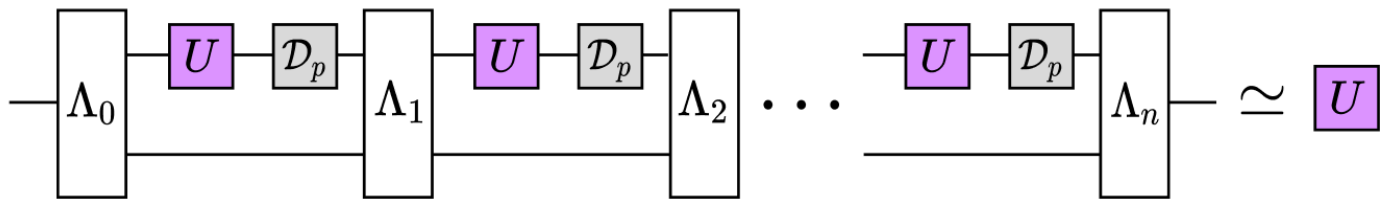}
  \caption{A quantum superchannel $\Xi$ implementing noisy unitary purification. Quantum channels $\Lambda_i,\, (i=0,1,\cdots n)$ are inserted between the noisy unitary channel $\mathcal{N}_{\mathcal{U}, p}$ to produce an approximation of a single use of $\mathcal{U}$. }
\end{figure}
We first describe the formulation of the noisy unitary purification problem~\cite{zhao2026distillingunitaryoperationsnogo, niwa2026scalingoptimalpurificationnoisyqubit}. Given access to a noisy unitary channel 
\begin{align}
    \mathcal{N}_{\mathcal{U},p} = \mathcal{D}_p\circ \mathcal{U}, 
\end{align}
where $\mathcal{D}_p(\rho) = (1-p)\rho+p\Tr(\rho)\frac{\mathbb{I}_d
}{d}$ is the depolarizing channel of strength $p$ ($\mathbb{I}_d$ denotes the identity operator) and $\mathcal{U}(\cdot) = U (\cdot) U^\dagger$ is an \textit{unknown} unitary channel, the goal is to construct a superchannel $\Xi$ such that 
\begin{align}
\Xi(\mathcal{N}_{\mathcal{U},p}^{\otimes n}) \simeq \mathcal{U}
\end{align}
for all $U\in \mathrm{SU}(d)$.
To quantify the performance of $\Xi$, we use the channel fidelity $f_{\mathrm{Choi}}$ as a figure of merit. It is given by~\cite{raginsky2001fidelity}
\begin{align}
    f_{\mathrm{Choi}}(\Phi_1, \Phi_2) := f\qty(\frac{\mathcal{J}_{\Phi_1}}{d}, \frac{\mathcal{J}_{\Phi_2}}{d}), 
\end{align}
where $f(\rho, \sigma):=( \Tr\sqrt{\rho^{1/2}\sigma \rho^{1/2}})^2$ is the squared fidelity and 
$\mathcal{J}_\Phi:= d(\mathrm{id} \otimes \Phi )|\Phi_d^+\rangle \langle \Phi_d^+|$ is the Choi matrix of the quantum channel $\Phi$, with $|\Phi_d^+\rangle = \frac{1}{\sqrt{d}}\sum_i |i\rangle \otimes |i\rangle $ denoting the $d$-dimensional maximally entangled state. Define the performance operator~\cite{chiribella2016optimal} by
\begin{align}
    \Omega_{n,p} = \frac{1}{d^2} \int \mathrm{d} U\, \mathcal{J}_{\mathcal{U}^*} \otimes (\mathcal{J}_{\mathcal{D}_p \circ \mathcal{U}})^{\otimes n}, 
\end{align}
where $(\cdot)^*$ represents the complex conjugate in the computational basis and $\mathrm{d}U$ is the Haar measure. The average channel fidelity between the output channel $\mathcal{Q}_{\mathcal{U}, p}^{(n)}:=\Xi(\mathcal{N}_{\mathcal{U},p}^{\otimes n})$ and the unitary channel $\mathcal{U}$ is given by
\begin{align}
    \int \mathrm{d}U\, f_{\mathrm{Choi}}(\mathcal{Q}_{\mathcal{U}, p}^{(n)}, \mathcal{U}) = \Tr(\Omega_{n,p}^T \mathcal{J}_\Xi), 
\end{align}
where $\mathcal{J}_\Xi$ is the Choi matrix of the superchannel $\Xi$ and $(\cdot)^T$ denotes the transpose in the computational basis. The noisy unitary purification problem can then be formulated as a semidefinite program (SDP): 
\begin{align}\label{eq:SDP}
    &\textrm{max} \Tr(\Omega_{n,p}^T \mathcal{J}_\Xi)\nonumber\\
    &\textrm{subject to ``$\Xi$ is a quantum comb"}. 
\end{align}
We denote the optimal value for the above SDP by $f_{n,d}(p)$. The goal of this paper is to determine the behavior of $f_{n,d}(p)$ in the small-$p$ and large-$n$ limit. Note that, since the performance operator $\Omega_{n,p}$ satisfies 
\begin{align}
    [A^*_P \otimes A^{\otimes n}_{\mathbf{I}} \otimes B_{F}^* \otimes B_{\mathbf{O}}^{\otimes n}, \Omega_{n,p}] = 0\quad (\forall A,B\in \mathrm{SU}(d)), 
\end{align}
one can twirl the Choi matrix $\mathcal{J}_\Xi$ and assume without loss of generality that $\mathcal{J}_\Xi$ also satisfies the same symmetry as $\Omega_{n,p}$. If we use this twirled protocol, the average fidelity is equal to the optimal fidelity for every $U$, and hence the protocol is universal~\cite{quintino2022deterministic}.

\begin{figure*}[t]
  \centering
  \includegraphics[width=0.75\linewidth]{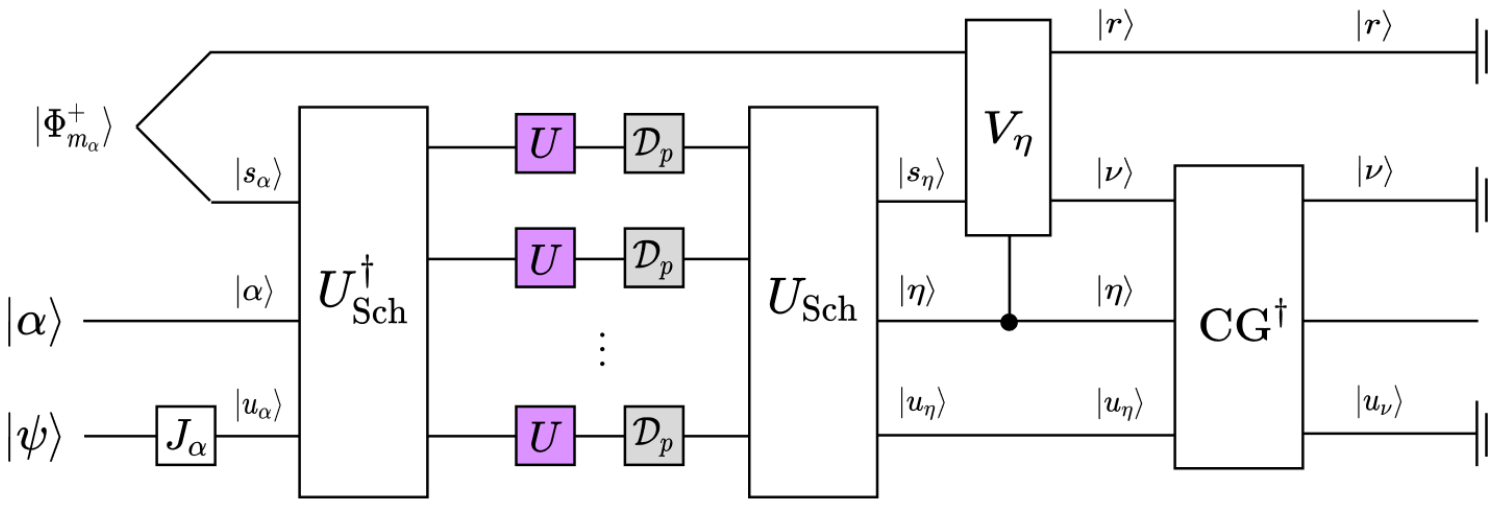}
  \caption{In the above figure, $J_\alpha$ is the basis transformation implementing the isomorphism $\mathbb{C}^d \simeq \mathcal{U}_\alpha$, $U_{\mathrm{Sch}}$ denotes the quantum Schur transform~\cite{bacon2005quantumschurtransformi, Kirby_2018, burchardt2025highdimensionalquantumschurtransforms}, $|\Phi_{m_\alpha}^+\rangle := \frac{1}{\sqrt{m_\alpha}}\sum_{s_\alpha=1}^{m_\alpha} |s_\alpha\rangle_{\mathcal{S}_\alpha} \otimes |s_\alpha \to \hat{\lambda}\rangle_E$ is a maximally entangled state between the Specht module $\mathcal{S}_\alpha$ and the environment, $V_\eta$ is an isometry, and $\mathrm{CG}$ denotes the Clebsch-Gordan transformation. See Appendix for details.}
  \label{fig:purificationcircuit}
\end{figure*}
\textit{Asymptotically optimal noisy unitary purification.---}
To rigorously handle the two limits $p\to 0$ and $n\to \infty$, we first define the first-order infidelity coefficient. 
\begin{dfn} The first-order infidelity coefficient $C_{n,d}$ for the $d$-dimensional noisy unitary purification problem is defined by 
\begin{align}
    C_{n,d} := \lim_{p\to +0}\frac{1-f_{n,d}(p)}{p}. 
\end{align}
\end{dfn}
\noindent One can show that the above limit indeed exists by using Danskin's theorem~\cite{danskin1966, bertsekas2003convex} in convex optimization (see Appendix for details). We now state the first part of our main result: 
\begin{thm}[Fundamental limits on the achievable fidelity]\label{thm1}
The first-order infidelity coefficient $C_{n,d}$ for any sequential protocol satisfies 
\begin{align}\label{eq:upperbound}
    C_{n,d} \geq C^{\star}_{n,d}, 
\end{align}
where 
\begin{align}\label{eq:Cstar}
C_{n,d}^{\star} =  
\frac{1}{d^2n}\qty[1+\frac{(d^2-2)(d^2 +1+2\sqrt{1+\frac{d^2 -1}{n}})}{\qty(1+ \sqrt{1+\frac{d^2 -1}{n}})^2}]
\end{align}
for all $n$.
\end{thm}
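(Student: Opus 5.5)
The plan is to linearize the SDP~\eqref{eq:SDP} around $p=0$ and bound the resulting first-order problem from below using a dual certificate. Write $\Omega_{n,p}=\Omega_{n,0}+p\,\Omega'_{n}+O(p^2)$. Since $\mathcal{J}_{\mathcal{D}_p\circ\mathcal{U}}=(1-p)\mathcal{J}_{\mathcal{U}}+p\,\mathbb{I}\otimes\mathbb{I}/d$, we get $\Omega'_n=\sum_{i=1}^n(\Omega^{(i)}_n-\Omega_{n,0})$, where $\Omega^{(i)}_n$ is $\Omega_{n,0}$ with the $i$-th slot replaced by the Choi matrix of the completely depolarizing channel. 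The noiseless optimum is $f_{n,d}(0)=1$, attained by every comb that reproduces $\mathcal{U}$ exactly. Danskin's theorem (the same argument the paper uses to show the limit exists) then gives
\begin{align}
C_{n,d}=\min_{\Xi\in\mathcal{S}_0}\ \sum_{i=1}^n\Big[1-\Tr\big((\Omega^{(i)}_n)^T\mathcal{J}_\Xi\big)\Big],
\end{align}
where $\mathcal{S}_0$ is the set of (twirled) combs with $\Tr(\Omega_{n,0}^T\mathcal{J}_\Xi)=1$. Operationally, $C_{n,d}$ is the minimal total fidelity loss, summed over slots, when one query is fully erased, among protocols that are perfect without noise. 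So it suffices to lower-bound this linear program over the convex set $\mathcal{S}_0$ by $C^\star_{n,d}$.

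Next, I would use the $\mathrm{SU}(d)\times\mathrm{SU}(d)$ symmetry and Schur--Weyl duality. Decompose $(\mathbb{C}^d)^{\otimes n}$ on the input and output sides into $\bigoplus_\lambda \mathcal{U}_\lambda\otimes\mathcal{S}_\lambda$. The constraint $\Xi\in\mathcal{S}_0$ forces the protocol to route the unknown $U$ through irreps $\mathcal{U}_\alpha\simeq\mathbb{C}^d$, i.e.\ through the Young diagrams $\alpha$ obtained by adding boxes to a diagram with $d-1$ rows or one column, which contain the defining representation. The erasure terms then become expressible through Clebsch--Gordan and Specht-module overlaps between $\lambda$ and $\lambda\pm\square$. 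I expect the symmetric reduction to collapse the linear program to an optimization over a weight vector $(c_\lambda)$ on a chain of Young diagrams. In the large-$n$ picture this is a quadratic-form minimization $\sum_\lambda(\ldots)c_\lambda c_{\lambda'}$ whose minimum is an extremal eigenvalue of a tridiagonal-like operator. The closed form with $\sqrt{1+(d^2-1)/n}$ is exactly what such a two-parameter quadratic minimization produces, and that is the value I would aim to match.

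The main obstacle is that we must allow \emph{sequential} combs, not just parallel ones. For this I would work with the SDP dual of the linearized problem. The dual of a comb-constrained SDP asks for a Hermitian $Y$ (built from a chain of normalized operators $Y_n\ge\cdots\ge Y_1$ satisfying the comb's partial-trace structure) together with a multiplier $\mu$ for the equality $\Tr(\Omega_{n,0}^T\mathcal{J}_\Xi)=1$. These must satisfy
\begin{align}
\mu\,\Omega_{n,0}^T-\sum_i(\Omega^{(i)}_n-\Omega_{n,0})^T\le Y,
\end{align}
where $Y$ is dominated by a dual-comb operator. My first attempt would be a permutation-symmetric, $\mathrm{SU}(d)$-covariant ansatz: $Y$ is a function of the Schur--Weyl projectors on inputs and outputs, chosen so that the comb normalization holds slot by slot, which is possible for operators of the form $\mathbb{I}_{\mathcal{F}}\otimes X$ with $X$ acting irrep-wise. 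The operator inequality then reduces to finite-dimensional inequalities within each isotypic block. Matching the primal value from the parallel construction of Fig.~\ref{fig:purificationcircuit} would confirm tightness. If the covariant ansatz fails for the sequential constraints, the fallback is to prove the bound slot by slot. The idea would be that each erasure loss is at least the loss of an optimal $(n-1)$-query perfect-unitary protocol at the relevant Young-diagram level. Summing these bounds with a Cauchy--Schwarz step should reproduce the quadratic form above, with equality exactly at the stated $C^\star_{n,d}$.
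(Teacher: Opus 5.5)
Your linearization (Danskin, restriction to the noiseless face, and $C_{n,d}$ as the minimal summed single-erasure loss) is correct and is the paper's first step too. After that the proposal is a plan rather than a proof, and the steps meant to carry the bound are either wrong or missing.

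The reduction claim is false. Exactness does not force the code through irreps $\mathcal{U}_\alpha\simeq\mathbb{C}^d$. Feeding the input into slot $1$ and $|0\rangle$ into the other slots is exact, yet for $n\ge 2$ it populates $\lambda=(n)$. So there is no collapse to weights on a chain of Young diagrams, and the ``extremal eigenvalue of a tridiagonal operator'' you anticipate is not where $C^\star_{n,d}$ comes from.

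More importantly, you never exhibit anything that lower-bounds the linearized program over \emph{sequential} combs. Dual feasibility there is an ordered chain of partial-trace conditions on the individual $I_k,O_k$. You do not show that your slot-permutation-symmetric ansatz built from global Schur--Weyl projectors satisfies it, and you write down no block inequality attaining $C^\star_{n,d}$. The fallback (a per-slot comparison with an $(n-1)$-query problem plus Cauchy--Schwarz) is not a precise statement. Saying the closed form ``is what such a minimization produces'' matches the target; it does not derive it.

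The missing idea is how the paper makes the comb structure irrelevant. Dilating every interleaving channel turns any sequential protocol into a single isometry $V_{\mathrm{Pur}}(U_1,\dots,U_n)$. To first order in $p$, only the Gram data of $V_0$ and of the single-insertion operators $V_{a,r}$ enter. Three ingredients constrain that data:
covariance plus Wigner--Eckart;
the exactness relation $V_{\mathrm{Pur}}(U,\dots,U)=U\otimes|\eta_U\rangle$;
the isometry property of the maps after slot $r$.
Together they give $V_0^\dagger V_{a,r}=c_rT_a$ with $\sum_r c_r=1$, and $V_{a,r}^\dagger V_{b,s}=K_{rs}\delta_{ab}\mathbb{I}_d+S_{rs}d_{abc}T_c+iL_{rs}f_{abc}T_c$ with $K_{rr}=1$ and $S_{rr}=L_{rr}=c_r$. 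These constraints are blind to the causal order, which is why adaptivity is handled for free.

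From there the paper proceeds as follows. The B\'eny--Oreshkov condition turns optimal recovery into the fidelity between the complementary Choi state and $\sigma_E\otimes\mathbb{I}_R/d$. The decomposition $\mathrm{Adj}\otimes\bar{\Box}\simeq\bar{\Box}\oplus R_+\oplus R_-$ splits this into PSD blocks $A_0,A_\pm$, with $\Tr A_0=n+d^2-2-(d^2-1)\sum_r c_r^2$ and $\Tr A_\pm=n-1$. Twirling $\sigma_E$, bounding $f_{\mathrm{root}}(X,Y)\le\sqrt{\Tr X\,\Tr Y}$, using $\sum_r c_r^2\ge 1/n$, and optimizing the one-error weight $z$ then yields exactly $C^\star_{n,d}$. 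The factor $\sqrt{1+(d^2-1)/n}$ is $\sqrt{\Tr A_0/(n-1)}$ evaluated at $\sum_r c_r^2=1/n$. Without this reduction, or an explicitly constructed sequential dual certificate, your argument does not reach the bound.
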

\begin{proof}
Let $\Xi_{\mathrm{Pur}}$ denote the Stinespring dilation of the twirled superchannel $\Xi$, which is obtained by dilating each channel $\Lambda_i$ interleaving the noisy unitary channel $\mathcal{N}_{\mathcal{U},p}$, and let $V_{\mathrm{Pur}}(U_1, \cdots U_n)$ denote the associated isometry $\Xi_{\mathrm{Pur}}(\mathcal{U}_1, \cdots \mathcal{U}_n) = V_{\mathrm{Pur}}(\cdot)V_{\mathrm{Pur}}^\dagger$. Let the environment of $\Xi_{\mathrm{Pur}}$ be accessible to a new recovery operation $\mathcal{R}$. Note that choosing $\mathcal{R}$ to be the traceout channel on the environment allows one to recover the original superchannel $\Xi$. Thus, this assumption is sufficient for proving an upper bound on the achievable fidelity. 

Using the Bény-Oreshkov condition~\cite{PhysRevLett.104.120501} together with the covariance of the twirled superchannel $\Xi$, one gets 
\begin{align}\label{eq:BenyOreshkov}
\Tr(\Omega_{n,p}^T \mathcal{J}_\Xi) &\leq 
\max_\mathcal{R} f_{\mathrm{Choi}}(\mathcal{R} \circ \Xi_{\mathrm{Pur}}(\mathcal{N}_{\mathcal{U},p}
^{\otimes n}), \mathcal{U}) \nonumber\\&= \max_\rho f_{\mathrm{Choi}}([\Xi_{\mathrm{Pur}}(\mathcal{N}_{\mathrm{id},p}
^{\otimes n})]^c, \mathcal{T}_\rho).  
\end{align}
Here, $\mathcal{N}^c(\cdot):=\Tr_{\mathcal{O}}[V(\cdot)V^\dagger]$ is the complementary channel of $\mathcal{N}$, $V:\mathcal{I} \to \mathcal{O} \otimes \mathcal{E}$ is the Stinespring dilation of $\mathcal{N}$ with $\mathcal{N}(\cdot):=\Tr_{\mathcal{E}}[V(\cdot)V^\dagger]$, and $\mathcal{T}_\rho(\cdot):= \Tr(\cdot)\rho$ is a constant (trace-and-replace) channel. Note that  $\rho$ is maximized over quantum states, while 
$\mathcal{R}$ is maximized over quantum channels.

Now, the depolarizing channel can be written as 
\begin{align}
    \mathcal{D}_p(\rho) = \qty(1-\frac{d^2-1}{d^2}p)\rho + \frac{p}{d^2}\sum_a T_a \rho T_a^\dagger, 
\end{align}
where $\{T_a\}_{a=1}^{d^2-1}$ represent Hermitian traceless operators satisfying 
\begin{align}
    &\Tr(T_aT_b) = d\delta_{ab}, \\
    &T_a T_b = \delta_{ab}\mathbb{I}_d + (d_{abc} + if_{abc}) T_c. 
\end{align}
Here, $\delta_{ab}$ is the Kronecker delta defined by $\delta_{aa} = 1$ and $\delta_{ab}=0$ for $a\neq b$, and $d_{abc}, f_{abc}$ are totally symmetric and antisymmetric tensors, respectively, defined by 
\begin{align}
\begin{dcases}
    d_{abc} = \frac{1}{2d} \Tr(\{T_a, T_b \}T_c)\\
    f_{abc} = \frac{1}{2id} \Tr([T_a, T_b] T_c).
\end{dcases}
\end{align}
To derive the leading-order term of the fidelity in $p$, it suffices to consider the no-error and single-error terms. Define $V_0:=V_{\mathrm{Pur}}(\mathbb{I}, \mathbb{I}, \cdots, \mathbb{I})$, and $V_{a,r}:=-i\partial_\theta V_{\mathrm{Pur}}(\mathbb{I},\cdots, e^{iT_a \theta}, \cdots \mathbb{I})|_{\theta=0}$. Here, $V_{a,r}$ is effectively obtained by substituting $T_a$ into the $r$-th slot of $V_0$. The covariance of $\Xi_{\mathrm{Pur}}$ forces the relations
\begin{align}\label{eq:tensorrule}
\begin{dcases}
    U [V_{0}^\dagger V_{a,r}] U^\dagger = \sum_b R_{ba}(U) [V_{0}^\dagger V_{b,r}] \\
    U [V_{a,r}^\dagger V_{b,s}] U^\dagger = \sum_{c,e} R_{ca}(U) R_{eb}(U)[V_{c,r}^\dagger V_{e,s}], 
\end{dcases}
\end{align}
where $R(U)$ is the adjoint representation given by
\begin{align}
UT_aU^\dagger = \sum_b R_{ba}(U) T_b. 
\end{align}
Due to the Wigner-Eckart theorem, the relation in Eq.~\eqref{eq:tensorrule} further implies 
\begin{align}\label{eq:tensorrelation}
\begin{dcases}
    V_{0}^\dagger V_{a,r} = c_r T_a \\
    V_{a,r}^\dagger V_{b,s} = K_{rs}\delta_{ab}\mathbb{I}_d + S_{rs}d_{abc}T_c+iL_{rs}f_{abc}T_c, 
\end{dcases}
\end{align}
where $c_r\in \mathbb{R}$ and $K,S,L$ are Hermitian matrices that satisfy 
\begin{align}\label{eq:constraintKSL}
K_{rr}=1, S_{rr}=L_{rr}=c_r. 
\end{align}
Choosing $U= e^{iT_a\theta}$ and differentiating the relation $V_{\mathrm{Pur}}(U, \cdots U) = U \otimes |\eta_\theta\rangle$ with respect to $\theta$ further gives 
\begin{align}\label{eq:constraintc}
    \sum_r c_r = 1. 
\end{align}
Finally, we introduce a symmetry-enhanced optimization problem, the solution to which gives an upper bound on Eq.~\eqref{eq:BenyOreshkov}. We solve this problem under the constraints in Eq.~\eqref{eq:tensorrelation}, ~\eqref{eq:constraintKSL}, ~\eqref{eq:constraintc}, utilizing a key decomposition of the tensor product representation $\mathrm{Adj} \otimes \bar{\Box}$. The details of the proof can be found in the Appendix. 
\end{proof}
We next prove the complementary lower bound for the optimal fidelity by constructing a concrete $\mathrm{SU}(d)$-covariant parallel protocol.  
\begin{thm}[Achievability with parallel strategies]\label{thm2}
The first-order infidelity coefficient $C_{n,d}$ satisfies 
\begin{align}
C_{n,d} \leq C_{n,d}^{\star}
\end{align}
for $n=kd+1, \, k\in \mathbb{N}$, where $C_{n,d}^\star$ is defined in Eq.~\eqref{eq:Cstar}. The upper bound is saturated by an $\mathrm{SU}(d)$-covariant parallel strategy. 
\end{thm}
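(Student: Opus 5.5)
We prove Theorem~\ref{thm2} by constructing an explicit $\mathrm{SU}(d)$-covariant parallel protocol of the type sketched in Fig.~\ref{fig:purificationcircuit}. We then compute its first-order infidelity and match it to $C^\star_{n,d}$.

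\emph{Encoding.} The input state on $\mathcal{P}$ is embedded, via $J_\alpha$, into an irreducible $\mathrm{SU}(d)$-module $\mathcal{U}_\alpha\simeq\mathbb{C}^d$ inside $(\mathbb{C}^d)^{\otimes n}\otimes\mathcal{A}$. This is where the condition $n=kd+1$ enters. A Young diagram $\alpha\vdash n$ with $k$ full columns of height $d$ plus one extra box gives $\mathcal{U}_\alpha\simeq\Box$, because determinant factors are trivial on $\mathrm{SU}(d)$. Through Schur--Weyl duality, $\mathcal{U}_\alpha$ is paired with the Specht module $\mathcal{S}_\alpha$, which is maximally entangled with the environment via $|\Phi^+_{m_\alpha}\rangle$. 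More generally, we take a superposition $\sum_\alpha \eta_\alpha\,(\cdot)$ over admissible diagrams through the isometry $V_\eta$. The $\eta$ weights are the free parameters to optimize.

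\emph{Decoding.} We apply the inverse Schur transform and the Clebsch--Gordan map projecting onto the $\Box$ component. The error-free term then returns $U$ exactly, since $U^{\otimes n}$ acts as $U$ on each $\mathcal{U}_\alpha\simeq\Box$ by construction.

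\emph{First-order analysis.} To first order in $p$, the output Choi fidelity is
\[
1-p\Bigl[\tfrac{d^2-1}{d^2}n-\tfrac{1}{d^2}\textstyle\sum_{a,r}\langle\text{recovered weight of }T_a\text{ in slot }r\rangle\Bigr].
\]
We obtain it by inserting a single $T_a$ in slot $r$ and computing the overlap of the decoded channel with $\mathcal{U}$. By covariance, this reduces to the quantities $c_r$, $K_{rs}$, $S_{rs}$, $L_{rs}$ of Eq.~\eqref{eq:tensorrelation}, now evaluated for a concrete encoder. Permutation symmetry of the parallel encoder makes these uniform: $c_r=1/n$, with $K$, $S$, $L$ constant off the diagonal. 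The quantities reduce to reduced matrix elements of $T_a$ acting on one tensor factor between the modules $\mathcal{U}_\alpha\otimes\mathcal{S}_\alpha$. Equivalently, they are matrix elements of $\sum_r T_a^{(r)}$ (the Lie algebra generator) together with the Jucys--Murphy/content data of the single-box moves $\alpha\to\alpha\pm\Box$ in the adjoint decomposition $\mathrm{Adj}\otimes\bar\Box$. The achieved infidelity is then a quadratic form in $\eta$, with coefficients given by hook-content expressions.

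\emph{Optimization.} We minimize this quadratic form over $\eta$, or equivalently maximize the Rayleigh quotient of a tridiagonal-type matrix in the $\Box$ channel. The expected optimum has a square-root structure,
\[
\tfrac{1}{n}\Bigl(1+\sqrt{1+\tfrac{d^2-1}{n}}\Bigr)^{-2},
\]
which arises from a two-level effective problem. Concretely, the optimal $\eta$ splits weight between the component that directly carries $\Box$ and its adjoint-coupled partner, and the resulting $2\times2$ eigenvalue problem produces $\sqrt{1+(d^2-1)/n}$.

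\emph{Main obstacle.} The hard step is this last one: finding the encoder within the symmetric family whose first-order coefficient equals exactly $C^\star_{n,d}$, rather than merely bounding it. The concrete check is to verify that the constructed protocol's values of $K$, $S$, $L$, and $c$ saturate every inequality used in the converse proof of Theorem~\ref{thm1}. The plan is to reverse-engineer $V_\eta$ from the optimizer of the symmetry-enhanced problem there, and then confirm with the Schur-basis computation that the realized $\eta$ is a valid normalized state.
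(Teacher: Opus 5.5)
Your encoder is the right one: the balanced diagram $\alpha=(k+1,k^{d-1})$ with $\mathcal{S}_\alpha$ maximally entangled with $E$. Your value $c_r=1/n$ is also correct. The rest of the construction, however, does not work as stated.

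\textbf{There is nothing to optimize in the encoder.} For $n=kd+1$, $\alpha$ is the only $\lambda\vdash_d n$ with $\mathcal{U}_\lambda\simeq\Box$ on $\mathrm{SU}(d)$, so there are no free weights $\eta_\alpha$. The $V_\eta$ in the circuit figure is a \emph{decoder} isometry, labeled by the level-$n$ diagrams $\eta\in\{\alpha,\beta,\gamma\}$ that a single error can reach. It is not an encoding amplitude.

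\textbf{Your decoder fails.} You propose to Schur transform, keep the $\Box$ component, and read off $\mathcal{U}_\alpha$. Since $\mathcal{L}(\mathcal{U}_\alpha)\simeq\mathbf{1}\oplus\mathrm{Adj}$, Wigner--Eckart gives $\Pi_\alpha T_{a,r}\Pi_\alpha=T_a\otimes M_r$. So every single-error Kraus operator that survives the projection is proportional to the traceless $T_a$ and contributes nothing to the Choi fidelity. The first-order infidelity then grows like $n\Delta p/d^2$ instead of decaying as $1/n$.

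A working decoder must do three things:
\begin{itemize}
\item read the output Specht register coherently together with $E$;
\item separate the uncorrectable component along $|\Omega_\alpha\rangle$ (the logical error $T_a/n$ coming from $c_r=1/n$) from the component in the standard representation of $\mathfrak{S}_n$;
\item on the latter, recombine the branches $\alpha,\beta,\gamma$ coherently while peeling off the last box with $\mathrm{CG}^\dagger:\mathcal{U}_\eta\to\mathcal{U}_{\mu^-}\otimes\Box$.
\end{itemize}
This parks the error label $a$ in $\mathcal{U}_{\mu^-}\simeq\mathrm{Adj}$ and lets $|\psi\rangle$ survive. It is also where $d_\beta+d_\gamma=d(d^2-2)$ enters the fidelity.

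\textbf{Your final paragraph names the right strategy but does not carry it out.} Showing that the encoder's $c,K,S,L$ make every inequality of the converse tight, and letting Bény--Oreshkov duality supply the recovery, \emph{is} the proof. Nothing needs to be reverse-engineered or optimized. The computation is as follows.
\begin{itemize}
\item Differentiating $(U^{\otimes n}\otimes\mathbb{I}_E)E_{n,d}=E_{n,d}U$ gives $S_aE_{n,d}=E_{n,d}T_a$, where $S_a=\sum_rT_{a,r}$.
\item The identity $E_{n,d}^\dagger\mathcal{O}E_{n,d}=\frac{1}{m_\alpha}\Tr_{\mathcal{S}_\alpha}(\Pi_\alpha\mathcal{O}\Pi_\alpha)$ makes all slots, and all ordered pairs of distinct slots, equivalent.
\item Subtracting the diagonal terms from $E_{n,d}^\dagger S_aS_bE_{n,d}=T_aT_b$ forces $K_{rs}=-1/n$ and $S_{rs}=L_{rs}=0$ for $r\neq s$.
\item Hence $K'=(1+\frac1n)P_\perp$ and $S'=L'=\frac1nP_\perp$, which give $A_0=(1+\frac{\Delta}{n})P_\perp$ and $A_\pm=P_\perp$.
\item With $\hat R=P_\perp/(n-1)$, each bound $f_{\mathrm{root}}(A,\hat R)\le\sqrt{\Tr A\,\Tr\hat R}$ holds with equality.
\item $\sum_rc_r^2=1/n$ saturates Cauchy--Schwarz, so $C_{n,d}\le C^\star_{n,d}$.
\end{itemize}
In particular, $\sqrt{1+\Delta/n}=\sqrt{\Tr A_0/(n-1)}$ measures how much the $\bar\Box$ block exceeds the $R_\pm$ blocks. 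It does not come from a $2\times 2$ eigenvalue problem over encoder weights, and no hook-content optimization is involved.
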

\begin{proof}
Consider the following representation of the unitary group $U(d)$ and the symmetric group $\mathfrak{S}_n$,  
\begin{align}
    U^{\otimes n}|i_1\rangle \otimes  \cdots \otimes |i_n\rangle &= U|i_1\rangle \otimes \cdots \otimes U|i_n\rangle\\
    \pi(\sigma) |i_1\rangle \otimes  \cdots \otimes |i_n\rangle &= |i_{\sigma^{-1}(1)} \rangle \otimes \cdots \otimes |i_{\sigma^{-1}(n)} \rangle, 
\end{align}
for $U \in U(d)$ and $\sigma \in \mathfrak{S}_n$. 
The Schur-Weyl duality allows simultaneous decomposition of these representations as 
\begin{align}
    (\mathbb{C}^d)^{\otimes n} &\simeq \bigoplus_{\lambda \vdash_d n} \mathcal{U}_\lambda \otimes \mathcal{S}_\lambda,\nonumber\\
    U^{\otimes n} &\simeq \bigoplus_{\lambda \vdash_d n} f_\lambda(U)\otimes \mathbb{I}_{\mathcal{S}_\lambda},\nonumber\\
    \pi(\sigma) &\simeq \bigoplus_{\lambda \vdash_d n} \mathbb{I}_{\mathcal{U}_\lambda} \otimes g_\lambda(\sigma), 
\end{align}
where $\lambda \vdash_d n$ denotes a partition $(\lambda_1, \lambda_2, \cdots \lambda_d)$ satisfying $\lambda_1 + \lambda_ 2 + \cdots \lambda_d = n$, $\lambda_1 \geq \lambda_2 \geq  \cdots \geq \lambda_d \geq 0$, and $f_\lambda: U(d) \to \mathcal{U}_\lambda$ and $g_\lambda: \mathfrak{S}_n \to \mathcal{S}_\lambda$ represent the irreducible  representations of the unitary group $\mathrm{U}(d)$ and the symmetric group $\mathfrak{S}_n$ labeled by the Young diagram $\lambda$, respectively. Define $d_\lambda := \dim \mathcal{U}_\lambda, \, m_\lambda := \dim \mathcal{S}_\lambda $. 

For $n=kd+1$, one can choose the balanced partition $\alpha = (k+1, k, \cdots k)$. Then, $\dim \mathcal{U}_{\alpha} = d$ and
\begin{align}
    U_{\alpha}(U) = (\mathrm{det} U)^k U. 
\end{align}
Define the encoder 
\begin{align}
    E_{n,d}|\psi\rangle = \frac{1}{\sqrt{m_{\alpha}}}\sum_{j=1}^{m_{\alpha}}U_{\mathrm{Sch}}^\dagger(|j\rangle_{\mathcal{S}_{\alpha}} \otimes |\alpha\rangle \otimes |\psi\rangle_{\mathcal{U}_{\alpha}}) \otimes |j\rangle_E. 
\end{align}
This is an $\mathrm{SU}(d)$-covariant encoding: 
\begin{align}
    (U^{\otimes n} \otimes \mathbb{I}_E) E_{n,d} = E_{n,d} U, \quad \forall U \in \mathrm{SU}(d). 
\end{align}
With a suitable choice of the decoding operation in FIG.~\ref{fig:purificationcircuit}, one can show the desired upper bound on the first-order infidelity coefficient $C_{n,d}$. Note that the circuit is independent of $p$, which implies that we do not have to know the noise strength in advance. The details of the proof can be found in Appendix. 
\end{proof}
Theorems 1 and 2 tell us the exact value of $C_{n,d}$ for $n=kd+1$. For $n\in \mathbb{N}$ with $n\not\equiv 1 \mod d$, one can discard queries and choose $m=\lfloor \frac{n-1}{d}\rfloor d + 1$ to apply the $\mathrm{SU}(d)$-covariant parallel strategy in Theorem 2. Then, the same infidelity coefficient is achieved to the leading order in $n$, and the optimal fidelity $f_{n,d}(p)$ asymptotically satisfies 
\begin{align}\label{eq:limnC}
    \lim_{n\to \infty} nC_{n,d}  &= \lim_{n\to \infty} n\qty[\lim_{p\to +0}\frac{1-f_{n,d}(p)}{p}] \nonumber\\
    &= \frac{(d^2-1)(d^2+2)}{4d^2}.
\end{align}
Thus, adaptivity does not help in the asymptotically small-$p$ and large-$n$ regime. Note that substituting $d=2$ into Eq.~\eqref{eq:limnC} gives $\frac{9}{8}$, which matches the fidelity lower bound obtained in Ref.~\cite{niwa2026scalingoptimalpurificationnoisyqubit}. Our work not only determines the exact leading-order coefficient that was left open for $d=2$, but also applies to any dimension $d$. 

\textit{Comparison with storage-and-retrieval.---}
Let us compare our result with the strategy combining optimal state purification~\cite{li2025optimalquantumpurityamplification} with storage-and-retrieval of channels: One uses $n$ copies of noisy Choi states to produce a single purified Choi state, and subsequently uses $m$ copies of the purified Choi state to implement port-based teleportation (PBT)~\cite{Strelchuk_2023}. In this scenario, one needs $\frac{p}{n} + \frac{d^2}{m} = \epsilon$ to achieve infidelity $\epsilon$. Thus it suffices to choose $n = O(p/\epsilon)$ and $m=O(d^2/\epsilon)$ so we obtain the query upper bound $nm = O(d^2p/\epsilon^2)$. On the other hand, our protocol has query complexity $O(d^2p/\epsilon)$ for the leading-order infidelity $\epsilon$ in the low-noise regime, which indicates a better $\epsilon$ dependence.   

\textit{Asymptotically optimal noisy unitary conjugation.---}
Analogous to noisy unitary purification, one can define the problem of noisy unitary conjugation. This time, we are similarly given access to the noisy unitary channel $\mathcal{N}_{\mathcal{U},p} = \mathcal{D}_p\circ \mathcal{U}$, but aim to construct a superchannel $\Xi'$ that outputs the best possible approximation to the complex conjugate $\Xi'(\mathcal{N}_{\mathcal{U},p}^{\otimes n}) \simeq \mathcal{U}^*$. By an analogous argument, one can define the performance operator $\Omega_{n,p}'$ by
\begin{align}
    \Omega'_{n,p} = \frac{1}{d^2} \int \mathrm{d} U\, \mathcal{J}_{\mathcal{U}} \otimes (\mathcal{J}_{\mathcal{D}_p \circ \mathcal{U}})^{\otimes n} 
\end{align}
and formulate the universal noisy unitary conjugation problem as an SDP:
\begin{align}\label{eq:SDP}
    &\textrm{max} \Tr([\Omega_{n,p}']^T \mathcal{J}_\Xi')\nonumber\\
    &\textrm{subject to ``$\Xi'$ is a quantum comb"}. 
\end{align}
We use $g_{n,d}(p)$ to denote the optimal value of the above SDP, and define the leading-order coefficient 
\begin{align}
    C_{n,d}':= \lim_{p \to +0} \frac{1-g_{n,d}(p)}{p}
\end{align}
for $n\geq d-1$. Here, we restrict to $n\geq d-1$ because this ensures that $g_{n,d}(0) =1$~\cite{PhysRevResearch.1.013007}. One can then show the following theorem:

\begin{thm}[Asymptotically optimal fidelity for noisy unitary conjugation]\label{thm4}
The leading-order coefficient $C'_{n,d}$ satisfies 
\begin{align}
C'_{n,d} \geq  \frac{1}{d^2}\qty[n(d^2-1)-\frac{\mathfrak{B}_{n,d}^2}{d^2-1}], 
\end{align}
where 
\begin{align}
\mathfrak{B}_{n,d} = \sqrt{n+2-\frac{d^2-1}{n}} &+\frac{(d-1)(d+2)}{2}\sqrt{n-d+1}\nonumber\\
&+ \frac{(d-2)(d+1)}{2}\sqrt{n+d+1}
\end{align}
for all $n\geq d-1, \, n\in \mathbb{N}$. If $n=kd-1, k\in \mathbb{N}$, there is an $\mathrm{SU}(d)$-covariant parallel strategy that saturates the upper bound. Consequently, the optimal fidelity of the noisy unitary conjugation problem asymptotically satisfies: 
\begin{align}
    \lim_{n\to \infty} nC'_{n,d} &= \lim_{n\to \infty} n\qty[\lim_{p\to +0}\frac{1-g_{n,d}(p)}{p}] \nonumber\\
    &= \frac{(d^2-1)(d^2+2)}{4d^2}. 
\end{align}
\end{thm}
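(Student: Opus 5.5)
The plan is to follow the two-part structure used for Theorems~\ref{thm1} and~\ref{thm2}, with the target $\mathcal{U}$ replaced by $\mathcal{U}^*$. Three steps are needed: a converse (the lower bound on $C'_{n,d}$), an achievability construction for $n=kd-1$, and an asymptotic expansion.

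\emph{Converse.} Start from a twirled comb $\Xi'$. The symmetry of $\Omega'_{n,p}$ is now $[A_P\otimes A^{\otimes n}_{\mathbf I}\otimes B_F\otimes B^{\otimes n}_{\mathbf O},\Omega'_{n,p}]=0$, so the dilated comb $\Xi'_{\mathrm{Pur}}$ satisfies $V'_{\mathrm{Pur}}(U,\dots,U)=U^*\otimes|\eta_U\rangle$.

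\begin{enumerate}
\item Apply the Bény--Oreshkov bound exactly as in Eq.~\eqref{eq:BenyOreshkov}. This reduces the problem to comparing the complementary channel of $\Xi'_{\mathrm{Pur}}(\mathcal{N}_{\mathrm{id},p}^{\otimes n})$ with a constant channel.
\item Expand to first order in $p$, keeping only the no-error and single-error branches. This introduces $V_0$ and $V_{a,r}$.
\item The covariance now gives $U^*[V_0^\dagger V_{a,r}]U^{*\dagger}=\sum_b R_{ba}(U)V_0^\dagger V_{b,r}$. Since $U^*T_a^{*}U^{T}=\sum_b R_{ba}(U)T_b^*$, the Wigner--Eckart theorem gives $V_0^\dagger V_{a,r}=c_rT_a^*$. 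Likewise, $V_{a,r}^\dagger V_{b,s}$ decomposes with coefficient matrices $K,S,L$ in the $T^*$ basis.
\item Differentiating the relation in the first bullet along $U=e^{iT_a\theta}$ gives $\sum_r c_r=-1$, because $\partial_\theta e^{-iT_a^*\theta}=-iT_a^*$. This sign flip is the essential difference from Eq.~\eqref{eq:constraintc}.
\item Set up the same symmetry-enhanced optimization as in the proof of Theorem~\ref{thm1}. Decompose the relevant vectors into irreducible components of $\mathrm{Adj}\otimes\bar\Box$ (respectively $\mathrm{Adj}\otimes\Box$), whose components have dimensions $d$, $\frac{d(d-1)(d+2)}{2}$ and $\frac{d(d+1)(d-2)}{2}$.
\end{enumerate}

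The three coefficients in $\mathfrak{B}_{n,d}$ suggest the resulting optimization is a sum over these irreps. It should reduce to maximizing a linear functional of the form $\sum_\mu w_\mu\sqrt{x_\mu}$ subject to Gram-positivity and the normalization $K_{rr}=1$, and the optimum should be of the form $\frac{1}{d^2}\left[n(d^2-1)-\mathfrak{B}_{n,d}^2/(d^2-1)\right]$. I would solve it by Cauchy--Schwarz/Lagrange multipliers over the vector $(c_r)$ together with the irrep weights. The square roots $\sqrt{n\pm d+1}$ should come from the eigenvalues $n\mp(\text{shift})$ of Casimir-type operators $\sum_r(\cdot)$ restricted to each irrep. \textbf{This is the step I expect to be the main obstacle}: determining exactly which Casimir shifts appear (namely $-d+1$, $d+1$, and the $2-\frac{d^2-1}{n}$ term) and verifying that the constraint set is tight enough. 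My first attempt will be to mimic the appendix computation for Theorem~\ref{thm1}, with $\bar\Box$ and $\Box$ exchanged.

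\emph{Achievability.} For $n=kd-1$, use the partition $\alpha=(k,\dots,k,k-1)$, for which $\mathcal{U}_\alpha\simeq\bar\Box\otimes\det^k$. Hence $f_\alpha(U)\simeq(\det U)^kU^*$, and the same encoder/decoder template as in Theorem~\ref{thm2} produces $\mathcal{U}^*$ exactly at $p=0$ (consistent with $d-1$ queries when $k=1$). Replace the maximally entangled state on the Specht module and the isometry $V_\eta$ by optimized choices, namely weightings diagonal in the Clebsch--Gordan decomposition of $\Box\otimes\mathcal{U}_\alpha$. Then compute the first-order infidelity, which equals $\frac{1}{d^2}\sum_{r,a}\|(\text{error branch})\|^2$ minus the recovered part. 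Choosing the weights proportional to the optimizers of the converse (the $\sqrt{n\pm d+1}$ factors) should saturate the bound.

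\emph{Asymptotics.} For other $n$, discard queries down to $m=\lfloor (n+1)/d\rfloor d-1$, which preserves the leading order in $n$. Expanding to order $1/n$ gives $\mathfrak{B}_{n,d}=\frac{d^2-1}{2}\left(2\sqrt n\right)+\frac{\beta}{\sqrt n}+O(n^{-3/2})$, since $1+\frac{(d-1)(d+2)}{2}+\frac{(d-2)(d+1)}{2}=d^2-1$. Then $\mathfrak{B}_{n,d}^2/(d^2-1)=n(d^2-1)+2\beta+O(1/n)$, so the order-$n$ terms cancel; I would check that they do. The limit $nC'_{n,d}$ then comes from the $O(1)$ and $O(1/n)$ corrections and should equal $\frac{(d^2-1)(d^2+2)}{4d^2}$.
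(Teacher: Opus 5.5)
Your converse follows the paper's route: twirl, dilation, Bény--Oreshkov, truncation to at most one error, Wigner--Eckart, decomposition of $\mathrm{Adj}\otimes\Box$, and a twirled $\sigma_E$. The step you flag as the obstacle needs no Lagrange multipliers, but two points must be fixed along the way.

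\emph{Noiseless face.} The relation $V_{\mathrm{Pur}}(U,\dots,U)=U^*\otimes|\eta_U\rangle$ does not follow from the symmetry of $\Omega'_{n,p}$. It holds only after restricting to the noiseless face $\Xi'(\mathcal U^{\otimes n})=\mathcal U^*$. That restriction is allowed by the Danskin-type argument precisely because $g_{n,d}(0)=1$ for $n\ge d-1$. Without it you do not get $\sum_r c_r=-1$.

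\emph{Sign of $L'$.} Since $T_a^*T_b^*=\delta_{ab}\mathbb I+(d_{abc}-if_{abc})T_c^*$, the recentred coefficients are $K'=K-cc^T$ and $S'=S-cc^T$, but $L'=L+cc^T$. Also, $\rho_{ER}$ carries $-(L')^T\otimes X_f$. Hence $A_0=K'+(\frac{d^2}{2}-2)S'-\frac{d^2}{2}L'$, $A_+=K'+(\frac d2-1)S'+\frac d2L'$, and $A_-=K'-(\frac d2+1)S'-\frac d2L'$.

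With $K_{rr}=1$, $S_{rr}=L_{rr}=c_r$, $\sum_rc_r=-1$ and $s=\sum_rc_r^2$, you get $\Tr K'=n-s$, $\Tr S'=-1-s$, $\Tr L'=-1+s$. Hence $\Tr A_0=n+2-\Delta s$ and $\Tr A_\pm=n\mp d+1$, the latter the same for every strategy. The bound then follows from $f_{\mathrm{root}}(A_\mu^T,\hat R)\le\sqrt{\Tr A_\mu}$ and $s\ge 1/n$ alone.

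In the asymptotics, the order-$n$ cancellation is automatic; the one that needs checking is the $O(1)$ term. Here $\beta=1-\frac{(d-1)^2(d+2)}{4}+\frac{(d-2)(d+1)^2}{4}=0$; otherwise $nC'_{n,d}$ would diverge. The limit is then carried by the $n^{-3/2}$ coefficient $-\frac{(d^2-1)(d^2+2)}{8}$ of $\mathfrak B_{n,d}$, which your truncation discards.

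The genuine gap is the achievability. You never say why an encoder saturates the converse, and the modification you propose works against the mechanism. Saturation needs two things:
\begin{itemize}
\item[(i)] $s=1/n$, i.e.\ $c_r=-1/n$ for every $r$;
\item[(ii)] a single $\hat R$ with $f_{\mathrm{root}}(A_\mu^T,\hat R)=\sqrt{\Tr A_\mu}$ for all three $\mu$ at once, which holds when $A_0,A_\pm$ are multiples of a common projector.
\end{itemize}
The paper gets both from the \emph{unmodified} maximally entangled state between $\mathcal S_\alpha$ and $E$. That choice gives $V_{n,d}^\dagger\mathcal O V_{n,d}=\frac{1}{m_\alpha}\Tr_{\mathcal S_\alpha}(\Pi_\alpha\mathcal O\Pi_\alpha)$, which is invariant under permuting the slots. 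So $c_r=-1/n$, and $K_{rs},S_{rs},L_{rs}$ depend only on whether $r=s$. Exact covariance $S_aV_{n,d}=-V_{n,d}T_a^*$ then gives $V_{n,d}^\dagger S_aS_bV_{n,d}=T_a^*T_b^*$. From this, $K_{rs}=-\frac1n$, $S_{rs}=\frac{2}{n(n-1)}$ and $L_{rs}=0$ for $r\ne s$. Thus $K',S',L'$ are all multiples of $P_\perp=I-|u\rangle\langle u|$, and $\hat R=P_\perp/(n-1)$ saturates every inequality. Bény--Oreshkov, or the explicit decoder, then supplies the recovery.

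A non-maximally entangled state on $\mathcal S_\alpha\otimes E$ generically breaks this permutation invariance and makes $c_r$ non-uniform. Nor are there weights ``proportional to $\sqrt{n\pm d+1}$'' to tune: these numbers are $\sqrt{\Tr A_\pm}$, the same for every strategy. In the explicit circuit the branch amplitudes $\sqrt{d_\eta g_\eta/m_\alpha}$ are dictated by the encoder. The decoder only aligns each branch through partial isometries normalized by the Gram eigenvalues $g_\eta$.
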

\noindent See appendix for the details of the proof. Theorem 3 implies the query complexity  $\Theta(d^2p/\epsilon)$ for achieving leading-order infidelity $\epsilon$ in the low-noise regime, with the leading-order constant exactly coinciding with that of noisy unitary purification. This is in contrast to the noiseless case, where a single use of $\mathcal{U}$ suffices to achieve $\mathcal{U}$, while $d-1$ queries to $\mathcal{U}$ suffice to obtain a single copy of $\mathcal{U}^*$. We note that the encoding used in the optimal parallel strategy resembles the noiseless case~\cite{PhysRevResearch.1.013007, yoshida2026optimal}, but uses an auxiliary environment space that can later be fed into the decoding operation. 

\textit{Relation to covariant QECC.---} In FTQC, transversal implementations of logical gates are desirable because they prevent errors from spreading within code blocks. Unfortunately, Eastin-Knill theorem~\cite{Eastin_2009} rules out the possibility of universal and exact implementation of logical transversal gates. Even if one relaxes the goal to \textit{approximate} error-correction, several upper bounds~\cite{Faist_2020, Zhou_2021, Kubica_2021} on the achievable fidelity are known to hold.
There, one typically considers the implementation of $U\in \mathrm{U}(d)$ in a transversal way given by $\rho_1(U) \otimes \cdots \otimes \rho_n(U)$ using unitary representations $\rho_i$ for $i\in \{1, \cdots, n\}$.
On the other hand, we restrict to $\rho_i(U) = U$ (noisy unitary purification) or $\rho_i(U) = U^*$ (noisy unitary conjugation) for all $i$, but allow arbitrary sequential protocols including entanglement assistance.
Therefore, it is also possible to interpret our result as an approximate Eastin-Knill-type theorem that prevents the implementation of certain types of transversal gates in sequential protocols.

\textit{Conclusion.---}
In this work, we considered the problem of purifying unknown unitary channels under depolarizing noise. We formulated the problem as an SDP and derived the optimal fidelity to the leading order in the noise strength for sequential protocols while also providing a noise-independent explicit $\mathrm{SU}(d)$-covariant parallel protocol that saturates the optimum. Thus, adaptivity does not help for channel purification in the low-noise and large-$n$ limit. We also considered the dual problem of noisy unitary complex conjugation under depolarizing noise and showed that the leading order coefficient converges to the same value as noisy unitary purification in the large-$n$ limit. Although we focused on deterministic protocols producing a single purified channel, it would be interesting to consider probabilistic protocols or $m$-input-to-$n$-output transformations as well. 

\textit{AI disclosure.---}
GPT-5.6 Sol Ultra on Codex supplied the proof ideas for the main results. The authors later independently reconstructed the proofs and supplied additional arguments to strengthen the results. The authors take full responsibility for the content. All texts and figures were produced by the authors.

\textit{Acknowledgments.---} 
We thank Koki Ono, Ryuji Takagi, Takeru Utsumi, Yuxiang Yang, and  Zhaoyi Li for collaboration on a related project. We also thank Debbie Leung, Isaac Chuang, and Theerapat Tansuwannont for discussions. This work was supported by MEXT Quantum Leap Flagship Program (MEXTQLEAP) JPMXS0118069605 and JPMXS0120351339; Japan Science and Technology Agency (JST) as part of Adopting Sustainable Partnerships for Innovative Research Ecosystem (ASPIRE), Grant Number JPMJAP25A3; JSTCREST, Grant Number JPMJCR25I5; JSTNEXUS, Grant Number JPMJNX26C9; JSPS KAKENHI Grant No.23K21643 and 26K25550; and IBM Quantum.

\let\oldaddcontentsline\addcontentsline
\renewcommand{\addcontentsline}[3]{}

\bibliography{main}
\let\addcontentsline\oldaddcontentsline

\clearpage
\newgeometry{hmargin=1.2in,vmargin=0.8in}
\widetext
\appendix
\onecolumngrid

\begin{center}
{\large \bf Appendices}
\end{center}

\tableofcontents
\renewcommand{\proofname}{Proof}
\setcounter{thm}{0}
\renewcommand{\thethm}{S.\arabic{thm}}
\renewcommand{\thelm}{S.\arabic{lm}}
\renewcommand{\thecor}{S.\arabic{cor}}
\setcounter{figure}{0}
\renewcommand{\thefigure}{S.\arabic{figure}}

\section{Preliminaries}\label{ap:preliminaries}
\subsection{Choi--Jamio\l{}kowski isomorphism}
The Choi--Jamio\l{}kowski isomorphism~\cite{CHOI1975285, JAMIOLKOWSKI1972275} is a convenient way to represent quantum channels and superchannels. Given a quantum channel $\Phi: \mathcal{L}(\mathcal{I}) \to \mathcal{L}(\mathcal{O})$, its Choi matrix is defined by 
\begin{align}
  J_\Phi:=\sum_{i,j=1}^{d_{\mathcal{I}}} \ketbra{i}{j}_{\mathcal{I}} \otimes \Phi(\ketbra{i}{j})_{\mathcal{O}} \in \mathcal{L}(\mathcal{I} \otimes \mathcal{O}), 
\end{align}
satisfying the condition $J_\Phi \geq 0, \Tr_\mathcal{O}(J_\Phi) = \mathbb{I}_\mathcal{I}$. Under the Choi--Jamio\l{}kowski isomorphism, the composition of quantum channels $\Phi_1: \mathcal{L}(\mathcal{X}) \to \mathcal{L}(\mathcal{Y})$ and $\Phi_2: \mathcal{L}(\mathcal{Y}) \to \mathcal{L}(\mathcal{Z})$ is represented by the link product~\cite{chiribella2008quantum} defined by
\begin{align}
  J_{\Phi_2 \circ \Phi_1} = J_{\Phi_2} \star J_{\Phi_1} \coloneqq \Tr_{\mathcal{Y}}\left[
    (\mathbb{I}_{\mathcal{X}} \otimes J_{\Phi_2}) (J_{\Phi_1}^{\mathsf{T}_{\mathcal{Y}}} \otimes \mathbb{I}_{\mathcal{Z}})\right],
\end{align}
where $\star$ denotes the link product and $(\cdot)^{\mathsf{T}_{\mathcal{Y}}}$ denotes the partial transpose with respect to the subsystem $\mathcal{Y}$. Analogously, the Choi matrix of a sequential superchannel $\Xi$ is defined by
\begin{align}
    \mathcal{J}_\Xi  \coloneqq \mathcal{J}_{\Lambda_{n}} \star \mathcal{J}_{\Lambda_{n-1}} \star \cdots \mathcal{J}_{\Lambda_0}, 
\end{align}
where $\Lambda_i$ represent quantum channels appearing in the decomposition, Eq.~\eqref{eq:Seqdecomp}. The Choi matrix of the output channel $\Phi_{\textrm{out}} = \Xi(\Phi_1, \Phi_2 \cdots \Phi_n)$ is then given by 
\begin{align}
    \mathcal{J}_{\Phi_\textrm{out}} = \mathcal{J}_\Xi \star \qty[\bigotimes_{i=1}^n \mathcal{J}_{\Phi_i}]. 
\end{align}
The Choi matrix of a sequential superchannel $\Xi$ is fully characterized by the conditions~\cite{Chiribella_2008,chiribella2008quantum,chiribella2009theoretical}
\begin{align}\label{eq:seqChoi}
\begin{dcases}
\mathcal{J}_\Xi \geq 0 \\
\Tr_{I_k, O_k, \cdots I_{n+1}} (\mathcal{J}_\Xi) = \Tr_{O_{k-1}, I_k, O_k, \cdots I_{n+1}} (\mathcal{J}_\Xi)  \otimes \frac{\mathbb{I}_{O_{k-1}}}{d_{O_{k-1}}}\\
\Tr(\mathcal{J}_\Xi) = d_Pd_\mathbf{O}. 
\end{dcases}
\end{align}
for all $k\in[n+1]$ where $\mathbf{O}\coloneqq \bigotimes_{i=1}^{n} O_i$ and $d_X$ represents the dimension of a Hilbert space $X$. For parallel superchannels, this condition reduces to: 
\begin{align}\label{eq:parChoi}
\begin{dcases}
\mathcal{J}_\Xi \geq 0 \\
\tr_F(\mathcal{J}_\Xi) = \tr_{\mathbf{O}F}(\mathcal{J}_\Xi) \otimes \frac{\mathbb{I}_\mathbf{O}}{d_{\mathbf{O}}} \\
\tr_{\mathbf{I}\mathbf{O}F}(\mathcal{J}_\Xi) = \tr_{P\mathbf{I}\mathbf{O}F}(\mathcal{J}_\Xi)\otimes \frac{\mathbb{I}_P}{d_{P}} \\
\Tr(\mathcal{J}_\Xi) = d_Pd_{\mathbf{O}}, 
\end{dcases}
\end{align}
where $\mathbf{I}\coloneqq \bigotimes_{i=1}^{n} I_i$.

\subsection{Schur-Weyl duality}
Consider the following representation of the unitary group $U(d)$ and the symmetric group $\mathfrak{S}_n$,  
\begin{align}
    U^{\otimes n}|i_1\rangle \otimes  \cdots \otimes |i_n\rangle &= U|i_1\rangle \otimes \cdots \otimes U|i_n\rangle\\ \label{eq:perm}
    \pi(\sigma) |i_1\rangle \otimes  \cdots \otimes |i_n\rangle &= |i_{\sigma^{-1}(1)} \rangle \otimes \cdots \otimes |i_{\sigma^{-1}(n)} \rangle, 
\end{align}
for $U \in U(d)$ and $\sigma \in \mathfrak{S}_n$. The Schur-Weyl duality~\cite{ceccherini2010representation} is the following decomposition of the Hilbert space: 
\begin{align}
    (\mathbb{C}^{d})^{\otimes n} &\simeq \bigoplus_{\lambda \vdash_d n} \mathcal{U}_\lambda \otimes \mathcal{S}_\lambda \\
    U^{\otimes n} &\simeq \bigoplus_{\lambda \vdash_d n}  f_\lambda(U) \otimes \mathbb{I}_{\mathcal{S}_\lambda}\\
    \pi(\sigma) &\simeq \bigoplus_{\lambda \vdash_d n}  \mathbb{I}_{\mathcal{U}_\lambda} \otimes g_\lambda(\sigma),
\end{align}
where $\lambda \vdash_d n$ denotes a partition $(\lambda_1, \lambda_2, \cdots \lambda_d)$ satisfying
$\lambda_1+ \lambda_2+ \cdots \lambda_d=n$, $\lambda_1 \geq \lambda_2 \geq \cdots \geq \lambda_d\geq 0$, while $f_\lambda: U(d) \to \mathcal{L}(\mathcal{U}_\lambda)$, $g_\lambda: \mathfrak{S}_n \to \mathcal{L}(\mathcal{S}_\lambda)$ denote the irreducible representation of $U(d)$ and $\mathfrak{S}_n$ labeled by the partition $\lambda$, respectively. We use $\ell(\lambda)$ to denote the height of the Young diagram $\lambda$, the number of nonzero entries in the partition. We also define the dimensions $d_\lambda := \dim \mathcal{U}_\lambda, m_\lambda:= \dim \mathcal{S}_\lambda$ and employ the Gelfand-Tsetlin basis $\{|u\rangle\}_{u\in [d_\lambda]}$~\cite{goodman2009symmetry} as an orthonormal basis for $\mathcal{U}_\lambda$ and the Young-Yamanouchi basis $\{|i\rangle\}_{i\in [m_\lambda]}$~\cite{james2006representation} for an orthonormal basis for $\mathcal{S}_\lambda$. We call the basis $\{|\lambda, u, i \rangle\}_{\lambda \vdash_d n}$ the Schur basis. The unitary transformation taking the computational basis to the Schur basis $U_{\textrm{Sch}}^{(n,d)}$ is called the quantum Schur transform. The Young-Yamanouchi basis is associated to the standard Young tableau (SYT). The matrix unit, defined by 
\begin{align}
    E^\lambda_{ij}:= \mathbb{I}_{d_\lambda} \otimes |i\rangle \langle j|, 
\end{align}
enjoys the following convenient properties~\cite{9815215} 
\begin{align}\label{eq:YYformula}
\begin{dcases}
    E^{\alpha}_{ab} \otimes \mathbb{I}_d = \sum_{\mu \in \alpha + \Box} E^{\mu}_{a_\mu b_\mu}\\
    \Tr_{n+1} E_{ij}^\mu  = \delta_{\alpha \beta} \frac{d_\mu}{d_\alpha} E^{\alpha}_{i', j'}
\end{dcases}
\end{align}
where $\alpha+\Box$ is the set of Young diagrams obtained by adding a box to $\alpha$, $\alpha_\mu, b_\mu$ is the SYT obtained by adding the box $n+1$ to the SYT $a,b$, and $i',j'$ are SYT with frame $\alpha, \beta$ that can be obtained from SYT $i,j$ by removing the box $n+1$. In the second equation, the partial trace vanishes if the two tableaux have different shapes after removing $n+1$. 
\subsection{Mixed Schur-Weyl duality}
Mixed Schur-Weyl duality~\cite{koike1989decomposition,benkart1994tensor,grinko2025mixed} is a representation-theoretic machinery for handling the mixed tensor representation $U^{\otimes n}\otimes (U^*)^{\otimes m}$ of the unitary group $\mathrm{U}(d)$. To introduce this tool, we first review the walled-Brauer algebra $\mathcal{B}_{n,m}^\delta$. 

Let $n,m>0$ be integers and let $\delta\in \mathbb{C}$. Walled-Brauer algebra consists of complex combinations of walled-Brauer diagrams, which is a diagram of two rows with $n+m$ nodes separated by a vertical ``wall" that separates $n$ nodes and $m$ nodes. The nodes are paired together under the following restriction: if both nodes are in the same row, they must be on different sides of the wall, while if they are in different rows, they must be on the same side of the wall. In plain language, a walled-Brauer diagram is a SWAP diagram that is partially transposed on the last $m$ nodes (See FIG~\ref{fig:WBdiagram}). To multiply two walled-Brauer diagrams, one simply concatenates the diagrams vertically. Whenever a loop forms under the concatenation, we remove it and multiply the diagram by $\delta$. 
\begin{figure*}[h]
  \centering
  \includegraphics[width=0.2\linewidth]{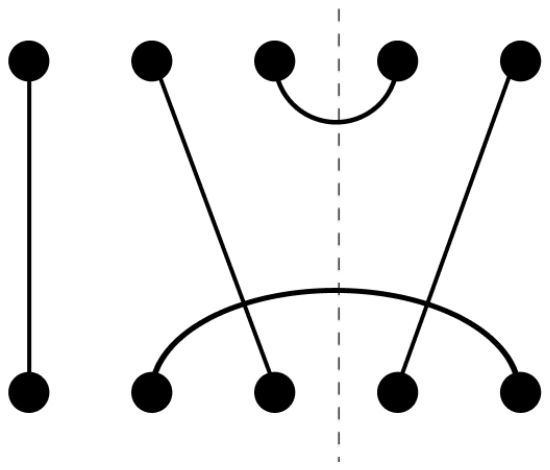}
  \caption{A walled-Brauer diagram $\sigma \in \mathcal{B}_{3,2}^\delta$, which is a ``partially transposed" SWAP diagram.}
  \label{fig:WBdiagram}
\end{figure*}

One can consider the following matrix representation of the walled-Brauer algebra: 
\begin{align}
    \pi_{n,m}^d(\sigma) (|i_1\rangle |i_2\rangle \cdots |i_{n+m}\rangle) = \sum_{j_1, \cdots  j_{n+m}=1}^d \sigma_{j_1, \cdots j_{n+m}}^{i_1, \cdots i_{n+m}} (|j_1\rangle |j_2\rangle \cdots |j_{n+m}\rangle), 
\end{align}
where $\sigma_{j_1, \cdots j_{n+m}}^{i_1, \cdots i_{n+m}} = \prod_{(t,b)\in \sigma} \delta_{i_t, j_b}$. This is a natural generalization of the unitary representation of symmetric group in Eq.~\eqref{eq:perm}. The algebra generated by this representation is called the partially transposed permutation matrix algebra, $\mathcal{A}_{n,m}^d$.

The mixed Schur-Weyl duality is the statement that the mixed tensor representation $U^{\otimes n}\otimes (U^*)^{\otimes m}$ and the partially transposed permutation matrix $\pi_{n,m}^d(\sigma)$ can be simultaneously block-diagonalized in the following way: 
\begin{align}
    (\mathbb{C}^d)^{\otimes n}\otimes (\bar{\mathbb{C}}^d)^{\otimes m} &\simeq \bigoplus_{\gamma:\mathrm{irrep}} \mathcal{U}_\gamma\otimes S_{\gamma},\\
    U^{\otimes n}\otimes (U^*)^{\otimes m} &= \bigoplus_{\gamma:\mathrm{irrep}} f_\gamma(U) \otimes \mathbb{I}_{S_\gamma}, \quad U \in \mathrm{U}(d),\\
    \pi_{n,m}^d(\sigma) &= \bigoplus_{\gamma:\mathrm{irrep}} \mathbb{I}_{\mathcal{U}_\gamma} \otimes g_\gamma(\sigma) ,\quad \sigma \in \mathcal{B}_{n,m}^d.
\end{align}
Here, the irreps $\gamma$ are labeled by a pair of Young diagrams, and $\mathcal{U}_\gamma, \mathcal{S}_\gamma$ are called the Weyl module and the Specht module, respectively. To obtain the concrete form of the irrep labels, one first constructs the \textit{Bratteli diagram} of the walled-Brauer algebra $\mathcal{B}_{n,m}^d$ according to the following rule~\cite{bulgakova2020fusion}:
\begin{itemize}
    \item Start from a pair of null diagrams $(\phi, \phi)$. 
    \item To the left of the wall (the first $n$ levels), add a single box to the left diagram. 
    \item To the right of the wall (the last $m$ levels), remove a single box from the left diagram, or add a single box to the right diagram. 
\end{itemize}
\begin{figure*}[t]
  \centering
  \includegraphics[width=0.75\linewidth]{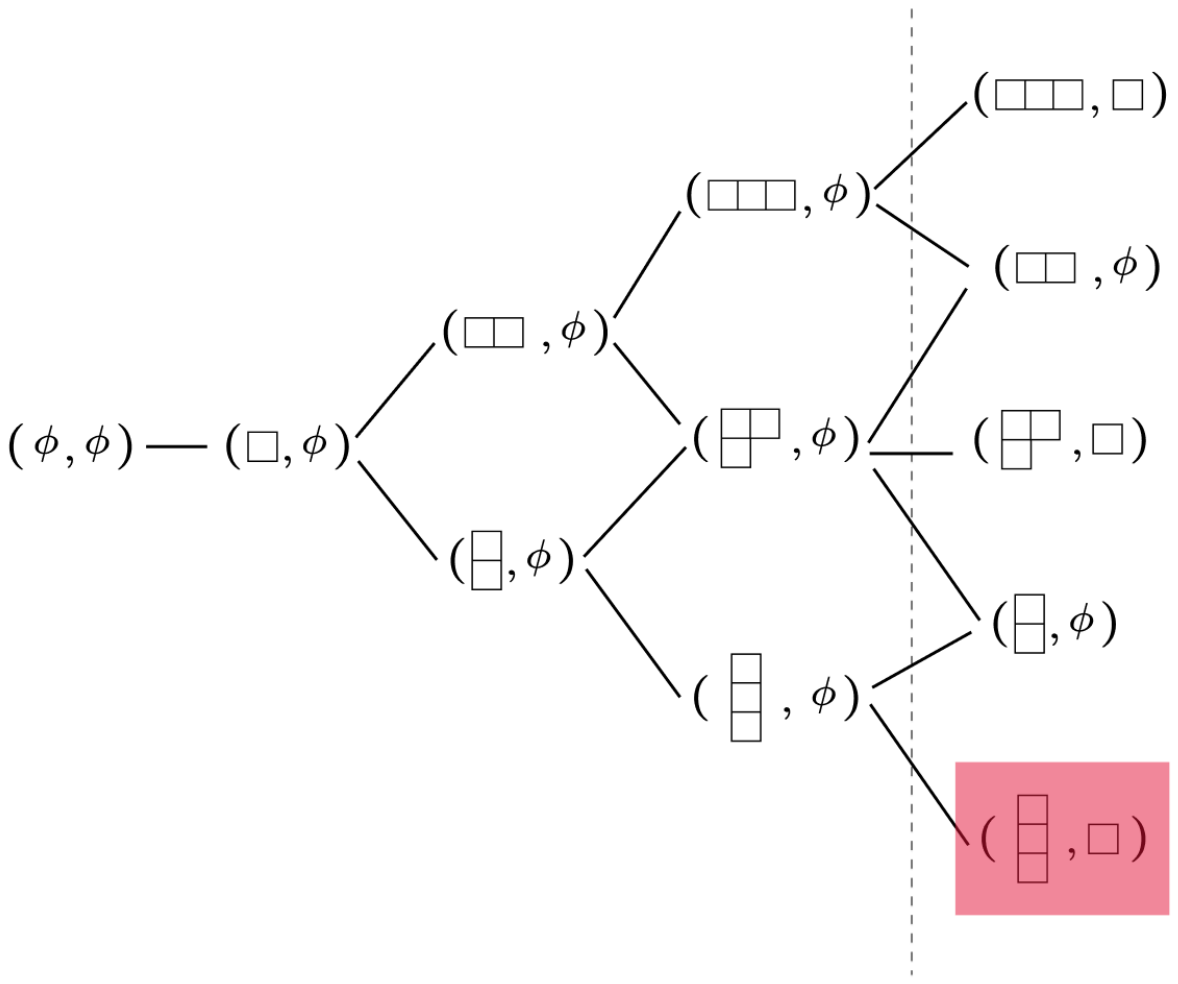}
  \caption{The Bratteli diagram for $\mathcal{A}_{3,1}^d$. The dashed line represents the ``wall". Before crossing the wall, a box is added to the left diagram. After crossing the wall, a box is either removed from the left diagram, or added to the right diagram. For $d=3$, the edge with the red shade is removed because $\ell(\lambda) + \ell(\mu) > d$.}
\end{figure*}
After this procedure, one removes the edges $(\lambda, \mu)$ from the Bratteli diagram if $\ell(\lambda) + \ell(\mu) > d$. This is the Bratteli diagram of the partially transposed permutation matrix algebra $\mathcal{A}_{n,m}^d$, and the resulting vertices on the final level represent the irreps $\gamma$ that appear in the mixed Schur-Weyl duality. We use the notation $\ell(\gamma):= \ell(\lambda) + \ell(\mu)$ for $\gamma=(\lambda, \mu)$. The basis of $S_\gamma$ can be identified with a path in the Bratteli diagram of $\mathcal{A}_{n,m}^d$. This construction includes the Young-Yamanouchi basis as a special case, $m=0$: A standard Young tableau (SYT) corresponds to a path in the Bratteli diagram with no ``walls". 

The basis $\{|u_\gamma\rangle |\gamma\rangle |s_\gamma\rangle\}$ is called the mixed Schur basis, and the unitary transformation taking the computational basis to the mixed Schur basis is called the mixed Schur transform~\cite{grinko2024efficientquantumcircuitsportbased, nguyen2023mixedschurtransformefficient, fei2023efficientquantumalgorithmportbased}, which we denote by $U_{\mathrm{mSch}}^{(n,m)}$. 

\subsection{Clebsch-Gordan transforms}
Let $\rho_\lambda, \rho_\mu$ be irreducible representations of the unitary group $\mathrm{U}(d)$, with $\lambda \vdash_d n, \mu \vdash_d m$, and let $N=n+m$. Although $\rho_\lambda, \rho_\mu$ are both irreducible, its tensor-product representation $\rho_\lambda \otimes \rho_\mu$ is generally reducible: 
\begin{align}\label{eq:CGdecomposition}
    \mathcal{U}_\lambda \otimes \mathcal{U}_\mu \simeq \bigoplus_{\nu \vdash_d{N}} \mathcal{U}_\nu \otimes \mathbb{C}^{c_{\lambda \mu}^\nu}, 
\end{align}
where $c_{\lambda \mu}^\nu$ is the Littlewood-Richardson coefficient. Generally, the Clebsch-Gordan transform is the basis transformation that establishes the above isomorphism. In this paper, we are specifically interested in the case where one of the irreps in Eq.~\eqref{eq:CGdecomposition} is the fundamental representation $\Box$: 
\begin{align}
    \mathcal{U}_\lambda \otimes \mathcal{U}_\Box \simeq \bigoplus_{\substack{\nu \vdash_d{N}\\
    \nu = \lambda + \Box}} \mathcal{U}_\nu. 
\end{align}
In this case, the decomposition is multiplicity-free due to the Pieri rule~\cite{macdonald1995symmetric}. We use $\mathrm{CG}$ to denote the Clebsch-Gordan transform, which is the basis transformation $|u_\lambda\rangle \otimes |u_\Box\rangle \to |\nu\rangle |u_\nu\rangle$~\cite{bacon2005quantumschurtransformi,harrow2005applications}. Similarly, the tensor product of Weyl modules in the mixed Schur-Weyl duality decomposes into irreducible representations. When one of the irreps is the anti-fundamental representation $\bar{\Box}$, 
\begin{align}
    \mathcal{U}_\gamma \otimes \mathcal{U}_{(\phi, \Box)} \simeq \bigoplus_{\substack{\nu \in \gamma-\Box\\
    \ell(\nu) \leq d}} \mathcal{U}_\nu. 
\end{align}
The dual Clebsch-Gordan transform is the basis transformation $|u_\gamma\rangle \otimes |u_{\bar{\Box}}\rangle \to |\nu\rangle |u_\nu\rangle$, which we denote by $\mathrm{dCG}$~\cite{grinko2024efficientquantumcircuitsportbased, nguyen2023mixedschurtransformefficient, fei2023efficientquantumalgorithmportbased}.

\subsection{Quadratic Casimir operator}
Let $G$ be a Lie group and let $\mathfrak{g}$ be its Lie algebra. Let $\rho: G \to \mathrm{GL}(W)$ be a representation of $G$ on a vector space $W$. The representation $\rho$ induces a Lie algebra representation $\mathrm{d}\rho: \mathfrak{g} \to \mathcal{L}(W)$ by differentiation at the identity: 
\begin{align}
    \mathrm{d}\rho(X) =  \frac{1}{i}\left.\dv{\theta} \rho(\mathrm{exp}(i\theta X))\right|_{\theta =0}. 
\end{align}
If $\{X_1, \ldots, X_{\dim \mathfrak{g}}\}$ is the basis of the Lie algebra, $\mathrm{d}\rho(X_1), \cdots \mathrm{d}\rho(X_{\dim \mathfrak{g}})$ are called the infinitesimal generators of the representation on $W$. This preserves the Lie brackets:
\begin{align}
    d\rho(i[X,Y]) = i[\mathrm{d}\rho(X), \mathrm{d}\rho(Y)]. 
\end{align}

\noindent \underline{Example 1: Anti-fundamental representation}\\
The anti-fundamental representation is given by 
\begin{align}
    \rho_{\mathrm{conj}}(U) = U^*. 
\end{align}
Let $\{T_a\}$ be the Hermitian traceless basis for the Lie algebra. Then, 
\begin{align*}
    \rho_{\mathrm{conj}}(U) = e^{-i\theta T_a^T}, 
\end{align*}
so the generators are $-T_a^T$. \\

\noindent \underline{Example 2: Adjoint representation}\\
The adjoint representation acts on traceless matrices by conjugation: 
\begin{align}
    \rho_{\mathrm{Adj}}(U)(X) = UXU^\dagger.  
\end{align}
Differentiating gives 
\begin{align}
     \frac{1}{i}\left.\dv{\theta} (e^{iT_a\theta}Xe^{-iT_a\theta})\right|_{\theta =0} = [T_a, X]. 
\end{align}
Thus, the generator corresponding to $T_a$ is given by $F_a(\cdot) := [T_a, \cdot]$. Thus, 
\begin{align}
    [F_a]_{cb} = \frac{1}{d} \Tr([T_a, T_b] T_c)= 2if_{abc}. 
\end{align}
Now, let $T_a$ be the basis of a Lie algebra and let $G_a$ be the corresponding generators in a representation $W$. The quadratic Casimir operator on $W$ is given by 
\begin{align}
    C = \sum_a G_a^2. 
\end{align}
This operator commutes with the generators: 
\begin{align}
    [C, G_b] = 0.  
\end{align}
Now, consider two representations $\rho_\lambda(U), \rho_\mu(U)$ and the tensor-product representation 
\begin{align}
    \rho_{\lambda \otimes \mu}(U) = \rho_\lambda(U) \otimes \rho_\mu(U). 
\end{align}
Let $G_a^{(\lambda)}, G_a^{(\mu)}$ denote the generators corresponding to the basis $T_a$ in the representation $\rho_\lambda, \rho_\mu$. Then, the generator corresponding to the basis $T_a$ in the representation $\lambda \otimes \mu$ is 
\begin{align}
    G_a^{\mathrm{tot}} = G_a^{(\lambda)} \otimes \mathbb{I} + \mathbb{I}\otimes G_a^{(\mu)}, 
\end{align}
and the total Casimir operator is 
\begin{align}
    C_{\mathrm{tot}} = \sum_a (G_a^{(\lambda)} \otimes \mathbb{I} + \mathbb{I}\otimes G_a^{(\mu)})^2 = C_\lambda \otimes \mathbb{I} + \mathbb{I}\otimes C_\mu + 2\sum_{a} G_a^{(\lambda)} \otimes G_a^{(\mu)}. 
\end{align}
If an irrep is represented by a Young diagram $\lambda=(\lambda_1, \lambda_2, \cdots \lambda_d)$ and $\sum_i \lambda_i = m$, one has~\cite{Piddock:2018xyi} 
\begin{align}\label{eq:Piddockequation}
    C(\lambda) = d^2m + d\sum_{i=1}^{d} \lambda_i(\lambda_i+1-2i)-m^2. 
\end{align}

\section{Existence of the leading-order coefficient}
\noindent Here we justify Definition 1 in the main text by showing that the limit indeed exists. We use an argument that is similar to Danskin's theorem~\cite{danskin1966, bertsekas2003convex} in convex optimization. 
\begin{lm}
The set of the Choi matrices of deterministic sequential superchannels is \textit{compact}. 
\end{lm}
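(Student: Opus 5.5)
The plan is to use the characterization of sequential Choi matrices in Eq.~\eqref{eq:seqChoi}. It identifies the set of Choi matrices of deterministic sequential superchannels with a subset of the finite-dimensional real vector space of Hermitian operators on $P\otimes \mathbf{I}\otimes\mathbf{O}\otimes F$. By Heine--Borel it then suffices to prove that this subset is closed and bounded. The key point is that Eq.~\eqref{eq:seqChoi} is a complete characterization: every operator satisfying those conditions is the Choi matrix of some comb of the form Eq.~\eqref{eq:Seqdecomp}, with auxiliary spaces of bounded dimension. So no closure issue can arise from unbounded auxiliary systems, and we can work directly with the operator conditions.

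\emph{Closedness.} Each condition in Eq.~\eqref{eq:seqChoi} is the preimage of a closed set under a continuous map. The condition $\mathcal{J}_\Xi\geq 0$ defines the positive semidefinite cone, which is closed because it is the intersection over all vectors $|v\rangle$ of the closed half-spaces $\{X:\langle v|X|v\rangle\geq 0\}$. The partial-trace equalities are linear equations, since partial trace and tensoring with a fixed operator are linear maps, so their solution sets are closed linear subspaces. The normalization $\Tr(\mathcal{J}_\Xi)=d_Pd_{\mathbf{O}}$ is an affine hyperplane. A finite intersection of closed sets is closed.

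\emph{Boundedness.} For a positive semidefinite operator, the operator norm is at most the trace. Hence every feasible $\mathcal{J}_\Xi$ satisfies $\|\mathcal{J}_\Xi\|_\infty\leq \Tr\mathcal{J}_\Xi=d_Pd_{\mathbf{O}}$. Equivalently, $\|\mathcal{J}_\Xi\|_1 = d_Pd_{\mathbf{O}}$, so the set lies in a ball of fixed radius. All norms are equivalent in finite dimension, so the set is bounded.

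Closed and bounded in a finite-dimensional space gives compactness. The only step needing care is the first one: we must invoke that Eq.~\eqref{eq:seqChoi} is necessary and sufficient~\cite{Chiribella_2008,chiribella2008quantum,chiribella2009theoretical}. Otherwise a limit of realizable combs might fail to be realizable. With that characterization assumed, as the paper does, the remaining steps are routine.
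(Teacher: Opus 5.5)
Your proposal is correct and follows the same route as the paper. You identify the set with a subset of the real vector space of Hermitian operators, invoke Heine--Borel, get closedness from positivity together with the linear partial-trace and trace constraints of Eq.~\eqref{eq:seqChoi}, and get boundedness from the fixed trace. You also make explicit that Eq.~\eqref{eq:seqChoi} is a necessary and sufficient characterization of realizable combs, which the paper uses only implicitly.
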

\begin{proof}
Since Choi matrices of superchannels are Hermitian and thus can be considered as a real vector space, Heine-Borel theorem~\cite{rudin1976principles} implies that it suffices to show that the set is closed and bounded. The closedness follows because Choi matrices are positive semidefinite and constrained by partial trace. Boundedness follows because the trace is a constant. 
\end{proof}

\begin{lm}\label{lm:Danskin}
Let $\mathcal{K}_{n,d}$ be the set of Choi matrices of $n$-slot deterministic sequential superchannels on a $d$-dimensional system, and let $\Omega_{n,p}$ denote the performance operator for noisy unitary purification. We denote the fidelity achieved by a strategy $J$ by $h_{n,d}(p,J):=\Tr[\Omega_{n,p}^TJ]$. Then, 
\begin{align}
f_{n,d}(p) = \max_{J\in  \mathcal{K}_{n,d}} h_{n,d}(p,J). 
\end{align}
Let $\mathcal{M}_{n,d}$ be a set 
\begin{align}
    \mathcal{M}_{n,d} := \{ J \in \mathcal{K}_{n,d}| h_{n,d}(0,J) = f_{n,d}(0) = 1\}. 
\end{align}
The coefficient 
\begin{align}
    C_{n,d}:= \lim_{p\to 0} \frac{1-f_{n,d}(p)}{p}
\end{align}
exists, and its value is 
\begin{align}
    C_{n,d} = -\max_{J \in \mathcal{M}_{n,d}} \Tr(\dot{\Omega}_{n,0}^TJ). 
\end{align}
\end{lm}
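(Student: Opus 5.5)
This is a Danskin-type argument. The structural fact driving it is that $\Omega_{n,p}$ is a polynomial in $p$ of degree at most $n$: each factor $\mathcal{J}_{\mathcal{D}_p\circ\mathcal{U}}$ is affine in $p$. Write $\Omega_{n,p}=\Omega_{n,0}+p\,\dot{\Omega}_{n,0}+p^2 R_p$, where $R_p$ is bounded in operator norm uniformly for $p\in[0,1]$. Then $h_{n,d}(p,J)=h_{n,d}(0,J)+p\Tr(\dot{\Omega}_{n,0}^TJ)+p^2\Tr(R_p^TJ)$. Since $\Tr J=d_Pd_{\mathbf{O}}$ is fixed on $\mathcal{K}_{n,d}$, the remainder is $O(p^2)$ uniformly in $J$.

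First, $f_{n,d}(p)=\max_J h_{n,d}(p,J)$ is attained, because $\mathcal{K}_{n,d}$ is compact by the previous lemma and $h$ is linear in $J$. Also $f_{n,d}(0)=1$, since the identity superchannel that simply feeds the input into one slot achieves fidelity $1$ for noiseless $\mathcal{U}$. Hence $\mathcal{M}_{n,d}$ is nonempty. It is a closed subset of a compact set (a level set of a continuous function), so it is compact, and the maximum of $\Tr(\dot{\Omega}_{n,0}^TJ)$ over it is attained. Call this value $-C^\ast$.

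\emph{Lower bound on $f$.} Pick $J^\ast\in\mathcal{M}_{n,d}$ attaining $-C^\ast$. Then $f_{n,d}(p)\ge h(p,J^\ast)=1-pC^\ast+O(p^2)$, so $\limsup_{p\to0}(1-f_{n,d}(p))/p\le C^\ast$.

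\emph{Upper bound on $f$.} This is the main obstacle, since the maximizers $J_p$ need not lie in $\mathcal{M}_{n,d}$. Take a sequence $p_k\to0$ achieving $\liminf_{p\to0}(1-f(p))/p$, with maximizers $J_{p_k}$. By compactness, pass to a subsequence with $J_{p_k}\to\bar J$.

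Since $f(p_k)\ge 1-O(p_k)$ and $h(p_k,J_{p_k})=h(0,J_{p_k})+O(p_k)$, we get $h(0,J_{p_k})\to1$. Continuity then gives $h(0,\bar J)=1$, so $\bar J\in\mathcal{M}_{n,d}$. Next use $h(0,J_{p_k})\le1$, which holds because $f(0)=1$, to bound
\begin{align}
1-f(p_k)\ge -p_k\Tr(\dot{\Omega}_{n,0}^TJ_{p_k})-O(p_k^2).
\end{align}
Dividing by $p_k$ and using continuity of the trace, we obtain $\liminf(1-f(p))/p\ge-\Tr(\dot{\Omega}_{n,0}^T\bar J)\ge C^\ast$.

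Combining the two bounds, the $\liminf$ and $\limsup$ agree, so the limit exists and equals $C^\ast=-\max_{J\in\mathcal{M}_{n,d}}\Tr(\dot{\Omega}_{n,0}^TJ)$. The key point that makes the argument work is discarding the nonpositive term $h(0,J_{p_k})-1$. This removes any need to control how fast $J_{p_k}$ approaches $\mathcal{M}_{n,d}$.
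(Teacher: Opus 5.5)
Your proposal is correct and follows essentially the same route as the paper. Both arguments expand $\Omega_{n,p}$ with a remainder that is uniformly $O(p^2)$ over $\mathcal{K}_{n,d}$, get one inequality from a fixed maximizer in $\mathcal{M}_{n,d}$, and get the other by extracting a convergent subsequence of optimizers $J_p$ with limit in $\mathcal{M}_{n,d}$ and discarding the nonpositive term $h_{n,d}(0,J_p)-1$. Your version is slightly more careful in one respect: you first choose the sequence $p_k$ that realizes the extremal limit and only then pass to a convergent subsequence, which the paper's wording leaves implicit.
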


\begin{proof}
We show the claim 
\begin{align}
    \left.\dv{p} \max_{\mathcal{J} \in \mathcal{K}_{n,d}} \Tr(\Omega_{n,p}^T \mathcal{J})\right|_{p=0} = \max_{\mathcal{J} \in \mathcal{M}_{n,d}} \partial_p h_{n,d}(0, \mathcal{J}). 
\end{align}
First, observe that $\Omega_{n,p}$ is a polynomial of $p$, so we can expand it as 
\begin{align}
    \Omega_{n,p} = \Omega_{n,0}+p\dot{\Omega}_{n,0}+p^2R_{n,d}(p). 
\end{align}
Thus, for the fidelity $h_{n,d}(p,J)$ achieved by a strategy $J$, 
\begin{align}
    h_{n,d}(p,J) = h_{n,d}(0,J) + p\Tr(\dot{\Omega}_{n,0}^TJ) + p^2\Tr(R_{n,d}(p)^TJ). 
\end{align}
For a sufficiently small $p$ in the interval $0\leq p \leq p_0$, one can make $\lVert R_{n,d}(p) \rVert_\infty$ finite. Using Hölder's inequality, we have 
\begin{align*}
    |\Tr(R_{n,d}(p)^TJ)| \leq \Tr|R_{n,d}(p)^TJ| \leq \lVert R_{n,d}(p)\rVert_\infty \lVert J \rVert_1
\end{align*}
so 
\begin{align}\label{eq:huniform}
    h_{n,d}(p,J) = h_{n,d}(0,J) + p\Tr(\dot{\Omega}_{n,0}^T J) + O_{n,d}(p^2). 
\end{align}
\noindent Here, note that $J$ can in principle depend on $p$. Next, let $J_\star \in \mathcal{M}_{n,d}$ maximize $\Tr(\dot{\Omega}_{n,0}^T J)$ over $J \in \mathcal{M}_{n,d}$. From Eq.~\eqref{eq:huniform}
\begin{align}
    h_{n,d}(p, \mathcal{J}_\star) = h_{n,d}(0, \mathcal{J}_\star) + p\Tr(\dot{\Omega}_{n,0}^T J_\star) + O_{n,d}(p^2). 
\end{align}
Since $f_{n,d}(p) \geq h_{n,d}(p,J_\star)$ and $f_{n,d}(0) = h_{n,d}(0,\mathcal{J}_\star)$, we have 
\begin{align}
    \liminf_{p\to 0} \frac{f_{n,d}(p)-f_{n,d}(0)}{p} \geq \liminf_{p\to 0} \frac{h_{n,d}(p, J_\star)-h_{n,d}(0, J_\star)}{p}  = \max_{J \in \mathcal{M}_{n,d}}\Tr(\dot{\Omega}_{n,0}^T J). 
\end{align}
Finally, for each $p>0$, choose the optimizer $J_p$. Also choose any operator $J_0 \in \mathcal{M}_{n,d}$. Then,
\begin{align}
    h_{n,d}(p, J_p) \geq h_{n,d}(p,J_0) \to f_{n,d}(0) \quad (p\to 0). 
\end{align}
Since $h_{n,d}(p,J_p)\to h_{n,d}(0, J_p)$ converges uniformly on $\mathcal{K}_{n,d}$, $h_{n,d}(0, J_p) \to f_{n,d}(0)$. Indeed, using 
\begin{align}
    \epsilon_p := \sup_{J \in \mathcal{K}_{n,d}}|h_{n,d}(p, J)-h_{n,d}(0, J)|, 
\end{align}
uniform convergence gives $h_{n,d}(0, J_p) \geq h_{n,d}(p, J_p) -\epsilon_p$ and $h_{n,d}(p,J_0) \geq h_{n,d}(0,J_0) -\epsilon_p =f_{n,d}(0)-\epsilon_p$. Combined with $h_{n,d}(p, J_p) \geq h_{n,d}(p, J_0)$, this yields $f_{n,d}(0) \geq h_{n,d}(0, J_p) \geq f_{n,d}(0)-2\epsilon_p$. Now, Lemma 1 tells us that $\mathcal{K}_{n,d}$ is compact, so any sequence $p_k\to  0$ has a subsequence such that $J_{p_k} \to\bar{J}$ and continuity gives $h_{n,d}(0, \bar{J}) = f_{n,d}(0)$. Therefore, every limiting optimizer satisfies $\bar{J} \in \mathcal{M}_{n,d}$. Now,
\begin{align}
    \frac{f_{n,d}(p) - f_{n,d}(0)}{p} &= \frac{h_{n,d}(p, J_p)-h_{n,d}(0,J_p)}{p}+\frac{h_{n,d}(0, J_p)-f_{n,d}(0)}{p}\nonumber\\
    & \leq \frac{h_{n,d}(p, J_p)-h_{n,d}(0,J_p)}{p} \nonumber\\
    &= \Tr(\dot{\Omega}_{n,0}^T J_p) + O_{n,d}(p)
\end{align}
so take any sequence $p_k$ such that its subsequence converges to $\bar{J}\in \mathcal{M}_{n,d}$. This yields 
\begin{align}
    \limsup_{p\to 0} \frac{f_{n,d}(p)-f_{n,d}(0)}{p} \leq \Tr(\dot{\Omega}_{n,0}^T \bar{J}) \leq \max_{J \in \mathcal{M}_{n,d}} \Tr(\dot{\Omega}_{n,0}^T J). 
\end{align}
\end{proof}

\begin{cor}\label{cor:noiseless}
The set $\mathcal{M}_{n,d}$ in Lemma~\ref{lm:Danskin} is characterized as the noiseless face $\mathcal{M}_{n,d} = \{\mathcal{J}_\Xi \in \mathcal{K}_{n,d}| \Xi(\mathcal{U}^{\otimes n}) = \mathcal{U}\}$. Thus, it is sufficient to optimize $h_{n,d}(p,J)$ over the noiseless face $\mathcal{M}_{n,d}$, instead of $\mathcal{K}_{n,d}$. 
\end{cor}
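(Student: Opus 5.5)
The corollary has two parts. First, $h_{n,d}(0,J)=1$ holds exactly when the superchannel maps $\mathcal{U}^{\otimes n}$ to $\mathcal{U}$ for every $U$. Second, restricting to $\mathcal{M}_{n,d}$ is then justified by Lemma~\ref{lm:Danskin}.

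\emph{Step 1: the noiseless value.} Since $\mathcal{D}_0=\mathrm{id}$, the definition of the performance operator gives
\begin{align}
h_{n,d}(0,\mathcal{J}_\Xi)=\Tr(\Omega_{n,0}^T\mathcal{J}_\Xi)=\int \mathrm{d}U\, f_{\mathrm{Choi}}(\Xi(\mathcal{U}^{\otimes n}),\mathcal{U}).
\end{align}
The superchannel that discards $n-1$ queries and applies the remaining one directly achieves value $1$. The integrand is bounded by $1$, so $f_{n,d}(0)=1$.

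\emph{Step 2: characterizing equality.} Suppose $\Xi(\mathcal{U}^{\otimes n})=\mathcal{U}$ for all $U$. Then the integrand is identically $1$, so $J\in\mathcal{M}_{n,d}$.

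Conversely, suppose $J\in\mathcal{M}_{n,d}$. The integral of a function bounded above by $1$ equals $1$, so $f_{\mathrm{Choi}}(\Xi(\mathcal{U}^{\otimes n}),\mathcal{U})=1$ for Haar-almost every $U$. Fidelity equals $1$ only for identical states, so $\mathcal{J}_{\Xi(\mathcal{U}^{\otimes n})}=\mathcal{J}_{\mathcal{U}}$ on a full-measure set.

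Both sides are continuous in $U$. The left side is a link product of $\mathcal{J}_\Xi$ with a polynomial in the entries of $U,\bar U$, and the right side is continuous. A full-measure set is dense, so equality extends to all $U\in\mathrm{SU}(d)$. This gives the stated description of $\mathcal{M}_{n,d}$ as the noiseless face.

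The one technical point is this passage from almost-everywhere to everywhere, which the continuity argument settles. Note also that $\mathcal{M}_{n,d}$ is a face of $\mathcal{K}_{n,d}$, because it is the set of maximizers of a linear functional.

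\emph{Step 3: sufficiency.} Lemma~\ref{lm:Danskin} gives
\begin{align}
C_{n,d}=-\max_{J\in\mathcal{M}_{n,d}}\Tr(\dot\Omega_{n,0}^TJ).
\end{align}
So the first-order coefficient depends only on the linear term of $h_{n,d}(p,J)$, evaluated on the noiseless face. Concretely, $h_{n,d}(p,J)=1-p\,(-\Tr(\dot\Omega_{n,0}^TJ))+O(p^2)$ uniformly over that face. Therefore one may optimize $h_{n,d}(p,J)$ over $\mathcal{M}_{n,d}$ instead of $\mathcal{K}_{n,d}$ without changing $C_{n,d}$.
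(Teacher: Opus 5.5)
Your proof is correct and follows the same route as the paper: $f_{n,d}(0)=1$, so an average fidelity (bounded by $1$) equal to $1$ forces $f_{\mathrm{Choi}}(\Xi(\mathcal{U}^{\otimes n}),\mathcal{U})=1$ and hence $\Xi(\mathcal{U}^{\otimes n})=\mathcal{U}$, with the converse immediate and the sufficiency claim read off from Lemma~\ref{lm:Danskin}. You additionally make explicit two points the paper's proof passes over silently: the trivial discard-the-other-queries protocol that witnesses $f_{n,d}(0)=1$, and the almost-everywhere-to-everywhere step via continuity in $U$ and density of Haar-full-measure sets.
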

\begin{proof}
$f_{n,d}(0) = 1$. Thus, $\mathcal{J}_\Xi \in \mathcal{M}_{n,d} \implies \int \mathrm{d}U f_{\mathrm{Choi}}(\Xi(\mathcal{U}^{\otimes n}), \mathcal{U}) = 1$. This further implies $f_{\mathrm{Choi}}(\Xi(\mathcal{U}^{\otimes n}), \mathcal{U})=1$ for all $\mathcal{U}$, and therefore $\Xi(\mathcal{U}^{\otimes n}) = \mathcal{U}$. The converse, $\Xi(\mathcal{U}^{\otimes n}) = \mathcal{U} \implies J_\Xi \in \mathcal{M}_{n,d}$ is obvious. 
\end{proof}

\begin{cor}\label{cor:noiselessconjugate}
The first-order infidelity coefficient $C'_{n,d}$ for the noisy conjugation problem also exists for $n\geq d-1$. Furthermore, it suffices to optimize over the noiseless face 
$\{\mathcal{J}_\Xi \in \mathcal{K}_{n,d}|\Xi(\mathcal{U}^{\otimes n}) = \mathcal{U}^*\}$. 
\end{cor}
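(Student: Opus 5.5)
The plan is to rerun the argument of Lemma~\ref{lm:Danskin} and Corollary~\ref{cor:noiseless} with the performance operator $\Omega'_{n,p}$ in place of $\Omega_{n,p}$. Only two properties of $\Omega_{n,p}$ were used there: it is polynomial in $p$, and the noiseless optimum equals one and is attained. The first holds for $\Omega'_{n,p}$ exactly as before, since $\mathcal{J}_{\mathcal{D}_p\circ\mathcal{U}} = (1-p)\mathcal{J}_{\mathcal{U}} + p\,\mathbb{I}/d$ and the $n$-fold tensor power is a polynomial in $p$ of degree $n$. So $h'_{n,d}(p,J) := \Tr([\Omega'_{n,p}]^T J)$ expands as $h'_{n,d}(0,J) + p\Tr([\dot{\Omega}'_{n,0}]^T J) + O_{n,d}(p^2)$. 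The remainder is bounded uniformly on $\mathcal{K}_{n,d}$ by Hölder's inequality, because $\lVert J\rVert_1 = \Tr J = d_P d_{\mathbf{O}}$ is fixed.

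The key new input is $g_{n,d}(0)=1$ for $n\geq d-1$, which is the cited result of Ref.~\cite{PhysRevResearch.1.013007}: for such $n$ there is a (parallel, hence sequential) superchannel with $\Xi'(\mathcal{U}^{\otimes n}) = \mathcal{U}^*$ for every $U\in\mathrm{SU}(d)$. Its average Choi fidelity is then one, so the maximum over the compact set $\mathcal{K}_{n,d}$ (Lemma 1) equals one. The set $\mathcal{M}'_{n,d} := \{J\in\mathcal{K}_{n,d} \mid h'_{n,d}(0,J)=1\}$ is nonempty, and it is compact as the preimage of a closed set under a continuous map.

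With these facts, the Danskin-type argument goes through line by line. For the lower bound, fix a maximizer $J_\star$ of $\Tr([\dot{\Omega}'_{n,0}]^TJ)$ over $\mathcal{M}'_{n,d}$; then $\liminf_{p\to0} (g_{n,d}(p)-1)/p$ is at least this maximum. For the upper bound, take optimizers $J_p$, use uniform convergence to get $h'_{n,d}(0,J_p)\to 1$, and extract convergent subsequences with limit $\bar J\in\mathcal{M}'_{n,d}$. Together these give
\begin{align}
C'_{n,d} = -\max_{J\in\mathcal{M}'_{n,d}} \Tr([\dot{\Omega}'_{n,0}]^T J).
\end{align}

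Finally, we identify $\mathcal{M}'_{n,d}$ with the noiseless face, as in Corollary~\ref{cor:noiseless}. Average fidelity one means $\int \mathrm{d}U\, f_{\mathrm{Choi}}(\Xi(\mathcal{U}^{\otimes n}),\mathcal{U}^*)=1$. Because the fidelity is at most one and the integrand is continuous in $U$ while the Haar measure has full support, the fidelity equals one for every $U$. Hence $\Xi(\mathcal{U}^{\otimes n})=\mathcal{U}^*$ for all $U$, and the converse direction is immediate.

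The only genuine obstacle is the requirement $n\geq d-1$, which is needed for $g_{n,d}(0)=1$. For smaller $n$ the noiseless optimum is strictly below one, and $(1-g_{n,d}(p))/p$ would diverge. This is the reason the corollary, and the definition of $C'_{n,d}$, are restricted to $n\geq d-1$.
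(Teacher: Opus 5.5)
Your proposal is correct and follows the paper's proof. The paper likewise observes that the Danskin-type argument of Lemma~\ref{lm:Danskin} and Corollary~\ref{cor:noiseless} uses only two facts: that the performance operator is polynomial in $p$, and that the noiseless optimum equals one, which holds for $n\geq d-1$ by Ref.~\cite{PhysRevResearch.1.013007}. Your extra details (the uniform H\"older bound via $\Tr J = d_P d_{\mathbf{O}}$, and the continuity/full-support argument identifying $\mathcal{M}'_{n,d}$ with the noiseless face) simply make explicit steps the paper leaves implicit.
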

\begin{proof}
In the proof of Lemma~\ref{lm:Danskin} and Corollary~\ref{cor:noiseless}, we have only used the fact that $\Omega_{n,p}$ can be expanded as a polynomial in $p$, and that $f_{n,d}(0) =1$. These features survive in the noisy conjugation problem. That is to say, $\Omega'_{n,p}$ is a polynomial in $p$, and $g_{n,d}(0) =1$ for $n\geq d-1$~\cite{PhysRevResearch.1.013007}. 
\end{proof}

\section{Proof of Theorem 1}
Here we prove our main theorem, the upper bound on the achievable fidelity. The proof can be outlined as follows: We first dilate the superchannel and use covariance (Lemma~\ref{lm:covariance}) to simplify the problem. We then translate the problem to a dual optimization problem using the Bény-Oreshkov condition (Lemma~\ref{lm:truncation}). We further simplify the optimization using tools from representation theory (Lemmas~\ref{lm:rhoERintro}-\ref{lm:rhoER}) and derive a symmetry-enhanced optimization problem that gives an upper bound to the original problem (Lemma~\ref{lm:symmetry-enhancedLP}). Finally, we solve this problem (Lemma~\ref{lm:LPsolution}), which gives the desired upper bound. 
\begin{thm}[Fundamental limits on the achievable fidelity, Theorem 1 in the main text]
\label{thm:main}
Let \(\Delta=d^2-1\).  Every adaptive sequential strategy obeys
\begin{equation}
C_{n,d} \geq C_{n,d}^{\star},
\qquad
C_{n,d}^{\star}:=
\frac{1}{d^2n}\qty[1+\frac{(\Delta-1)(\Delta +2+2\sqrt{1+\frac{\Delta}{n}})}{\qty(1+ \sqrt{1+\frac{\Delta}{n}})^2}]. 
\end{equation}
\end{thm}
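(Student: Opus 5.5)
By Lemma~\ref{lm:Danskin} and Corollary~\ref{cor:noiseless}, $C_{n,d}=-\max_{J\in\mathcal{M}_{n,d}}\Tr(\dot{\Omega}_{n,0}^T J)$. It therefore suffices to lower-bound the first-order infidelity of every twirled comb $\Xi$ on the noiseless face, i.e.\ every comb with $\Xi(\mathcal{U}^{\otimes n})=\mathcal{U}$. Concretely, we will show that the derivative at $p=0$ of the Bény--Oreshkov upper bound in Eq.~\eqref{eq:BenyOreshkov} is at most $-C^\star_{n,d}$.

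\textbf{Step 1: dilation and reduction to a quadratic form.} We dilate $\Xi$ to the isometric comb $V_{\mathrm{Pur}}$ and give the recovery $\mathcal{R}$ access to the environment. Eq.~\eqref{eq:BenyOreshkov} then says that the fidelity is bounded by how well the complementary channel $[\Xi_{\mathrm{Pur}}(\mathcal{N}^{\otimes n}_{\mathrm{id},p})]^c$ can be approximated by a constant channel $\mathcal{T}_\rho$. We expand the noisy isometry to first order in $p$:
\begin{itemize}
\item the no-error branch has amplitude $\sqrt{1-\tfrac{d^2-1}{d^2}np}$ and isometry $V_0$;
\item the single-error branches have amplitude $\sqrt{p}/d$ and operators $V_{a,r}$.
\end{itemize}
The complementary channel's Choi state then has a block structure: a block $V_0^\dagger(\cdot)V_0$ (which equals $\Tr(\cdot)\,|\eta\rangle\langle\eta|$ by noiselessness), cross terms proportional to $V_0^\dagger V_{a,r}=c_rT_a$, and error-error terms $V_{a,r}^\dagger V_{b,s}$ as in Eq.~\eqref{eq:tensorrelation}. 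The optimal $\rho$ is a perturbation of $|\eta\rangle\langle\eta|$ plus the error flags. Expanding $\max_\rho f$ to order $p$ gives
\begin{equation*}
1-f\simeq p\bigl[\text{(trace of the error block)}-\text{(largest overlap achievable with a product-form state)}\bigr],
\end{equation*}
where the bracket is a quadratic form in the data $(c_r,K,S,L)$. The Wigner--Eckart forms reduce all traces to Casimir-type scalars; in particular $\sum_a d_{abc}d_{abe}$ and $\sum_a f_{abc}f_{abe}$ are proportional to $\delta_{ce}$, with constants depending on $d$.

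\textbf{Step 2: symmetry-enhanced relaxation.} The fidelity term involves $\sqrt{\rho}$. It becomes tractable after decomposing the operator $\sum_{a,r}|a,r\rangle\otimes V_{a,r}$ into $\mathrm{SU}(d)$ isotypic components of $\mathrm{Adj}\otimes\bar{\Box}$:
\begin{equation*}
\mathrm{Adj}\otimes\bar\Box=\bar\Box\oplus\bar\Box\text{-like irreps }(\text{two larger ones}),
\end{equation*}
that is, the components $(\Box\Box,\Box)$-type and $(\Box,\Box\Box)$-type together with one copy of $\bar\Box$, with dimensions $d(d-1)(d+2)/2$ and $d(d+1)(d-2)/2$ (these factors also appear in Theorem~\ref{thm4}). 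The fidelity contribution of each sector is then a square of a sum over $r$ of projected amplitudes. By Cauchy--Schwarz, each sector contributes at most an eigenvalue-type quantity of the corresponding $n\times n$ Gram matrix built from $K$, $S$ and $L$. Averaging over permutations of the slots $r$ is allowed, since the objective is concave in the Gram data and the constraints Eqs.~\eqref{eq:constraintKSL} and \eqref{eq:constraintc} are permutation-invariant. This lets us restrict to $c_r=1/n$ and to $K,S,L$ of the form $x\mathbb{I}+y\,\mathbf{1}\mathbf{1}^T$. The result is a small linear/quadratic program in a few scalars, constrained by positivity of the Gram matrix of $\{V_0,V_{a,r}\}$, with the sectoral Gram blocks required to be positive semidefinite.

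\textbf{Step 3: solve the reduced program.} The optimum should occur where the $\bar\Box$ sector's Gram block is rank-deficient. The coupling between the $\sum_r c_r=1$ constraint and the uniform mode produces a quadratic equation whose root is $\sqrt{1+\Delta/n}$. This explains the denominator $(1+\sqrt{1+\Delta/n})^2$ in $C^\star_{n,d}$. The main obstacle is Step 2: correctly identifying which part of the error Gram data enters the fidelity through $\sqrt{\rho}$, and proving that the relaxation to product-form $\rho$ and uniform Gram matrices loses nothing at first order. I would first verify the resulting formula against the $d=2$ bound of Ref.~\cite{niwa2026scalingoptimalpurificationnoisyqubit}, and then against the achievability in Theorem~\ref{thm2}, to confirm tightness.
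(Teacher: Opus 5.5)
Your outline follows the paper's architecture: the noiseless face, dilation with the comb's environment handed to the recovery, B\'eny--Oreshkov, the covariance data $(c,K,S,L)$, and the split $\mathrm{Adj}\otimes\bar\Box\simeq\bar\Box\oplus R_+\oplus R_-$. But the step you yourself flag as the main obstacle is exactly where the bound is produced, and the proposal does not supply it.

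First, a correction to Step 1. Once the recovery holds the comb's environment, the B\'eny--Oreshkov environment is only the noise-label register. The no-error block is therefore $\frac{Q}{d}|0\rangle\langle 0|\otimes\mathbb{I}_R$ (from $V_0^\dagger V_0=\mathbb{I}$), not $\Tr(\cdot)|\eta\rangle\langle\eta|$. Mixing the two pictures is why it is unclear which Gram data enter.

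Two ingredients are then missing.
\begin{itemize}
\item The reduction of the environment state. The zero/one-error state $\rho_{ER}$ is invariant under $[\mathbf{1}\oplus R(C)\otimes\mathbb{I}_n]\otimes C^*$, and the $C^*$ is absorbed by $\mathbb{I}_R/d$. Concavity of $f_{\mathrm{root}}$ in its second argument then lets you replace $\sigma_E$, with no loss, by its $\mathrm{SU}(d)$-twirl $(1-tz)|0\rangle\langle 0|+\frac{t}{\Delta}\mathbb{I}_{\mathrm{Adj}}\otimes R$. It is this twirl over the adjoint action on the error label, not a slot-permutation average, that makes $\sigma_E$ block diagonal on $\bar\Box\oplus R_+\oplus R_-$. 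A separate pinching argument, $tz\le 4(1-Q)/Q$, is needed to know $z=O(1)$.
\item Which Gram data survive. Set $W_{a,r}=V_{a,r}-c_rV_0T_a$. The $\bar\Box$ block then factors as $LL^\dagger$, and the coherence between the no-error and one-error sectors contributes only $o(t)$. This gives $f_{\mathrm{root}}=1-\frac{t}{2}(n\Delta+z)+t\sqrt{z/\Delta}\,\mathcal{S}(\hat R)+o(t)$. Only $A_0,A_\pm$ appear, built from $K'=K-cc^T$, $S'=S-cc^T$, $L'=L-cc^T$ and the per-irrep eigenvalues of $X_d,X_f$ (obtained from Casimirs). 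The contraction identities for $d_{abc},f_{abc}$ you mention do not give these eigenvalues.
\end{itemize}

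Second, the final step is misidentified. Each sector contributes $f_{\mathrm{root}}(A_\bullet^T,\hat R)\le\sqrt{\Tr A_\bullet}$ by H\"older, with equality for $\hat R\propto A_\bullet$. This is a trace, not an eigenvalue. At the optimum $A_\bullet\propto P_\perp$ and $\hat R=P_\perp/(n-1)$ has rank $n-1$, so an eigenvalue-type quantity would undercount by a factor $\sqrt{n-1}$ and is not a valid upper bound.

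After optimizing $z^\star=\mathcal{S}^2/\Delta$, the traces are fixed by the diagonal constraints alone:
\begin{itemize}
\item $\Tr A_\pm=n-1$;
\item $\Tr A_0=n+d^2-2-\Delta\sum_r c_r^2$, where the $-\Delta\sum_r c_r^2$ comes from the $cc^T$ subtraction above.
\end{itemize}
Then $\sum_r c_r^2\ge 1/n$ (Cauchy--Schwarz with $\sum_r c_r=1$) and $(d_++d_-)/d=d^2-2$ finish the proof. So $\sqrt{1+\Delta/n}$ is just $\sqrt{\Tr A_0/(n-1)}$ at $\sum_r c_r^2=1/n$. The denominator $(1+\sqrt{1+\Delta/n})^2$ is pure algebra from rewriting $n\Delta-\frac{n-1}{\Delta}(\Delta-1+\sqrt{1+\Delta/n})^2$. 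No rank-deficiency condition or quadratic equation is involved.

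Your permutation averaging of $(c,K,S,L)$ is legitimate for an upper bound. The feasible set is convex, with $A_0\succeq 0$ an LMI via a Schur complement, and the objective is concave, so averaging only increases it and no losslessness is needed there. But it is unnecessary, since $c$ enters only through $\sum_r c_r^2$, and by itself it does not produce the bound.
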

\begin{proof}
Choose a Hermitian traceless basis $\{T_a\}$ of $\mathfrak{su}(d)$ satisfying
\begin{align}
    &\Tr(T_aT_b) = d\delta_{ab}\\
    &T_aT_b = \delta_{ab} \mathbb{I}_d + (d_{abc}+if_{abc})T_c,
\end{align}
where $d_{abc}$ and $f_{abc}$ are totally symmetric and totally antisymmetric tensors, defined by 
\begin{align}
\begin{dcases}
    d_{abc} = \frac{1}{2d} \Tr(\{T_a, T_b \}T_c)\\
    f_{abc} = \frac{1}{2id} \Tr([T_a, T_b] T_c), 
\end{dcases}
\end{align}
respectively. The $d$-dimensional depolarizing channel can then be expressed as 
\begin{align}
    \mathcal{D}_p(\rho) = \qty(1-\frac{\Delta}{d^2} p)\rho + \frac{p}{d^2}\sum_{a=1}^\Delta T_a\rho T_a^\dagger. 
\end{align}
The basis $\{T_a\}$ transforms under the \textit{adjoint representation} $U\mapsto R(U)$ given by
\begin{align}
    UT_aU^\dagger = \sum_b R_{ba}(U)T_b, \quad U \in \mathrm{U}(d). 
\end{align}

Now, let $\Xi$ be the sequential superchannel that implements noisy unitary purification under depolarizing noise. We know from the proof of Lemma~\ref{lm:Danskin} that it suffices to consider $\Xi$ satisfying $\Xi(\mathcal{U}^{\otimes n})=\mathcal{U}$. Since the performance operator $\Omega_{n,p}$ satisfies the symmetry
\begin{align}
    [A_P^* \otimes A_\mathbf{I}^{\otimes n} \otimes B_F^* \otimes B_\mathbf{O}^{\otimes n}, \Omega_{n,p}] = 0, \quad \forall A,B\in \mathrm{SU}(d),
\end{align}
we can also assume without loss of generality that the purification superchannel $\Xi$ is twirled: 
\begin{align}
    [A_P^* \otimes A_\mathbf{I}^{\otimes n} \otimes B_F^* \otimes B_\mathbf{O}^{\otimes n}, \mathcal{J}_\Xi] = 0,  \quad \forall A,B\in \mathrm{SU}(d). 
\end{align}

\begin{figure*}[b]
  \centering
  \includegraphics[width=0.8\linewidth]{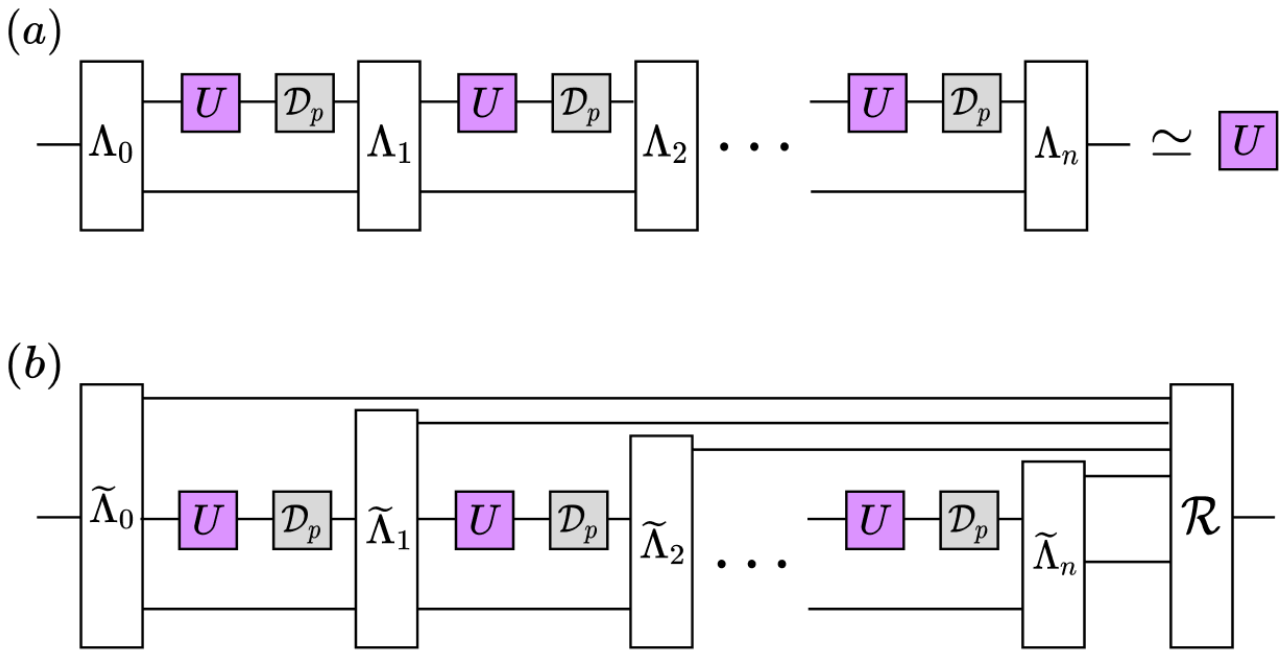}
  \caption{(a) The purification superchannel $\Xi$. (b) The dilated superchannel $\Xi_{\mathrm{Pur}}$, together with the new recovery operation $\mathcal{R}$.}
  \label{fig:dilation}
\end{figure*}

To begin with, we consider a dilation of the twirled superchannel $\Xi$, which is obtained by representing all operations in the superchannel $\Xi$ in its Stinespring dilation form~\cite{chiribella2009theoretical}, and introduce a new recovery operation $\mathcal{R}$ (see FIG.~\ref{fig:dilation}). Choosing $\mathcal{R}$ to be the traceout on the environment recovers the original twirled superchannel $\Xi$, so this assumption is sufficient to prove an upper bound on the achievable fidelity. Let $\Xi_{\mathrm{Pur}}$ denote the dilated sequential superchannel, and let $V_{\mathrm{Pur}}(U_1,\cdots U_n)$ denote the associated isometry $\Xi_{\mathrm{Pur}}(\mathcal{U}_1, \mathcal{U}_2 \cdots \mathcal{U}_n) = V_{\mathrm{Pur}}(\cdot)V_{\mathrm{Pur}}^\dagger$. Define the isometry $V_0$ by $V_0:= V_{\mathrm{Pur}}(\mathbb{I},\mathbb{I}\cdots \mathbb{I})$. For the Hermitian traceless basis $\{T_a\}$, we use $V_{a,r}$ to denote the operation where $T_a$ is inserted in the $r$-th slot of $V_0$: $V_{a,r}:=-i\partial_\theta V_{\mathrm{Pur}}(\mathbb{I},\mathbb{I},\cdots e^{iT_a \theta}, \cdots \mathbb{I})|_{\theta =0}$\footnote{Note that this differention technique is reminiscent of Refs.~\cite{odake2025analytical,bavaresco2025simulating}, which uses differentiation to bound the query complexity of \emph{exact} transformation. Instead, this work employs differentiation together with the inherent symmetry in the task to bound the fidelity in \emph{approximate} transformation as shown below.}.
Note that $V_0$ is an isometry but $V_{a,r}$ is not necessarily an isometry since $T_a$ is generally not a unitary operator. Due to the covariance of the twirled superchannel $\Xi_{\mathrm{Pur}}$, we get 
\begin{align}\label{eq:Cequivariance}
(C_F\otimes S_C)V_0 C_P^\dagger 
&=V_0 \quad \forall C \in \mathrm{SU}(d) \\\label{eq:overlap-equivariance}
(C_F\otimes S_C)V_{a,r} C_P^\dagger
&=\sum_{b}R_{ba}(C)V_{b,r},  \quad \forall C \in \mathrm{SU}(d) 
\end{align}
for some unitary $S_C$ acting on the environment. This further implies 
\begin{align}\label{eq:v0covariance}
C(V_0^\dagger V_{a,r})C^\dagger
&=\sum_bR_{ba}(C)V_0^\dagger V_{b,r}, \quad \forall C \in \mathrm{SU}(d)\\\label{eq:vacovariance}
C(V_{a,r}^\dagger V_{b,s})C^\dagger
&=\sum_{c,e}R_{ca}(C)R_{eb}(C)V_{c,r}^\dagger V_{e,s},  \quad \forall C \in \mathrm{SU}(d). 
\end{align}
The above covariance relations restrict the form of these tensors: 
\begin{lm}\label{lm:covariance}
The tensors in Eq.~\eqref{eq:v0covariance}, ~\eqref{eq:vacovariance} are of the form 
\begin{align}
    &V_0^\dagger V_{a,r} = c_r T_a.\\
    &V_{a,r}^\dagger V_{b,s} = K_{rs}\delta_{ab}\mathbb{I}_d + S_{rs}d_{abc}T_c+iL_{rs}f_{abc}T_c, 
\end{align}
where $c_r \in \mathbb{R}$ satisfy 
\begin{align}
    &\sum_{r=1}^n c_r = 1
\end{align}
and $K,S,L$ are Hermitian matrices satisfying 
\begin{align}
    K_{rr} = 1, \quad S_{rr}=L_{rr}=c_r. 
\end{align}
\end{lm}
\begin{proof}
Let $\mathrm{Adj} := \{X\in \mathcal{L}(\mathbb{C}^d)| \Tr(X) = 0\}$, and let $\mathrm{Hom}_G(V_1, V_2)$ denote the set of $\mathrm{SU}(d)$-covariant linear maps from $V_1$ to $V_2$. Define $\Phi_r: \mathrm{Adj} \to \mathcal{L}(\mathbb{C}^d)$ as 
\begin{align}
\Phi_r(e_a)=V_0^\dagger V_{a,r}. 
\end{align}
The covariance relations of~\eqref{eq:Cequivariance} tell us that $\Phi_r$ is an $\mathrm{SU}(d)$-intertwiner. Now, $\mathcal{L}(\mathbb{C}^d) \simeq \boldsymbol{1} \oplus \mathrm{Adj}$ so $\dim \mathrm{Hom}_G(\mathrm{Adj}, \mathcal{L}(\mathbb{C}^d)) = 1$. Thus, there exists $c_r\in \mathbb{C}$ such that 
\begin{align}\label{eq:Tatransform}
    V_0^\dagger V_{a,r} = c_r T_a.
\end{align}
$V_{a,r}$ inserts a Hermitian operator $T_a$ into the $r$-th slot of $V_0$, and thus we conclude that $V_0^\dagger V_{a,r}$ is Hermitian and thus $c_r\in \mathbb{R}$. Similarly, define $\Psi_{r,s}: \mathrm{Adj} \otimes \mathrm{Adj} \to \mathcal{L}(\mathbb{C}^d)$ by 
\begin{align}
    \Psi_{r,s}(e_a \otimes e_b) = V_{a,r}^\dagger V_{b,s}. 
\end{align}
The covariance relation in~\eqref{eq:overlap-equivariance} tells us that $\Psi_{r,s}$ is an $\mathrm{SU}(d)$ intertwiner. For $d\geq 3$, we have 
\begin{align}
    \mathrm{Adj} \otimes \mathrm{Adj}  \simeq \boldsymbol{1} \oplus 2\mathrm{Adj} \oplus (\mathrm{else}), 
\end{align}
so $\mathrm{Hom}_G(\mathrm{Adj}\otimes \mathrm{Adj}, \mathcal{L}(\mathbb{C}^d)) = \mathrm{Hom}_G(\mathrm{Adj}\otimes \mathrm{Adj}, \boldsymbol{1})\oplus \mathrm{Hom}_G(\mathrm{Adj}\otimes \mathrm{Adj}, \mathrm{Adj})$ and thus
\begin{align}
    V_{a,r}^\dagger V_{b,s} = K_{rs}\delta_{ab}\mathbb{I}_d + S_{rs}d_{abc}T_c+iL_{rs}f_{abc}T_c. 
\end{align}
Since $(V_{a,r}^\dagger V_{b,s})^\dagger = V_{b,s}^\dagger V_{a,r}$, we have $K^*_{rs} = K_{sr}, \, S_{rs}^*=S_{sr}, \,L_{sr}^* = L_{rs}$ so $K,S,L$ are Hermitian. We note that the above relations also hold for $d=2$, since in this case $d_{abc}=0$ and so $S$ may be chosen arbitrarily. 

Next, since $\Xi(\mathcal{U}^{\otimes n}) = \mathcal{U}$, we have $V_{\mathrm{Pur}}(U, \cdots U) = U \otimes |\eta_\theta\rangle$. Choosing $U = e^{iT_a\theta}$ and differentiating with respect to $\theta$ gives 
\begin{align*}
    \sum_{r} V_{a,r} = -i\partial_\theta V_{\mathrm{Pur}}(U, \cdots U)|_{\theta=0} = T_a \otimes |\eta_0\rangle + I\otimes -i\partial_\theta |\eta_\theta\rangle|_{\theta =0}, 
\end{align*}
which implies 
\begin{align}
    \sum_r V_0^\dagger V_{a,r} = T_a + \lambda_a \mathbb{I}_d. 
\end{align}
Due to the covariance relation in Eq.~\eqref{eq:Tatransform}, we have $\lambda_a =0$ and 
\begin{align}
    \sum_{r=1}^n c_r = 1. 
\end{align}
Furthermore, a direct calculation gives 
\begin{align*}
    V_{a,r}^\dagger V_{b,r} = \delta_{ab} + (d_{abc}+if_{abc})V_0^\dagger V_{c,r} = \delta_{ab} + (d_{abc}+if_{abc})(c_r T_c). 
\end{align*}
Therefore, we have 
\begin{align}
    K_{rr} = 1, \quad S_{rr}=L_{rr}=c_r. 
\end{align}
\end{proof}
Let us now derive the upper bound on the achievable fidelity using the Bény-Oreshkov condition. 
\begin{lm}\label{lm:truncation}
Let $Q:=(1-\Delta t)^n, \epsilon:= t(1-\Delta t)^{n-1}$, where $\Delta:=d^2-1$, $t:=\frac{p}{d^2}$. The optimal fidelity for the noisy unitary purification problem satisfies
\begin{align}
    \max_\mathcal{R}f_{\mathrm{Choi}}(\mathcal{R}\circ \Xi(\mathcal{N}_{\mathcal{U},p}^{\otimes n}), \mathcal{U})  \leq \max_{\mathcal{\sigma_E}} f\qty(\rho_{ER}, \sigma_E\otimes \frac{\mathbb{I}_R}{d}) + O_{n,d}(p^2), 
\end{align}
where
\begin{align}
    \rho_{ER} &:= \frac{1}{d}\left[Q|0\rangle \langle 0|_E\otimes \mathbb{I}_R+\sum_{a,r}c_r\sqrt{\epsilon Q}(|0\rangle \langle a,r|+|a,r \rangle \langle 0|)\otimes T_a^T +\sum_{a,r,b,s}\epsilon |a,r\rangle \langle b,s|\otimes (V_{b,s}^\dagger V_{a,r})^T\right]
\end{align}
is the zero- and one-error sector of the Choi state of the complementary channel of the output quantum channel $\Xi_{\mathrm{Pur}}(\mathcal{N}_{\mathcal{U},p}^{\otimes n})$. 
\end{lm}
\begin{proof}
The Bény-Oreshkov condition~\cite{PhysRevLett.104.120501, Faist_2020} states that the optimal recovery fidelity can be captured by how well the total complementary channel is close to a constant channel $\mathcal{T}_\sigma := \Tr(\cdot) \sigma$. Using this condition, 
\begin{align}
    \max_\mathcal{R}f_{\mathrm{Choi}}(\mathcal{R}\circ \Xi(\mathcal{N}_{\mathcal{U},p}^{\otimes n}), \mathcal{U}) &\leq \max_{\mathcal{R}'} f_{\mathrm{Choi}}(\mathcal{R}'\circ \Xi_{\mathrm{Pur}}(\mathcal{N}_{\mathcal{U},p}^{\otimes n}), \mathcal{U}) \nonumber\\
    &= \max_{\mathcal{\sigma_E}} f_{\mathrm{Choi}}( [\Xi_{\mathrm{Pur}}(\mathcal{N}_{\mathrm{id},p}^{\otimes n})]^c, \mathcal{T}_{\sigma_E}) \nonumber\\
    &= \max_{\mathcal{\sigma_E}} f\qty(\frac{1}{d}\sum_{\alpha, \beta}\sqrt{w_\alpha w_\beta}|\alpha\rangle \langle \beta|_E\otimes (V_\beta^\dagger V_\alpha)^T_R, \sigma_E\otimes \frac{\mathbb{I}_R}{d}) \nonumber\\
    &\leq \max_{\mathcal{\sigma_E}} f\qty(\frac{1}{d}\sum_{\alpha, \beta}\sqrt{w_\alpha w_\beta}\mathcal{P}(|\alpha\rangle \langle \beta|)_E\otimes (V_\beta^\dagger V_\alpha)^T_R, \mathcal{P}(\sigma_E)\otimes \frac{\mathbb{I}_R}{d}) \nonumber\\
    &= \max_{0 \leq q \leq 1} \left[\sqrt{q\max_{\sigma_{\leq 1}} f\qty(\frac{1}{d}\sum_{\alpha, \beta}\sqrt{w_\alpha w_\beta}(\Pi|\alpha\rangle \langle \beta|\Pi)_E\otimes (V_\beta^\dagger V_\alpha)^T_R, \sigma_{\leq 1}\otimes \frac{\mathbb{I}_R}{d})}\right.\nonumber\\
    &\left. + \sqrt{(1-q)\max_{\sigma_{\geq 2}} f\qty(\frac{1}{d}\sum_{\alpha, \beta}\sqrt{w_\alpha w_\beta}(\mathbb{I}-\Pi)|\alpha\rangle \langle \beta|(\mathbb{I}-\Pi)_E\otimes (V_\beta^\dagger V_\alpha)^T_R, \sigma_{\geq 2} \otimes \frac{\mathbb{I}_R}{d})}\right]^2 \nonumber\\
    &\leq \max_{\sigma_{\leq 1}} f\qty(\frac{1}{d}\sum_{\alpha, \beta}\sqrt{w_\alpha w_\beta}(\Pi|\alpha\rangle \langle \beta|\Pi)_E\otimes (V_\beta^\dagger V_\alpha)^T_R, \sigma_{\leq 1}\otimes \frac{\mathbb{I}_R}{d}) 
    \nonumber\\
    &+ \Tr\qty[\frac{1}{d}\sum_{\alpha, \beta}\sqrt{w_\alpha w_\beta}(\mathbb{I}-\Pi)|\alpha\rangle \langle \beta|(\mathbb{I}-\Pi)_E\otimes (V_\beta^\dagger V_\alpha)^T_R]\nonumber\\
    &= \max_{\sigma_E} f\qty(\frac{1}{d}\sum_{\alpha, \beta}\sqrt{w_\alpha w_\beta}(\Pi|\alpha\rangle \langle \beta|\Pi)_E\otimes (V_\beta^\dagger V_\alpha)^T_R, \sigma_E\otimes \frac{\mathbb{I}_R}{d}) + O_{n,d}(p^2)
\end{align}
where $\alpha, \beta$ denote the error labels, $w_\alpha, w_\beta$ are the corresponding weights, $\sigma_{\leq 1} := \frac{\Pi(\sigma_E)\Pi}{\Tr [\Pi(\sigma_E)]}, \sigma_{\geq 2} := \frac{(\mathbb{I}-\Pi)(\sigma_E)(\mathbb{I}-\Pi)}{{\Tr [(\mathbb{I}-\Pi)(\sigma_E)]}}, q= \Tr [\Pi(\sigma_E)]$, $R$ denotes the reference system, $(\cdot)^c$ denotes the complementary channel, and $\mathcal{P}$ is the pinching channel $\mathcal{P}(\cdot) = \Pi(\cdot) \Pi+(\mathbb{I}-\Pi)(\cdot)(\mathbb{I}-\Pi)$ with $\Pi$ being the projector onto the zero- and one-error sector. 
\end{proof}
Thus, the remaining problem is to optimize the quantum state $\sigma_E$ in the $\leq 1$ sector. To solve this optimization, let us first analyze the structure of $\rho_{ER}$. 
\begin{lm}\label{lm:rhoERintro}
Let $Q:=(1-\Delta t)^n, \epsilon:= t(1-\Delta t)^{n-1}$, where $\Delta:=d^2-1$, $t:=\frac{p}{d^2}$. Define $c=[c_1, \cdots c_n]^T$ as well as the embedding $J|\bar\psi\rangle = \sum_a |a \rangle \otimes T_a^T|\bar{\psi}\rangle $. The operator $\rho_{ER}$ has the form 
\begin{align}
    \rho_{ER} = \frac{1}{d}\left[Q|0\rangle \langle 0|\otimes \mathbb{I}_R+\sqrt{\epsilon Q}(c\otimes J + c^\dagger \otimes J^\dagger)+\epsilon cc^\dagger \otimes JJ^\dagger+ \epsilon[(K')^T \otimes I + (S')^T\otimes X_d + (L')^T\otimes X_f]\right], 
\end{align}
where 
\begin{align}
\begin{dcases}\label{eq:Xd}
    X_d := \sum_{a,b,c}d_{abc}|a\rangle \langle b| \otimes T_c^T\\
    X_f := -i\sum_{a,b,c}f_{abc}|a\rangle \langle b| \otimes T_c^T. 
\end{dcases}
\end{align}
and 
\begin{align}
    K' = K -cc^T, \, S'=S-cc^T, \, L'= L -cc^T. 
\end{align}
\end{lm}
\begin{proof}
We have $J^\dagger J = \Delta \mathbb{I}$ so $J/\sqrt{\Delta}$ is an isometry. Conversely, $JJ^\dagger/\Delta$ is a projector. Thus, 
\begin{align}
    \rho_{ER} &=\frac{1}{d}\left[Q|0\rangle \langle 0|_E\otimes \mathbb{I}_R+\sum_{a,r}c_r\sqrt{\epsilon Q}(|0\rangle \langle a,r|+|a,r \rangle \langle 0|)\otimes T_a^T  +\sum_{a,r,b,s}\epsilon |a,r\rangle \langle b,s|\otimes (V_{b,s}^\dagger V_{a,r})^T\right] \nonumber\\
    &= \frac{1}{d}\left[Q|0\rangle \langle 0|\otimes \mathbb{I}_R+\sqrt{\epsilon Q}(c\otimes J + c^\dagger \otimes J^\dagger)+\sum_{a,r,b,s}\epsilon |a,r\rangle \langle b,s|\otimes (V_{b,s}^\dagger V_{a,r})^T\right]. 
\end{align}
Now, define 
\begin{align}
    W_{a,r}:= V_{a,r}-c_rV_0 T_a. 
\end{align}
This gives 
\begin{align}
\begin{dcases}
    V_0^\dagger W_{a,r} = 0\\
    V_{a,r}^\dagger V_{b,s} = (W_{a,r}+c_rV_0 T_a)^\dagger (W_{b,s}+c_sV_0 T_b) = W_{a,r}^\dagger W_{b,s} + c_rc_sT_aT_b. 
\end{dcases}
\end{align}
Therefore, 
\begin{align}\label{eq:KSLprime}
    W_{a,r}^\dagger W_{b,s} = K'_{rs}\delta_{ab} + (S'_{rs}d_{abc}+iL'_{rs}f_{abc})T_c, 
\end{align}
where $K',S',L'$ are Hermitian operators defined by 
\begin{align}
    K' = K -cc^T, \, S'=S-cc^T, \, L'= L -cc^T 
\end{align}
and $c=[c_1, \cdots c_n]^T$. This allows us to write $\rho_{ER}$ as 
\begin{align}
    \rho_{ER} &= \frac{1}{d}\left[Q|0\rangle \langle 0|\otimes \mathbb{I}_R+\sqrt{\epsilon Q}(c\otimes J + c^\dagger \otimes J^\dagger)+\sum_{a,r,b,s}\epsilon |a,r\rangle \langle b,s|\otimes (V_{b,s}^\dagger V_{a,r})^T\right] \nonumber\\
    &= \frac{1}{d}\left[Q|0\rangle \langle 0|\otimes \mathbb{I}_R+\sqrt{\epsilon Q}(c\otimes J + c^\dagger \otimes J^\dagger)+\epsilon cc^\dagger \otimes JJ^\dagger + \epsilon[(K')^T \otimes I + (S')^T\otimes X_d + (L')^T\otimes X_f]\right]. 
\end{align}
\end{proof}
To further analyze the structure of $\rho_{ER}$, let us simultaneously diagonalize the operators $X_d, X_f$ using the following lemmas~\ref{lm:XdXf} and~\ref{lm:irrepdecomp}: 
\begin{lm}\label{lm:XdXf}
Let $X_d, X_f$ be operators defined in Eq.~\eqref{eq:Xd}. They are invariant under the tensor product of the adjoint representation and the anti-fundamental representation $R(U) \otimes U^*$. 
\end{lm}
\begin{proof}
First, observe the relations 
\begin{align}\label{eq:dabc}
    d_{\alpha \beta \gamma} = \sum_{a,b,c}d_{abc}[R(U)]_{\alpha a}[R(U)]_{\beta b}[R(U)]_{\gamma c}\\\label{eq:fabc}
    f_{\alpha \beta \gamma} = \sum_{a,b,c}f_{abc}[R(U)]_{\alpha a}[R(U)]_{\beta b}[R(U)]_{\gamma c}. 
\end{align}
To see this, note that $d_{abc}$ has the unitarily invariant expression
\begin{align}
    d_{abc} = \frac{1}{2d}\Tr(\{T_a, T_b\}T_c) = \frac{1}{2d}\Tr(\{UT_aU^\dagger, UT_bU^\dagger\} UT_cU^\dagger). 
\end{align}
From this we indeed obtain 
\begin{align}
    d_{abc} = \frac{1}{2d}\Tr(\{UT_aU^\dagger, UT_bU^\dagger\} UT_cU^\dagger) = \frac{1}{2d}\sum_{\alpha, \beta \gamma}[R(U)]_{\alpha a}[R(U)]_{\beta b}[R(U)]_{\gamma c} d_{\alpha \beta \gamma}, 
\end{align}
which is Eq.~\eqref{eq:dabc}. Similarly, $f_{abc}$ has the unitarily invariant expression 
\begin{align}
    f_{abc} = \frac{1}{2id}\Tr([T_a, T_b]T_c) = \frac{1}{2id}\Tr([UT_aU^\dagger, UT_bU^\dagger] UT_cU^\dagger). 
\end{align}
Thus,  
\begin{align}
    f_{abc} = \frac{1}{2id}\Tr([UT_aU^\dagger, UT_bU^\dagger] UT_cU^\dagger) = \frac{1}{2id}\sum_{\alpha, \beta \gamma}[R(U)]_{\alpha a}[R(U)]_{\beta b}[R(U)]_{\gamma c} f_{\alpha \beta \gamma}, 
\end{align}
which is Eq.~\eqref{eq:fabc}. Using these relations, we can show that 
\begin{align}
    (R(U)\otimes U^*)X_d(R(U)\otimes U^*)^\dagger &= \sum_{a,b,c}d_{abc}R(U)|a\rangle \langle b|R(U)^\dagger \otimes U^*T_c^TU^T\nonumber\\
    &= \sum_{\alpha, \beta, \gamma} \qty(\sum_{a,b,c}d_{abc}[R(U)]_{\alpha a}[R(U)]_{\beta b}[R(U)]_{\gamma c})|\alpha\rangle \langle \beta| \otimes T_\gamma^T \nonumber\\
    &= \sum_{\alpha, \beta, \gamma} d_{\alpha \beta \gamma}|\alpha\rangle \langle \beta| \otimes T_\gamma^T
\end{align}
and similarly, 
\begin{align}
    (R(U)\otimes U^*)X_f(R(U)\otimes U^*)^\dagger &=-i \sum_{a,b,c}f_{abc}R(U)|a\rangle \langle b|R(U)^\dagger \otimes U^*T_c^TU^T\nonumber\\
    &= -i\sum_{\alpha, \beta, \gamma} \qty(\sum_{a,b,c}f_{abc}[R(U)]_{\alpha a}[R(U)]_{\beta b}[R(U)]_{\gamma c})|\alpha\rangle \langle \beta| \otimes T_\gamma^T \nonumber\\
    &= -i\sum_{\alpha, \beta, \gamma} f_{\alpha  \beta \gamma}|\alpha\rangle \langle \beta| \otimes T_\gamma^T. 
\end{align}
\end{proof}

\begin{lm}\label{lm:irrepdecomp}
Consider the irreducible decomposition of the tensor product of the adjoint representation and the anti-fundamental representation, 
\begin{align}
    \mathrm{Adj}\otimes \bar{\Box} \simeq \bar{\Box}\oplus R_+ \oplus R_-, 
\end{align}
where $R_+ \simeq \mathcal{U}_{\gamma^+}$, $R_- \simeq \mathcal{U}_{\gamma^-}$, and  $\mathcal{U}_{\gamma^{\pm}}$ are Weyl modules of the mixed Schur-Weyl decomposition for the irrep labels $\gamma_+ := (\raisebox{0em}{\scalebox{0.4}{\ydiagram{1}}}, \raisebox{0em}{\scalebox{0.4}{\ydiagram{2}}}))$, and $\gamma_- :=  (\raisebox{0em}{\scalebox{0.4}{\ydiagram{1}}}, \raisebox{0.3em}{\scalebox{0.4}{\ydiagram{1,1}}})$, with $\dim R_+ = \frac{(d-1)d(d+2)}{2}, \dim R_- = \frac{(d-2)d(d+1)}{2}$. The operators $X_d, X_f$ both act as constants on each irrep $\bar{\Box}, R_+ ,R_-$. The eigenvalues of $X_f$ can be computed as 
\begin{align}
\begin{dcases}
\frac{d^2}{2} \quad (\bar{\Box}, \mathrm{multiplicity:}d)\\
-\frac{d}{2} \quad (R_+, \mathrm{multiplicity:}\frac{(d-1)d(d+2)}{2})\\
+\frac{d}{2} \quad (R_-, \mathrm{multiplicity:}\frac{(d-2)d(d+1)}{2})
\end{dcases}
\end{align}
while the eigenvalues of $X_d$ can be computed as 
\begin{align}
\begin{dcases}
\frac{d^2}{2}-2 \quad (\bar{\Box}, \mathrm{multiplicity:}d)\\
\frac{d}{2}-1 \quad (R_+, \mathrm{multiplicity:}\frac{(d-1)d(d+2)}{2})\\
-\frac{d}{2}-1 \quad (R_-, \mathrm{multiplicity:}\frac{(d-2)d(d+1)}{2}). 
\end{dcases}
\end{align}
\end{lm}
\begin{proof}
In terms of the irrep labels, the decomposition can be understood as 
\begin{align}
    \mathrm{Adj} \otimes \bar{\Box} \simeq (\raisebox{0em}{\scalebox{0.4}{\ydiagram{1}}}, \raisebox{0em}{\scalebox{0.4}{\ydiagram{1}}}) \otimes \overline{\scalebox{0.4}{\ydiagram{1}}} \simeq (\phi, \raisebox{0em}{\scalebox{0.4}{\ydiagram{1}}})\oplus (\raisebox{0em}{\scalebox{0.4}{\ydiagram{1}}}, \raisebox{0em}{\scalebox{0.4}{\ydiagram{2}}})\oplus (\raisebox{0em}{\scalebox{0.4}{\ydiagram{1}}}, \raisebox{0.3em}{\scalebox{0.4}{\ydiagram{1,1}}}).  
\end{align}
Thus, one can use the Weyl dimension formula~\cite{grinko2025mixed}
\begin{align}
    \dim \mathcal{U}_\lambda = \prod_{i<j} \frac{\lambda_i-\lambda_j + j-i}{j-i}
\end{align}
to confirm that  $\dim R_+ = \frac{(d-1)d(d+2)}{2}, \dim R_- = \frac{(d-2)d(d+1)}{2}$. Since we know from Lemma~\ref{lm:XdXf} that the operators $X_d, X_f$ are invariant under the tensor products of the adjoint representation and the anti-fundamental representation, Schur's lemma tells us that they act as constants on the irreducible spaces. 

To compute the eigenvalues of $X_d, X_f$, let us define the Casimir operator (see Appendix~\ref{ap:preliminaries}). In our case, the total Casimir operator has the form 
\begin{align}
    C_{\mathrm{tot}} = C_{\mathrm{Adj}} \otimes \mathbb{I} + \mathbb{I}\otimes C_{\bar{\Box}} -2\sum_{a} F_a \otimes T_a^T, 
\end{align}
where $F_a$ is the generator corresponding to the basis $T_a$ in the adjoint representation. We have 
\begin{align}
    X_f = -i\sum_{a,b,c} f_{cab}|a\rangle \langle b| \otimes T_c^T = \frac{1}{2}\sum_{a,b,c} [F_c]_{ab}|a\rangle \langle b| \otimes T_c^T = \frac{1}{2}\sum_a F_a \otimes T_a^T.
\end{align}
Thus, the eigenvalues of $X_f$ can be obtained by computing the eigenvalues of $C_{\mathrm{tot}}, C_{\mathrm{Adj}}, C_{\bar{\Box}}$. First,  
\begin{align}
 [C_{\mathrm{Adj}}]_{db}  = \sum_{a,c} [F_a]_{dc}[F_a]_{cb} = \sum_{a,c} (2if_{acd})(2if_{abc}) = 4\sum_{a,c}f_{acb}f_{acd}. 
\end{align}
We also have 
\begin{align*}
    \sum_a [T_a, [T_a, T_b]] = \sum_{a,c} [T_a, 2if_{abc} T_c] = -4\sum_{a,c}f_{abc}f_{acd} T_d
\end{align*}
and 
\begin{align*}
    \sum_a [T_a, [T_a, T_b]] =\sum_a (T_a^2 T_b+T_bT_a^2-2T_aT_bT_a) = 2(d^2-1)T_b + 2T_b = 2d^2 T_b. 
\end{align*}
The above two lines show 
\begin{align*}
    \sum_{a,c}f_{acb}f_{acd} = \frac{d^2}{2}\delta_{bd}
\end{align*}
so we get 
\begin{align}
    C_{\mathrm{Adj}} = 2d^2 \mathbb{I}. 
\end{align}
We also have 
\begin{align}
    C_{\bar{\Box}} = \sum_a(-T_a^T)^2 = (d^2-1)\mathbb{I}. 
\end{align}
Now, if an irrep is represented by a Young diagram $\lambda=(\lambda_1, \lambda_2, \cdots \lambda_d)$ and $\sum_i \lambda_i = m$, one has~\cite{Piddock:2018xyi} 
\begin{align}
    C(\lambda) = d^2m + d\sum_{i=1}^{d} \lambda_i(\lambda_i+1-2i)-m^2.
\end{align}
Thus, 
\begin{align}
\begin{dcases}
    C_{\mathrm{tot}}(\bar{\Box}) = d^2(d-1)-d(d-1)(d-2)-(d-1)^2 = d^2-1\\
    C_{\mathrm{tot}}(R_+) = d^2(d+1)+d(6-(d-1)(d-2))-(d+1)^2 = 3d^2+2d-1\\
    C_{\mathrm{tot}}(R_-) = d^2(d-1)+d(-d^2+5d-4)-(d-1)^2 = 3d^2-2d-1. 
\end{dcases}
\end{align}
This leads us to conclude that the eigenvalues of $X_f$ are  
\begin{align}
\begin{dcases}
\frac{d^2}{2} \quad (\bar{\Box}, \textrm{multiplicity:}d)\\
-\frac{d}{2} \quad (R_+, \textrm{multiplicity:}\frac{(d-1)d(d+2)}{2})\\
+\frac{d}{2} \quad (R_-, \textrm{multiplicity:}\frac{(d-2)d(d+1)}{2}). 
\end{dcases}
\end{align}
\noindent Next, define $J: \bar{\Box} \to \mathrm{Adj} \otimes \bar{\Box}$, where 
\begin{align}
    J|\bar{\psi}\rangle \to \sum_a |a\rangle \otimes T_a^T|\bar{\psi}\rangle. 
\end{align}
We have $J^\dagger J = \Delta \mathbb{I}$ so $J/\sqrt{\Delta}$ is an isometric embedding of $\bar{\Box}$ into $\mathrm{Adj} \otimes \bar{\Box}$. Conversely, $JJ^\dagger/\Delta$ is a projector onto the $\bar{\Box}$ component in $\mathrm{Adj} \otimes \bar{\Box}$. Since 
\begin{align}
    JJ^\dagger = \sum_{a,b} |a\rangle \langle b| \otimes (T_a^T T_b^T) = \sum_{a,b} |a\rangle \langle b| \otimes (\delta_{ab} \mathbb{I}+ (d_{abc}-if_{abc})T_c^T), 
\end{align}
we have the constraint $1+x_d+x_f=d^2-1$ on $\bar{\Box}$ and $1+x_d+x_f=0$ on $R_+, R_-$, where $x_d, x_f$ are eigenvalues of the operators $X_d, X_f$, respectively. Thus, the eigenvalues of $X_d$ can be evaluated as 
\begin{align}
\begin{dcases}
\frac{d^2}{2}-2 \quad (\bar{\Box}, \textrm{multiplicity:}d)\\
\frac{d}{2}-1 \quad (R_+, \textrm{multiplicity:}\frac{(d-1)d(d+2)}{2})\\
-\frac{d}{2}-1 \quad (R_-, \textrm{multiplicity:}\frac{(d-2)d(d+1)}{2}). 
\end{dcases}
\end{align}
\end{proof}
\noindent Putting lemmas~\ref{lm:rhoERintro}, \ref{lm:XdXf}, \ref{lm:irrepdecomp} together, we know that $\rho_{ER}$ has the following structure. 
\begin{lm}\label{lm:rhoER}
Let $Q:=(1-\Delta t)^n, \epsilon:= t(1-\Delta t)^{n-1}$, where $\Delta:=d^2-1$, $t:=\frac{p}{d^2}$. The operator $\rho_{ER}$ has the form 
\begin{align}
    \rho_{ER} \simeq \rho_0 \oplus \rho_+ \oplus \rho_-, 
\end{align}
where 
\begin{align}
\begin{dcases}
    \rho_0 = \frac{1}{d}\mqty(Q & \sqrt{\Delta \epsilon Q} c^\dagger\\ \sqrt{\Delta \epsilon Q} c & \epsilon(\Delta cc^\dagger + A_0^T))\otimes I_d,\\
    \rho_+ = \frac{\epsilon}{d} A_+^T\otimes I_{d_+}\\
    \rho_- = \frac{\epsilon}{d}A_-^T\otimes I_{d_-}.
\end{dcases}
\end{align}
Here, $c:=[c_1, \cdots c_n]^T$, and 
\begin{align}
\begin{dcases}
    A_0 = K' + \qty(\frac{d^2}{2}-2) S' + \frac{d^2}{2} L'\\
    A_+ = K' + \qty(\frac{d}{2}-1) S' - \frac{d}{2} L'\\
    A_- = K' - \qty(\frac{d}{2}+1) S' + \frac{d}{2} L'. 
\end{dcases}
\end{align}
where $K',S',L'$ are Hermitian operators defined by 
\begin{align}
    K' := K -cc^T, \, S':=S-cc^T, \, L':= L -cc^T, 
\end{align}
\end{lm}
\begin{proof}
By lemma~\ref{lm:irrepdecomp}, we have 
\begin{align}\label{eq:KSLprimeisomorphism}
    (K')^T \otimes I + (S')^T\otimes X_d + (L')^T\otimes X_f \simeq (A_0)^T \otimes I_d \oplus (A_+)^T \otimes I_{d_+} \oplus (A_-)^T \otimes I_{d_-}, 
\end{align}
where $d_+=\dim R_+, \,d_- = \dim R_-$ and 
\begin{align}
\begin{dcases}
    A_0 = K' + \qty(\frac{d^2}{2}-2) S' + \frac{d^2}{2} L'\\
    A_+ = K' + \qty(\frac{d}{2}-1) S' - \frac{d}{2} L'\\
    A_- = K' - \qty(\frac{d}{2}+1) S' + \frac{d}{2} L'. 
\end{dcases}
\end{align}
Using the form of $\rho_{ER}$ in Lemma~\ref{lm:rhoERintro}, and noting that $J/\sqrt{\Delta}$ is a unique isometric embedding that maps $\bar{\Box}$ into $\mathrm{Adj} \otimes \bar{\Box} \simeq \bar{\Box} \oplus R_+ \oplus R_-$, we can write 
\begin{align}
    \rho_{ER} &\simeq \rho_0 \oplus \rho_+ \oplus \rho_-, 
\end{align}
where 
\begin{align}
\begin{dcases}
    \rho_0 = \frac{1}{d}\mqty(Q & \sqrt{\Delta \epsilon Q} c^\dagger\\ \sqrt{\Delta \epsilon Q} c & \epsilon(\Delta cc^\dagger + A_0^T))\otimes I_d,\\
    \rho_+ = \frac{\epsilon}{d} A_+^T\otimes I_{d_+}\\
    \rho_- = \frac{\epsilon}{d}A_-^T\otimes I_{d_-}. 
\end{dcases}
\end{align}
\end{proof}
Using Lemmas~\ref{lm:rhoERintro}-\ref{lm:rhoER}, we now go on to optimize the fidelity in Lemma~\ref{lm:truncation}. However, a  direct optimization of $\sigma_E$ is difficult, and so we invoke the following lemma: 
\begin{lm}\label{lm:symmetry-enhancedLP}
Let $t:=\frac{p}{d^2}$, and 
\begin{align}
    \overline{\sigma}_E = (1-tz)|0\rangle \langle 0|+ \frac{t\mathbb{I}_{\mathrm{Adj}}}{\Delta}\otimes R, \quad R\geq 0, \, \Tr R=z, \, 0\leq tz \leq 1. 
\end{align}
It holds that 
\begin{align}
    \max_R f_{\mathrm{root}}(\rho_{ER}, \overline{\sigma}_E \otimes \frac{
    \mathbb{I}_R}{d}) &\geq \max_{\sigma_E} f_{\mathrm{root}}(\rho_{ER}, \sigma_E \otimes \frac{\mathbb{I}_R}{d}), 
\end{align}
where $f_{\mathrm{root}}(\rho, \sigma) := \Tr(\sqrt{\rho^{1/2}\sigma \rho^{1/2}})$ is the square root fidelity. Note that $f_{\mathrm{root}}(\rho, \sigma)^2 = f(\rho, \sigma)$. 
\end{lm}
\begin{proof}
We first perform the transformation $\Xi_{\mathrm{Pur}}(C^\dagger C,\cdots C^\dagger T_aC, \cdots C^\dagger C)$ and observe that 
\begin{align}
    \rho_{ER} = ([\boldsymbol{1}\oplus R(C)\otimes \mathbb{I}_n]\otimes C^*)\rho_{ER}([\boldsymbol{1}\oplus R(C)\otimes \mathbb{I}_n]\otimes C^*)^\dagger, 
\end{align}
where $R(C)$ is the adjoint representation of $C$,  $\boldsymbol{1}$ is the identity acting on the no-error label $|0\rangle$, and $\mathbb{I}_n$ is the identity operator acting on the position label. On the other hand, the environment state transforms as 
\begin{align}
    \sigma_E \otimes \frac{\mathbb{I}_R}{d} &\to  ([\boldsymbol{1}\oplus R(C)\otimes \mathbb{I}_n]\otimes C^*)\qty(\sigma_E \otimes \frac{\mathbb{I}_R}{d})([\boldsymbol{1}\oplus R(C)\otimes \mathbb{I}_n]\otimes C^*)^\dagger \nonumber\\
    &= ([\boldsymbol{1}\oplus R(C)\otimes \mathbb{I}_n]\otimes \mathbb{I}_d)\qty(\sigma_E \otimes \frac{\mathbb{I}_R}{d})([\boldsymbol{1}\oplus R(C)\otimes \mathbb{I}_n]\otimes \mathbb{I}_d)^\dagger. 
\end{align}
Now, define 
\begin{align}
    \overline{\sigma}_E  = \int \mathrm{d}C [\boldsymbol{1}\oplus R(C)\otimes \mathbb{I}_n]\sigma_E [\boldsymbol{1}\oplus R(C)\otimes \mathbb{I}_n]^\dagger. 
\end{align}
The concavity of the square root fidelity in the second argument shows that 
\begin{align}
    f_{\mathrm{root}}(\rho_{ER}, \overline{\sigma}_E \otimes \frac{
    \mathbb{I}_R}{d}) &\geq \int \mathrm{d}C f_{\mathrm{root}}(\rho_{ER}, [\boldsymbol{1}\oplus R(C)\otimes \mathbb{I}_n]\sigma_E [\boldsymbol{1}\oplus R(C)\otimes \mathbb{I}_n]^\dagger \otimes \frac{
    \mathbb{I}_R}{d})\nonumber\\
    &= f_{\mathrm{root}}(\rho_{ER}, \sigma_E \otimes \frac{\mathbb{I}_R}{d}). 
\end{align}
Thus, to obtain an upper bound on the fidelity, we can maximize over $\overline{\sigma}_E$. Since 
\begin{align}
    [(\boldsymbol{1}\oplus R(C)\otimes \mathbb{I}_n), \overline{\sigma}_E] = 0, 
\end{align}
one can show that there is no coherence between the no-error sector and the single-error sector and that the $\mathrm{Adj}$ space would be proportional to the identity. Thus, $\overline{\sigma}_E$ may be assumed to be of the form 
\begin{align}
    \overline{\sigma}_E = (1-tz)|0\rangle \langle 0|+ \frac{t\mathbb{I}_{\mathrm{Adj}}}{\Delta}\otimes R, \quad R\geq 0, \, \Tr R=z, \, 0\leq tz \leq 1
\end{align}
\end{proof}
Using Lemma~\ref{lm:truncation}, \ref{lm:rhoER}, and \ref{lm:symmetry-enhancedLP}, one can derive the upper bound on the achievable fidelity. 
\begin{lm}\label{lm:LPsolution}
Let $Q:=(1-\Delta t)^n, \epsilon:= t(1-\Delta t)^{n-1}$, where $\Delta:=d^2-1$, $t:=\frac{p}{d^2}$, and let 
\begin{align}
    \overline{\sigma}_E = (1-tz)|0\rangle \langle 0|+ \frac{t\mathbb{I}_{\mathrm{Adj}}}{\Delta}\otimes R, \quad R\geq 0, \, \Tr R=z. 
\end{align}
The upper bound on the achievable fidelity in Lemma~\ref{lm:symmetry-enhancedLP} has the form 
\begin{align}
    f_{\mathrm{root}}\qty(\rho_{ER}, \bar{\sigma}_E \otimes \frac{\mathbb{I}_R}{d}) &= \sqrt{Q(1-tz)}+\frac{t}{\sqrt{\Delta}}\qty(f_{\mathrm{root}}(A_0^T, R)+ \frac{d_+}{d}f_{\mathrm{root}}(A_+^T, R) + \frac{d_-}{d}f_{\mathrm{root}}(A_-^T, R))+o(t)\nonumber\\
    &= 1-\frac{t}{2}(n\Delta + z) + t\sqrt{\frac{z}{\Delta}}\mathcal{S}(\hat{R}) + o(t)
\end{align}
where we have defined $R = z\hat{R}$ and 
\begin{align}
    \mathcal{S}(\hat{R}) := f_{\mathrm{root}}(A_0^T, \hat{R})+ \frac{d_+}{d}f_{\mathrm{root}}(A_+^T, \hat{R}) + \frac{d_-}{d}f_{\mathrm{root}}(A_-^T, \hat{R}). 
\end{align}
\end{lm}
\begin{proof}
Since
\begin{align}
    \overline{\sigma}_E \otimes \frac{\mathbb{I}_R}{d} &= (1-tz)|0\rangle \langle 0|\otimes \frac{\mathbb{I}_R}{d} + \qty[\frac{t}{d\Delta}R\otimes \mathbb{I}_{d} \oplus \frac{t}{d\Delta}R\otimes \mathbb{I}_{d_+} \oplus \frac{t}{d\Delta}R \otimes \mathbb{I}_{d_-}]\nonumber\\
    & \simeq \sigma_0\oplus \sigma_+ \oplus \sigma_-, 
\end{align}
where 
\begin{align}
\begin{dcases}
     \sigma_0 = \frac{1}{d}\mqty(1-tz & 0\\ 0 & \frac{t}{\Delta}R) \otimes \mathbb{I}_d\\
     \sigma_+ = \frac{t}{d\Delta} R \otimes \mathbb{I}_{d_+}, \\
     \sigma_- = \frac{t}{d\Delta} R \otimes \mathbb{I}_{d_-}. 
\end{dcases}
\end{align}
We get the expression for the square root fidelity
\begin{align}
    f_{\mathrm{root}}\qty(\rho_{ER}, \sigma_E \otimes \frac{\mathbb{I}_R}{d}) &= f_{\mathrm{root}}(LL^\dagger, DD^\dagger) + f_{\mathrm{root}}(\rho_+, \sigma_+) + f_{\mathrm{root}}(\rho_-, \sigma_-)\nonumber\\
    &= \lVert L^\dagger D \rVert_1 + \frac{d_+}{d}\sqrt{\frac{\epsilon t}{\Delta}}f_{\mathrm{root}}(A_+^T, R) + \frac{d_-}{d}\sqrt{\frac{\epsilon t}{\Delta}}f_{\mathrm{root}}(A_-^T, R)
\end{align}
where
\begin{align}
    L = \mqty(\sqrt{Q} & 0\\\sqrt{\Delta \epsilon} c & \sqrt{\epsilon}(A_0^T)^{1/2}), \quad D = \mqty(\sqrt{1-tz} & 0\\ 0  & \sqrt{\frac{tR}{\Delta}}). 
\end{align}
Substituting, one obtains 
\begin{align}
    f_{\mathrm{root}}\qty(\rho_{ER}, \sigma_E \otimes \frac{\mathbb{I}_R}{d}) =& \left \lVert \mqty(\sqrt{Q(1-tz)} & \sqrt{\epsilon t} c^\dagger R^{1/2}\\ 0 & \sqrt{\frac{\epsilon t}{\Delta}} (A_0^T)^{1/2}R^{1/2}) \right \rVert_1+ \frac{d_+}{d}\sqrt{\frac{\epsilon t}{\Delta}}f_{\mathrm{root}}(A_+^T, R) + \frac{d_-}{d}\sqrt{\frac{\epsilon t}{\Delta}}f_{\mathrm{root}}(A_-^T, R).
\end{align}
Now, define $a_t := \sqrt{Q(1-tz)}, b_t := \sqrt{\epsilon t} c^\dagger R^{1/2}, C_t :=\sqrt{\frac{\epsilon t}{\Delta}} (A_0^T)^{1/2}R^{1/2}$, and $x_t:=b_t^\dagger/a_t$. 
As long as $z=O(1)$, we have $a_t \to 1, b_t=O(t), C_t=O(t)$ and 
\begin{align}
    \mqty(a_t & b_t \\ 0 & C_t) \mathrm{exp}\mqty(0 & -x^\dagger\\ x & 0) = \mqty(a_t & 0 \\ 0 & C_t)+O(t^2). 
\end{align}
Therefore, 
\begin{align}
    \left \lVert \mqty(a_t & b_t \\ 0 & C_t) \right \rVert_1 =  \left \lVert\mqty(a_t & b_t \\ 0 & C_t) \mathrm{exp}\mqty(0 & -x^\dagger\\ x & 0)\right \rVert_1 = a_t + \lVert C_t  \rVert_1 + o(t). 
\end{align}
Now, we show $z=O(1)$. Set $y=tz$ and consider pinching $\rho_{ER}, \bar{\sigma}_E$ as 
\begin{align*}
    &\rho_{ER} \to \Pi_0 \rho_{ER} \Pi_0 + \Pi_1 \rho_{ER} \Pi_1\\
    &\bar{\sigma}_E \to \Pi_0 \bar{\sigma}_E \Pi_0 + \Pi_1 \bar{\sigma}_E \Pi_1, 
\end{align*}
where $\Pi_0, \Pi_1$ denote the projector onto the no-error and single-error sector. This shows
\begin{align}
    f_{\mathrm{root}}\qty(\rho_{ER}, \sigma_E \otimes \frac{\mathbb{I}_R}{d}) \leq \sqrt{(1-Q)y}+\sqrt{Q(1-y)}, 
\end{align}
where we have used $f_{\mathrm{root}}(X,Y) \leq \sqrt{\Tr X \Tr Y}$. One can also choose $\sigma_E = |0\rangle \langle 0|$ so we have 
\begin{align}
    \sqrt{Q} \leq f_{\mathrm{root}}\qty(\rho_{ER}, \sigma_E \otimes \frac{\mathbb{I}_R}{d}) \leq \sqrt{(1-Q)y}+\sqrt{Q(1-y)}. 
\end{align}
Expanding the inequality shows for $0\leq y \leq 1$ that $1-\sqrt{1-y} = \frac{y}{1+\sqrt{1-y}}\geq \frac{y}{2}$ and so 
\begin{align}
    y \leq \frac{4(1-Q)}{Q}. 
\end{align}
Substituting $Q= 1-n\Delta t +O(t^2)$, one sees that $y=O(t)$ and so $z=O(1)$, which justifies the earlier assumption. 
\end{proof}
Finally, the $z$ dependent part is a quadratic function of $\sqrt{z}$ and so $z^\star = \frac{\mathcal{S}(\hat{R})^2}{\Delta}$ achieves the maximum value. Noting that the target optimal value is the \textit{squared} fidelity, we get 
\begin{align}
    f_{n,d}(p) \leq 1-t(n\Delta -\frac{1}{\Delta} \max_{\hat{R}\geq 0, \Tr \hat{R} = 1}[\mathcal{S}(\hat{R})]^2)+ o(t). 
\end{align}
Recalling Eq.~\eqref{eq:KSLprime} and Eq.~\eqref{eq:KSLprimeisomorphism}, one can see that $A_0, A_\pm$ are all positive semidefinite. Thus, we can use the inequality $f_{\mathrm{root}}(X,Y) \leq \sqrt{\Tr X \Tr Y}$ together with the Cauchy-Schwarz inequality $s:= \sum_{r=1}^n c_r^2 \geq \frac{1}{n}$ to show 
\begin{align}
    \mathcal{S}(\hat{R}) &\leq \sqrt{\Tr A_0 \Tr \hat{R}} + \frac{d_+}{d} \sqrt{\Tr A_+ \Tr \hat{R}} + \frac{d_-}{d} \sqrt{\Tr A_- \Tr \hat{R}} \nonumber \\
    & = \sqrt{n+d^2-2-\Delta s} + \frac{(d-1)(d+2)}{2}\sqrt{n-1} + \frac{(d-2)(d+1)}{2}\sqrt{n-1} \nonumber \\
    & \leq \sqrt{n+d^2-2-\frac{d^2-1}{n}} + (d^2-2)\sqrt{n-1} \nonumber\\
    &= \sqrt{n-1}\qty(\Delta -1 + \sqrt{1+\frac{\Delta}{n}}). 
\end{align}
Therefore, we have 
\begin{align}
    f_{n,d}(p) &\leq 1- \frac{p}{d^2}\qty(n\Delta-\frac{n-1}{\Delta}\qty(\Delta-1+\sqrt{1+\frac{\Delta}{n}})^2) + o(p)\nonumber\\
    &= 1-\frac{1}{d^2}\qty[1+\frac{(\Delta-1)(\Delta +2+2\sqrt{1+\frac{\Delta}{n}})}{\qty(1+ \sqrt{1+\frac{\Delta}{n}})^2}] \frac{p}{n} + o(p)
\end{align}
which proves Theorem 1. 
\end{proof}

\section{Proof of Theorem 2}
Here we prove Theorem 2. We first give an information-theoretic proof using the Bény-Oreshkov theorem. We then give a more direct evaluation of the achievable fidelity by constructing a concrete encoding and decoding circuit. 
\subsection{Information-theoretic proof}
\begin{thm}[Achievability with parallel strategies, Theorem 2 in the main text]
Let $\Delta = d^2-1$. For $n=kd+1$, the entanglement-assisted $\mathrm{SU}(d)$ covariant code satisfies: 
\begin{align}
C_{n,d}\leq C_{n,d}^{\star},
\qquad
C_{n,d}^{\star}:=
\frac{1}{d^2n}\qty[1+\frac{(\Delta-1)(\Delta +2+2\sqrt{1+\frac{\Delta}{n}})}{\qty(1+ \sqrt{1+\frac{\Delta}{n}})^2}]. 
\end{align}
Combined with Theorem 1, we have $C_{n,d}=C_{n,d}^\star$ for $n=kd+1$. 
\end{thm}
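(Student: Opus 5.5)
The plan is to prove achievability by reusing the machinery of the proof of Theorem~\ref{thm:main}, run in the opposite direction, for the specific entanglement-assisted encoder $E_{n,d}$ of the main text.

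\textbf{Setting up the code.} Since $n=kd+1$, the balanced partition $\alpha=(k+1,k,\dots,k)$ gives $\mathcal U_\alpha\simeq\mathbb C^d$ with $U\mapsto(\det U)^kU$. Hence $E_{n,d}$, which maximally entangles the Specht module $\mathcal S_\alpha$ with an ancilla $E$, is an $\mathrm{SU}(d)$-covariant isometry. With the decoder $E_{n,d}^\dagger$ it implements $\mathcal U$ exactly in the noiseless case.

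\textbf{Reducing to a fidelity with a trace-and-replace channel.}
\begin{itemize}
\item The only inaccessible environment is the Kraus register of the $n$ depolarizing channels. The decoder holds both the outputs and $E$.
\item The Bény--Oreshkov theorem is an \emph{equality}: the optimal recovery fidelity equals $\max_{\sigma_E} f(\rho^{\mathrm{full}}_{ER},\sigma_E\otimes\mathbb I_R/d)$.
\item For a lower bound I choose $\sigma_E$ supported on the zero/one-error sector, $\sigma_E=\Pi\sigma_E\Pi$.
\item Then $f_{\mathrm{root}}(\rho^{\mathrm{full}},\sigma)=f_{\mathrm{root}}(\Pi\rho^{\mathrm{full}}\Pi,\sigma)$ exactly, because $\sqrt\sigma=\Pi\sqrt\sigma\,\Pi$. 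The truncation in Lemma~\ref{lm:truncation} therefore costs nothing here.
\item Taking $\sigma=\overline\sigma_E$ of the form in Lemma~\ref{lm:symmetry-enhancedLP}, Lemma~\ref{lm:LPsolution} and the optimization over $z$ give
\[
f\ge 1-t\Bigl(n\Delta-\tfrac1\Delta\,\mathcal S(\hat R)^2\Bigr)+o(t).
\]
\end{itemize}
It then suffices to exhibit $\hat R$ for which every inequality in the chain ending the proof of Theorem~\ref{thm:main} is an equality. This requires $c_r=1/n$ and $A_0,A_+,A_-$ all proportional to a common $\hat R$.

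\textbf{Computing $c,K,S,L$ for this code.}
\begin{itemize}
\item Define $V_0$ and $V_{a,r}$ from $E_{n,d}$ (followed by the fixed decoding). Lemma~\ref{lm:covariance} already fixes the diagonals: $K_{rr}=1$ and $S_{rr}=L_{rr}=c_r$.
\item The Specht register is maximally entangled with $E$, and permutations act only on $\mathcal S_\alpha$. Hence $c_r$ is independent of $r$, so $c_r=1/n$. Likewise all off-diagonal entries of $K$, $S$ and $L$ are equal.
\item The collective generator $\sum_r T_a^{(r)}$ acts on the code space as $T_a$ and leaves $E$ untouched. This gives $\sum_r V_{a,r}=V_0T_a$, and hence
\[
\sum_{r,s}V_{a,r}^\dagger V_{b,s}=T_aT_b .
\]
Therefore the entry sums satisfy $\sum_{rs}K_{rs}=\sum_{rs}S_{rs}=\sum_{rs}L_{rs}=1$.
\item Together these fix $K=(1+\tfrac1n)\mathbb I-\tfrac1n\mathbb J$ and $S=L=\mathbb I/n$, where $\mathbb J$ is the all-ones matrix. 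Writing $P=\mathbb I-\mathbb J/n$, this gives $K'=(1+\tfrac1n)P$ and $S'=L'=P/n$.
\item Substituting into Lemma~\ref{lm:rhoER} yields
\[
A_\pm=P,\qquad A_0=\Bigl(1+\tfrac{\Delta}{n}\Bigr)P .
\]
All three are proportional to $P$, with traces $n-1$ and $(n-1)(1+\Delta/n)$, exactly as in the bound.
\end{itemize}

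\textbf{Closing the argument.} Choose $\hat R=P/(n-1)$. Then $f_{\mathrm{root}}(X,\hat R)=\sqrt{\operatorname{Tr}X}$ for each $A$, and
\[
\mathcal S(\hat R)=\sqrt{n-1}\Bigl(\Delta-1+\sqrt{1+\Delta/n}\Bigr).
\]
This reproduces $C^\star_{n,d}$.

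\textbf{Main obstacle.} I expect the hard step to be justifying rigorously that the off-diagonal entries of $K,S,L$ are uniform and that $V_{a,r}^\dagger V_{b,s}$ has no residual Specht dependence. I would handle this by twirling with $\pi(\sigma)\otimes g_\alpha(\sigma)^*$ on $E$, under which the code state is invariant. I would also double-check that the $O(t^2)$ corrections in Lemma~\ref{lm:LPsolution} remain uniform in this achievability direction.
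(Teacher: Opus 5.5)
Your proposal is correct and follows the paper's information-theoretic proof essentially step for step. It uses the same balanced-partition encoder and obtains the same values $c_r=1/n$, $K=(1+\tfrac1n)\mathbb I-\tfrac1n\mathbb J$, $S=L=\mathbb I/n$ (from permutation invariance plus $\sum_{r,s}V_{a,r}^\dagger V_{b,s}=T_aT_b$), hence $A_0=(1+\Delta/n)P$, $A_\pm=P$, with the same choice $\hat R=P/(n-1)$. Your remarks on running the upper-bound chain in reverse (Bény--Oreshkov as an equality, $\sigma_E$ supported on the $\le 1$-error sector making the truncation exact, and $z$ fixed so the expansion is pointwise) make explicit what the paper only asserts in its last sentence, and your ``main obstacle'' is resolved exactly by $(\pi(\sigma)\otimes\overline{g_\alpha(\sigma)})E_{n,d}=E_{n,d}$, which is the content of the paper's identity $E_{n,d}^\dagger\mathcal O E_{n,d}=\frac{1}{m_\alpha}\Tr_{\mathcal S_\alpha}(\Pi_\alpha\mathcal O\Pi_\alpha)$.
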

\begin{proof}
Let $n=kd+1$ and choose the partition $\alpha = (k+1, k, \cdots k)$. The Weyl module has dimension $d$, and has the representation 
\begin{align}
    f_{\alpha}(U) = (\mathrm{det} U)^k U. 
\end{align}
Now, define the encoder 
\begin{align}
    E_{n,d}|\psi\rangle = \frac{1}{\sqrt{m_{\alpha}}}\sum_{j=1}^{m_{\alpha}}U_{\mathrm{Sch}}^\dagger(|\psi \rangle_{\mathcal{U}_{\alpha}} \otimes |\alpha\rangle \otimes |j\rangle_{\mathcal{S}_{\alpha}}) \otimes |j\rangle_E. 
\end{align}
This satisfies the covariant condition
\begin{align}\label{eq:Ucovap}
    (U^{\otimes n} \otimes \mathbb{I}_E)E_{n,d} = E_{n,d} U 
\end{align}
for $U\in \mathrm{SU}(d)$. One can then show 
\begin{align}
    E_{n,d}^\dagger \mathcal{O}  E_{n,d} = \frac{1}{m_{\alpha}} \Tr_{\mathcal{S}_{\alpha}}(\Pi_{\alpha}\mathcal{O} \Pi_{\alpha})
\end{align}
for any observable $\mathcal{O}\in \mathcal{L}(\mathbb{C}^d)^{\otimes n}$, where $\Pi_{\alpha}$ is the Young projector. This implies 
\begin{align}
    E_{n,d}^\dagger T_{a,1} E_{n,d}  = E_{n,d}^\dagger T_{a,2} E_{n,d} = \cdots E_{n,d}^\dagger T_{a,n} E_{n,d}. 
\end{align}
Define $S_a := \sum_{r=1}^n T_{a,r}$. Differentiating the covariance condition~\eqref{eq:Ucovap} gives 
\begin{align}
    S_a E_{n,d} = E_{n,d} T_a, \quad E_{n,d}^\dagger T_{a,r} E_{n,d} = \frac{1}{n}T_a. 
\end{align}
 For $r=s$, we have 
\begin{align}
    E_{n,d}^\dagger T_{a,r} T_{b,r} E_{n,d} = E_{n,d}^\dagger (\delta_{ab} + (d_{abc}+if_{abc}) T_{c,r}) E_{n,d} =  \delta_{ab} + \frac{1}{n} (d_{abc}+if_{abc}) T_{c}. 
\end{align}
For $r\neq s$, we have 
\begin{align}
    E_{n,d}^\dagger \qty(\sum_{r}\sum_{s}T_{a,r} T_{b,s})E_{n,d} = E_{n,d}^\dagger S_a S_b E_{n,d} = T_a T_b = \delta_{ab}+(d_{abc}+if_{abc})T_c. 
\end{align}
Since 
\begin{align}
    E_{n,d}^\dagger \qty(\sum_{r}\sum_{s}T_{a,r} T_{b,s}) E_{n,d}= n\delta_{ab} +  (d_{abc}+if_{abc}) T_{c}  +  E_{n,d}^\dagger \qty(\sum_{r\neq s}T_{a,r} T_{b,s}) E_{n,d},
\end{align}
we have 
\begin{align}
    E_{n,d}^\dagger T_{a,r} T_{b,s} E_{n,d} = -\frac{1}{n}\delta_{ab}. \quad (r\neq s)
\end{align}
Comparing with the upper bound argument, we get
\begin{align}
    c_r = \frac{1}{n}, \quad K_{rr} = 1, \, K_{rs} = -\frac{1}{n} \, (r\neq s), \quad S = L = \frac{1}{n} I.
\end{align}
This gives 
\begin{align}
\begin{dcases}
    K' = K -cc^T = (1+\frac{1}{n})I-|u\rangle \langle u|-\frac{1}{n}|u\rangle \langle u| = \qty(1+\frac{1}{n})P_\perp \\
    L' = L - cc^T = \frac{1}{n}I-\frac{1}{n}|u\rangle \langle u| = \frac{1}{n} P_\perp \\
    S' = S - cc^T = \frac{1}{n}I-\frac{1}{n}|u\rangle \langle u| = \frac{1}{n} P_\perp, 
\end{dcases}
\end{align}
where $|u\rangle = \frac{1}{\sqrt{n}} \sum_{r=1}|r\rangle$, and $P_\perp = I -|u\rangle \langle u|$. 
This gives 
\begin{align}
\begin{dcases}
    A_0 = \qty(1+\frac{\Delta}{n})P_\perp\\
    A_+ = P_\perp\\
    A_- = P_\perp.  
\end{dcases}
\end{align}
The choice 
\begin{align}
    \hat{R} = \frac{P_\perp}{n-1}
\end{align}
gives $f_{\mathrm{root}}(A_0,\hat{R}) = \sqrt{\Tr A_0 \Tr \hat{R}},\, f_{\mathrm{root}}(A_+, \hat{R}) = \sqrt{\Tr A_+ \Tr \hat{R}}, \,f_{\mathrm{root}}(A_-, \hat{R}) = \sqrt{\Tr A_- \Tr \hat{R}}$ so the Bény-Oreshkov argument shows that this entanglement-assisted $\mathrm{SU}(d)$ covariant code saturates the upper bound. The Bény–Oreshkov duality thus guarantees the existence of a recovery channel attaining the sequential upper bound to the first order. 
\end{proof}

\subsection{Quantum circuit for noisy unitary purification}
We now provide a concrete encoding and decoding circuit that implements the noisy unitary purification protocol saturating the bound in Theorem 2. The circuit is independent of $p$, and thus the protocol does not require the knowledge of the noise strength in advance. 
\subsubsection{Registers and Young diagrams}
Fix integers
\begin{align}
 d\ge2,\qquad k\ge1,\qquad n=kd+1.
\end{align}
Consider the Bratteli diagram of the partially transposed permutation matrix algebra $\mathcal{A}_{n,1}^d$. Let the $n-1, n, n+1$-th level irreps be 
\begin{align}
    &\lambda^- = (k^d),\, \mu^- = (k+1, k^{d-2}, k-1) \nonumber\\
    &\alpha = (k+1, k^{d-1}), \, \beta = (k+2, k^{d-2}, k-1), \, \gamma = (k+1, k+1, k^{d-3}, k-1) \nonumber\\
    &\hat{\lambda} = (\lambda^-, \phi), \, \hat{\mu} = (\mu^-, \phi).
\end{align}
The Weyl dimension formula gives
\begin{align}
    d_\alpha = d, \, d_\beta = \frac{(d-1)d(d+2)}{2}, \, d_\gamma = \frac{(d-2)d(d+1)}{2}. 
\end{align}
We note that for $d=2$, the irrep $\gamma$ does not exist, but $d_\gamma=0$ so the subsequent discussions are not affected.  

\begin{figure}[h]
\centering
\includegraphics[width=0.45\linewidth]
{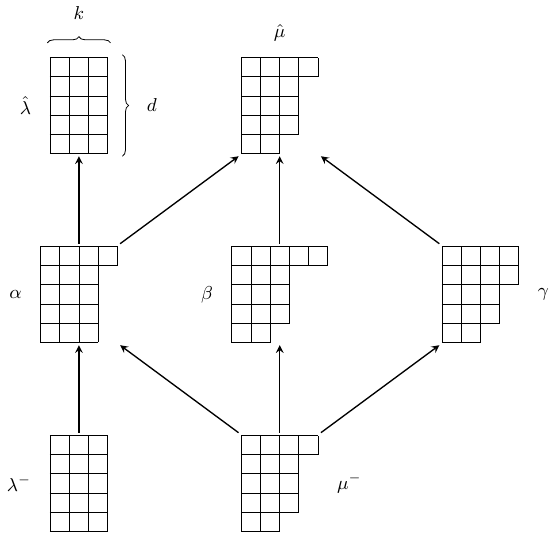}
\caption{The path of Bratteli diagram $\mathcal{A}_{n,1}^d$, used in the analysis of the fidelity. The empty diagram $\phi$ is omitted from the figure.}
\label{fig:young-branching-diagram}
\end{figure}

\subsubsection{Encoder}
Observe that $\mathcal{U}_\alpha \otimes \bar{\Box} = \mathcal{U}_{\hat{\lambda}} \oplus \mathcal{U}_{\hat{\mu}}$. We define the encoding isometry by 
\begin{align}
    E_{n,d}|\psi\rangle &= \frac{1}{\sqrt{m_\alpha}} \sum_{s_\alpha} U_{\mathrm{Sch}}^{(n)\dagger}[|s_\alpha \rangle \otimes |\alpha \rangle \otimes \langle \alpha| \mathrm{CG}_{\lambda^-, \Box}(|u_{\lambda^-} \rangle \otimes |\psi\rangle)] \otimes |s_{\alpha} \to \hat{\lambda} \rangle \nonumber\\
    &= \sqrt{\frac{d}{m_\alpha}} \sum_{s_\alpha} U_{\mathrm{Sch}}^{(n)\dagger}[|s_\alpha \rangle \otimes |\alpha \rangle \otimes (\mathbb{I}_{\mathcal{U}_\alpha}\otimes \langle \overline{\psi}|) \mathrm{CG}_{\alpha, \bar{\Box}}^\dagger|u_\alpha \rangle] \otimes |s_{\alpha} \to \hat{\lambda} \rangle \nonumber\\
    &= \sqrt{\frac{d}{m_\alpha}} \sum_{s_\alpha} (\mathbb{I}_{\mathcal{U}_\alpha}\otimes \langle \overline{\psi}|) U_{\mathrm{mSch}}^{(n,1)\dagger}[|s_{\alpha} \to \hat{\lambda} \rangle \otimes |\hat{\lambda} \rangle \otimes |u_{\hat{\lambda}}\rangle] \otimes |s_\alpha \to \hat{\lambda} \rangle.
\end{align}

\subsubsection{Decoder}
Let $\Path{\delta}$ denote the set of paths in the Bratteli diagram of $\mathcal{A}_{n,1}^d$ ending with the vertex $\delta$ (See Appendix~\ref{ap:preliminaries} for details). 
\begin{align}
    R_\delta^\eta := \sum_{s_\delta \in \Path{\delta}} |s_\delta \to \eta \to \hat{\mu}\rangle \langle s_\delta \to \alpha \to \hat{\lambda}|, 
\end{align}
where $\delta$ runs over the $n-1$-th level, and $\eta$ runs over the $n$-th level. Using this definition, let us further define 
\begin{align}
    B_n = R_{\lambda^-}^\alpha - \frac{1}{\Delta}R_{\mu^-}^\alpha + \frac{\sqrt{dd_\beta}}{\Delta} R_{\mu^-}^\beta + \frac{\sqrt{dd_\gamma}}{\Delta} R_{\mu^-}^\gamma. 
\end{align}
and 
\begin{align}
    B_i = \rho_{\hat{\mu}}(\pi_i^{-1})B_n \rho_{\hat{\lambda}}(\pi_i), 
\end{align}
where $\pi_i=(i,n)$ is the transposition. Define 
\begin{align}
    |\Omega_\alpha \rangle = \frac{1}{\sqrt{m_\alpha}}\sum_{s_\alpha} |s_\alpha \to \hat{\lambda} \rangle \otimes |s_\alpha \rangle 
\end{align}
and 
\begin{align}
    |h_{i,\eta} \rangle = \sum_{s_\eta \in \Path{\eta}} \sqrt{\frac{d \Delta}{m_\alpha d_\eta}} B_i^\dagger |s_\eta \to \hat{\mu} \rangle \otimes |s_\eta\rangle  \quad \eta \in \{ \alpha, \beta, \gamma \}
\end{align}
\begin{align}
    &|g_{i, \alpha}\rangle = (I-|\Omega_\alpha \rangle \langle \Omega_\alpha|)|h_{i, \alpha}\rangle \\
    &|g_{i, \eta}\rangle = |h_{i, \eta}\rangle \quad \eta \in \{ \beta, \gamma \}.
\end{align}
We define the partial isometry 
\begin{align}
    &V_\alpha = |0, \lambda^- \rangle \langle \Omega_\alpha| + \frac{1}{\sqrt{1+\frac{\Delta}{n}}}\sum_i | i, \mu^- \rangle \langle g_{i, \alpha}|\\
    & V_\eta = \sum_i |i, \mu^- \rangle \langle g_{i, \eta}| \quad \eta \in \{\beta, \gamma \}. 
\end{align}
We can then complete $V_\eta$ to produce the full isometry $\widetilde{V}_\eta$. The decoding operation is then 
\begin{align}
    D_{n,d}|o_1, \cdots o_n \rangle |\psi\rangle_E = \mathrm{CG}^\dagger \qty[\bigoplus_\eta \widetilde{V}_\eta](U_{\mathrm{Sch}}^{(n)}\otimes \mathbb{I}_E)|o_1, \cdots o_n \rangle |\psi\rangle_E. 
\end{align}

\subsubsection{Evaluation of the fidelity}
\noindent Let us compute the fidelity gate-by-gate. 
\begin{align}
    T_{a,i}E_{n,d}|\psi\rangle &= \sqrt{\frac{d}{m_\alpha}}\sum_{s_\alpha} (\mathbb{I}_d^{\otimes n}\otimes \langle \psi|)(T_{a,i}\otimes \mathbb{I}_P)U_{\mathrm{mSch}}^{(n,1)^\dagger}|s_\alpha \to \hat{\lambda}, \hat{\lambda}, u_{\hat{\lambda}}\rangle \otimes |s_\alpha \to \hat{\lambda}\rangle_E \nonumber\\ 
    &= \sqrt{\frac{d}{m_\alpha}}\sum_{s_\alpha}\sum_\eta \sum_{t_\eta} (\mathbb{I}_d^{\otimes n}\otimes \langle \psi|)\langle t_\eta \to \hat{\mu}|B_i|s_\alpha \to \hat{\lambda}\rangle U_{\mathrm{mSch}}^{(n,1)^\dagger}(|t_\eta \to \hat{\mu}\rangle \otimes |\hat{\mu}\rangle \otimes |a\rangle) \otimes |s_\alpha \to \hat{\lambda}\rangle_E \nonumber\\ 
     &= \sqrt{\frac{d}{m_\alpha}}\sum_{s_\alpha}\sum_\eta \sum_{t_\eta} (\mathbb{I}_d^{\otimes n}\otimes \langle \psi|)\langle t_\eta \to \hat{\mu}|B_i|s_\alpha \to \hat{\lambda}\rangle (U_{\mathrm{Sch}}^{(n)^\dagger}\otimes \mathbb{I}_{\bar{P}})(|t_\eta\rangle \otimes |\eta\rangle \otimes \mathrm{CG}_{\eta \bar{\Box}}^\dagger|a\rangle) \otimes |s_\alpha \to \hat{\lambda}\rangle_E \nonumber\\ 
     &= \sum_{s_\alpha}\sum_\eta \sum_{t_\eta} \sqrt{\frac{d\Delta}{m_\alpha d_\eta}} \langle t_\eta \to \hat{\mu}|B_i|s_\alpha \to \hat{\lambda}\rangle (U_{\mathrm{Sch}}^{(n)^\dagger}\otimes \mathbb{I}_{\bar{P}})(|t_\eta\rangle \otimes |\eta\rangle \otimes \mathrm{CG}_{\mu^- \Box}(|a\rangle \otimes |\psi\rangle)) \otimes |s_\alpha \to \hat{\lambda}\rangle_E. 
\end{align}
Thus, we have 
\begin{align}
    U_{\mathrm
    {Sch}}^{(n)}T_{a,i}E_{n,d}|\psi\rangle= \sum_\eta |h_{i,\eta}\rangle \otimes |\eta \rangle \otimes \mathrm{CG}_{\mu^- \Box}(|a\rangle \otimes |\psi\rangle), 
\end{align}
where
\begin{align}
    |h_{i,\eta} \rangle =  \sum_{t_\eta \in \Path{\eta}} \sum_{s_\alpha}\sqrt{\frac{d \Delta}{m_\alpha d_\eta}}\langle t_\eta \to \hat{\mu}|B_i|s_\alpha \to \hat{\lambda}\rangle |t_\eta \to \hat{\mu} \rangle \otimes |s_\alpha \to \hat{\lambda}\rangle_E  \quad \eta \in \{ \alpha, \beta, \gamma \}. 
\end{align}
We have 
\begin{align}
    &|h_{i,\alpha}\rangle = |g_{i,\alpha}\rangle + \frac{\sqrt{\Delta}}{n} |\Omega_\alpha \rangle\\
    &|h_{i,\eta} \rangle = |g_{i,\eta}\rangle  \quad \eta \in \{\beta, \gamma \}. 
\end{align}
One can show the following: 
\begin{prop}\label{prop:goverlap}
\begin{align}
    \langle g_{i,\eta}| g_{j,\eta} \rangle = a_\eta (\delta_{ij}-\frac{1}{n}), 
\end{align}
where 
\begin{align}
    a_\alpha = 1 + \frac{\Delta}{n}, \quad a_\beta = a_\gamma = 1. 
\end{align}
\end{prop}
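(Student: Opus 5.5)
Since $|h_{i,\eta}\rangle$ is the $\eta$-component of $U_{\mathrm{Sch}}^{(n)}T_{a,i}E_{n,d}|\psi\rangle$, the plan is to compute the overlaps $\langle h_{i,\eta}|h_{j,\eta}\rangle$ in two ways, or equivalently to evaluate them directly from the definition of $B_i$. Neither the Bratteli-diagram bookkeeping of $|s_\alpha\to\hat\lambda\rangle$ nor any explicit overlap of $B_n$ needs to be evaluated by hand. The key observation is that, from the expansion derived just above,
\begin{align}
\langle \psi|E_{n,d}^\dagger T_{a,i}T_{b,j}E_{n,d}|\psi'\rangle = \sum_\eta \langle h_{i,\eta}|h_{j,\eta}\rangle\, \langle\!\langle a,\psi|\eta\text{-block}|b,\psi'\rangle\!\rangle ,
\end{align}
where the last factor is $\langle a,\psi|\mathrm{CG}^\dagger_{\mu^-\Box}\Pi_\eta\mathrm{CG}_{\mu^-\Box}|b,\psi'\rangle$ with $\Pi_\eta$ the projector onto the $\eta$-summand of $\mathcal U_{\mu^-}\otimes\Box$. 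The left-hand side is already known from the proof of Theorem S.2 (information-theoretic part), since the encoder there is covariant with the same Young projector property. For $i=j$ it equals $\delta_{ab}\mathbb I+\frac1n(d_{abc}+if_{abc})T_c$, and for $i\neq j$ it equals $-\frac1n\delta_{ab}\mathbb I$. Because the $\eta$-dependence and the $(i,j)$-dependence factorize, the scalars $\langle h_{i,\eta}|h_{j,\eta}\rangle$ are uniquely determined once we identify the $\eta$-blocks.

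The first step is to identify the three $\eta$-projections with the three irreps of Lemma S.6. Via the map $|a\rangle\otimes|\psi\rangle\mapsto \mathrm{CG}_{\mu^-\Box}(|a\rangle\otimes|\psi\rangle)$, the space $\mathrm{Adj}\otimes\Box$ (up to conjugation, $\mathrm{Adj}\otimes\bar\Box$) decomposes as $\alpha\leftrightarrow\bar\Box$ (dimension $d$), $\beta\leftrightarrow R_+$ and $\gamma\leftrightarrow R_-$; the dimensions $d_\beta,d_\gamma$ match. Consequently the operator $\sum_{ab}|a\rangle\langle b|\otimes(\delta_{ab}+x(d_{abc}+if_{abc})T_c)$ acts on block $\eta$ as the scalar $1+x(x_d+x_f)_\eta$, where $(x_d+x_f)_\eta$ are the eigenvalue sums from Lemma S.6: $d^2-2$ on $\bar\Box$ and $-1$ on $R_\pm$, up to the transpose/sign convention, which I will fix by checking the $\bar\Box$ case. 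Matching the blocks, and tracking the normalization $\sqrt{d\Delta/(m_\alpha d_\eta)}$ so that the block-trace factors cancel, gives
\begin{align}
\langle h_{i,\eta}|h_{i,\eta}\rangle=1+\tfrac1n(x_d+x_f)_\eta ,\qquad \langle h_{i,\eta}|h_{j,\eta}\rangle=-\tfrac1n \quad (i\ne j).
\end{align}
Thus $\langle h_{i,\eta}|h_{j,\eta}\rangle=(\delta_{ij}-\frac1n)$ for $\eta=\beta,\gamma$, and $\langle h_{i,\alpha}|h_{j,\alpha}\rangle = \delta_{ij}(1+\frac{\Delta-1}{n}) - \frac1n$.

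The second step handles $\eta=\alpha$ by subtracting the $|\Omega_\alpha\rangle$ component. Using $|h_{i,\alpha}\rangle=|g_{i,\alpha}\rangle+\frac{\sqrt\Delta}{n}|\Omega_\alpha\rangle$ with $g\perp\Omega_\alpha$, we get $\langle g_i|g_j\rangle=\langle h_i|h_j\rangle-\frac{\Delta}{n^2}$. This yields $\delta_{ij}(1+\frac{\Delta-1}{n})-\frac{1}{n}-\frac{\Delta}{n^2}$. Writing $1+\frac{\Delta-1}{n}=(1+\frac{\Delta}{n})-\frac1n$, the result should rearrange to $(1+\frac\Delta n)(\delta_{ij}-\frac1n)$. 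Checking: $(1+\frac\Delta n)\delta_{ij}-\frac1n\delta_{ij}-\frac1n-\frac\Delta{n^2}$ differs from the target by $-\frac1n\delta_{ij}$, so my diagonal value must be off by $\frac1n$. I expect the correct $\bar\Box$ block value is $1+\frac{\Delta}{n}$, that is, $(x_d+x_f)=\Delta$ in the transposed convention, consistent with $JJ^\dagger$ acting as $\Delta$ on $\bar\Box$. The main obstacle is precisely this convention bookkeeping: fixing the transpose and sign conventions between $T_c$ and $T_c^T$, and the block normalizations, so that the $\eta$-block scalars come out as $\Delta$ for $\alpha$ and $-1$ for $\beta,\gamma$. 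I will verify this by the trace identity $\sum_a T_aT_a=\Delta\mathbb I$ restricted to each block, and then read off $a_\alpha=1+\frac\Delta n$ and $a_\beta=a_\gamma=1$.
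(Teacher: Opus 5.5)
Your first step is right, but the last step goes wrong, and the remedy you propose is incorrect. With $x_d+x_f=\Delta-1$ on the $d$-dimensional block (Lemma~\ref{lm:irrepdecomp}), you correctly get $\langle h_{i,\alpha}|h_{i,\alpha}\rangle=1+\frac{\Delta-1}{n}$. This is exactly the value the paper obtains from hook lengths. You also correctly get $\langle h_{i,\alpha}|h_{j,\alpha}\rangle=-\frac1n$ for $i\neq j$. These combine to $\delta_{ij}\bigl(1+\frac{\Delta}{n}\bigr)-\frac1n$, not $\delta_{ij}\bigl(1+\frac{\Delta-1}{n}\bigr)-\frac1n$. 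The latter subtracts $\frac1n$ twice on the diagonal and contradicts your own diagonal value; for $\beta,\gamma$ you combined correctly. With the correct combination, removing the $|\Omega_\alpha\rangle$ component gives $\delta_{ij}\bigl(1+\frac{\Delta}{n}\bigr)-\frac1n-\frac{\Delta}{n^2}=\bigl(1+\frac{\Delta}{n}\bigr)\bigl(\delta_{ij}-\frac1n\bigr)$ at once.

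So the missing $\frac1n\delta_{ij}$ is an arithmetic slip, not a transpose or sign issue. Your proposed correction $x_d+x_f=\Delta$ on $\bar\Box$ is false: $JJ^\dagger=\Delta\,\Pi_{\bar\Box}$ gives $1+x_d+x_f=\Delta$, not $x_d+x_f=\Delta$. Equivalently, $\langle h_{i,\alpha}|h_{i,\alpha}\rangle=1+\frac{\Delta}{n}$ is wrong. That ``fix'' reproduces the target only because it cancels your combination slip. Combined correctly, it would give diagonal $1+\frac{\Delta}{n}-\frac{\Delta}{n^2}$ against off-diagonal $-\frac1n-\frac{\Delta}{n^2}$, which is not of the form $a(\delta_{ij}-\frac1n)$.

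The trace check you mention would have caught this. Apply $\Tr_{\mathrm{Adj}}$ to $\sum_{a,b}|a\rangle\langle b|\otimes\bigl(\delta_{ab}\mathbb{I}+\frac1n(d_{abc}+if_{abc})T_c\bigr)=\sum_\eta c_\eta\Pi_\eta$. This gives $c_\alpha+\frac{d_\beta+d_\gamma}{d}\bigl(1-\frac1n\bigr)=\Delta$. Since $\frac{d_\beta+d_\gamma}{d}=\Delta-1$, this forces $c_\alpha=1+\frac{\Delta-1}{n}$.

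There is also a smaller gap. You quote $|h_{i,\alpha}\rangle=|g_{i,\alpha}\rangle+\frac{\sqrt\Delta}{n}|\Omega_\alpha\rangle$, but the paper establishes this overlap inside the proof of this very proposition, via $m_{\lambda^-}/m_\alpha$, $m_{\mu^-}/m_\alpha$ and the coefficients of $B_n$. Your framework supplies it directly. The no-error vector $U_{\mathrm{Sch}}E_{n,d}|\psi'\rangle$ has $|\Omega_\alpha\rangle$ as its environment--path part. Hence $\frac1n\langle\psi'|T_a|\psi\rangle=\langle\psi'|E_{n,d}^\dagger T_{a,i}E_{n,d}|\psi\rangle$ equals $\langle\Omega_\alpha|h_{i,\alpha}\rangle$ times the $\alpha$-block factor. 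After identifying $\mathcal{U}_\alpha\simeq\mathbb{C}^d$ as the encoder does, Schur's lemma makes that factor $\kappa T_a$. The condition $\Tr_{\mathrm{Adj}}\Pi_\alpha=\mathbb{I}$ gives $|\kappa|^2\Delta=1$. So $|\langle\Omega_\alpha|h_{i,\alpha}\rangle|=\frac{\sqrt\Delta}{n}$ for every $i$, which is all the subtraction uses.

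Once repaired, your argument is a genuinely different proof from the paper's. The paper obtains $G_\eta=a^{\mathrm{triv}}_\eta|+\rangle\langle+|+a_\eta P_\perp$ from $\mathfrak{S}_n$-covariance of the frame vectors and Schur's lemma. It then shows the trivial eigenvalue vanishes and fixes $a_\eta$ from one norm, computed explicitly with hook-length ratios and the coefficients of $B_n$. You instead read off the whole $h$-Gram matrix, off-diagonal entries included. You do this by matching the operator identities for $E_{n,d}^\dagger T_{a,i}T_{b,j}E_{n,d}$ from the information-theoretic proof against the three multiplicity-free blocks of $\mathrm{Adj}\otimes\Box$. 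This bypasses all Young--Yamanouchi bookkeeping and shows directly that the circuit realizes the $K,S,L$ of the B\'eny--Oreshkov argument. The price is that you take the expansion of $U_{\mathrm{Sch}}T_{a,i}E_{n,d}|\psi\rangle$ in terms of the $|h_{i,\eta}\rangle$ as an input. The paper's computation works from the definition of $|h_{i,\eta}\rangle$ through $B_n$ alone.
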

\noindent The proof is deferred to the end of this section. Using the above proposition, one sees that the isometry 
\begin{align}
    &V_\alpha = |0, \lambda^- \rangle \langle \Omega_\alpha| + \frac{1}{\sqrt{1+\frac{\Delta}{n}}}\sum_i | i, \mu^- \rangle \langle g_{i, \alpha}|\\
    & V_\eta = \sum_i |i, \mu^- \rangle \langle g_{i, \eta}| \quad \eta \in \{\beta, \gamma \}. 
\end{align}
 acts as 
 \begin{align}
     &V_\alpha|h_{i,\alpha} \rangle = \frac{\sqrt{\Delta}}{n} |0, \lambda^-\rangle + \sqrt{1+\frac{\Delta}{n}} |e_i, \mu^-\rangle \\
     &V_\eta |h_{i,\eta}\rangle = |e_i, \mu^-\rangle \quad \eta \in \{\beta, \gamma \}, 
 \end{align}
where $e_i$ is the basis for the standard representation of $\mathfrak{S}_n$, $|e_i\rangle = |i\rangle - \frac{1}{n}\sum_j |j\rangle$. Therefore, the full action of the encoding, noise, decoding is given in the following form: 
\begin{align}
    D_{n,d} T_{a,i} E_{n,d}|\psi\rangle &=\frac{\sqrt{\Delta}}{n}|0, \lambda^-\rangle \otimes \mathrm{CG}_{\lambda^- \Box}^\dagger[|\alpha \rangle \otimes \mathrm{CG}_{\mu^- \Box}(|a\rangle \otimes |\psi\rangle)] + |e_i, \mu^-\rangle \otimes \qty[\sqrt{1+\frac{\Delta}{n}} \Pi_\alpha+ \Pi_\beta + \Pi_\gamma](|a\rangle \otimes |\psi\rangle) \nonumber\\
    &= |0, \lambda^-\rangle \otimes \frac{1}{n} |u_{\lambda^-}\rangle \otimes T_a |\psi\rangle + |e_i, \mu^-\rangle \otimes \qty[\sqrt{1+\frac{\Delta}{n}} \Pi_\alpha+ \Pi_\beta + \Pi_\gamma](|a\rangle \otimes |\psi\rangle)
\end{align}
Thus, the fidelity can be calculated as 
\begin{align}
    f_{n,d}(p) = 1- np(1-f_1)+O(p^2), 
\end{align}
where 
\begin{align}
    f_1 = \frac{1}{d^2} + \frac{1}{d^4} \cdot \frac{n-1}{n} \cdot \frac{\qty[\sqrt{1+\frac{\Delta}{n}} d_\alpha+ d_\beta + d_\gamma]^2}{\Delta}.
\end{align}
\subsubsection{Proof of Proposition~\ref{prop:goverlap}}
\begin{proof}
From the definition of $B_n$, one has $B_{\sigma(i)} = \rho_{\hat{\mu}}(\sigma^{-1})B_i \rho_{\hat{\lambda}}(\sigma)$ for $\sigma \in \mathfrak{S}_{n-1}\subset \mathfrak{S}_n$. Using the branching isometry $\mathcal{J}_\eta: \mathcal{S}_\eta \to \mathcal{S}_{\hat{\mu}}$ defined by $\mathcal{J}_\eta|s_\eta\rangle \to |s_\eta \to \hat{\mu}\rangle$, one has the expression
\begin{align}
    |h_{i,\eta}\rangle = \sum_{s_\eta} \sqrt{\frac{d \Delta}{m_\alpha d_\eta}} B_i^\dagger J_\eta|s_\eta\rangle_E \otimes |s_\eta \rangle. 
\end{align}
Since $\rho_{\hat{\mu}}(\sigma^{-1})J_\eta = J_\eta \rho_\eta(\sigma^{-1})$, and 
\begin{align*}
    \sum_{s_\eta} [\rho_\eta(\sigma)\otimes \overline{\rho_\eta(\sigma)}]|s_\eta\rangle \otimes |s_\eta\rangle  = \sum_{s_\eta} |s_\eta\rangle \otimes |s_\eta\rangle, 
\end{align*}
one gets 
\begin{align}
    |h_{\sigma(i),\eta} \rangle = [\rho_{\hat{\lambda}}(\sigma) \otimes \rho_{\eta}(\sigma)] | h_{i,\eta}\rangle 
\end{align}
Now, the restriction of $\mathcal{S}_{\hat{\lambda}}$ to the $n$-th level is $\mathcal{S}_\alpha$, and 
\begin{align*}
    |\Omega_\alpha \rangle = \frac{1}{\sqrt{m_\alpha}}\sum_{s_\alpha} |s_\alpha \to \hat{\lambda} \rangle \otimes |s_\alpha \rangle 
\end{align*}
is the unique vector that is invariant under $\rho_{\hat{\lambda}}(\sigma) \otimes \rho_{\eta}(\sigma)$. Thus, one also gets 
\begin{align}\label{eq:ginvariance}
    |g_{\sigma(i),\eta} \rangle = [\rho_{\hat{\lambda}}(\sigma) \otimes \rho_{\eta}(\sigma)] |g_{i,\eta}\rangle. 
\end{align}
Next, define 
\begin{align}
    \Pi(\sigma) = \sum_i |\sigma(i)\rangle \langle i| 
\end{align}
and 
\begin{align}
    A_{\eta} := \sum_i |g_{i,\eta}\rangle \langle i|, \quad G_\eta = A_\eta^\dagger A_\eta. 
\end{align}
Using $A_\eta$, one can compactly write the invariance in Eq.~\eqref{eq:ginvariance} as 
\begin{align}
    A_\eta \Pi(\sigma) = [\rho_{\hat{\lambda}}(\sigma) \otimes \rho_{\eta}(\sigma)] A_\eta. 
\end{align}
Therefore, for any $\sigma \in \mathfrak{S}_n$, one has 
\begin{align}
    [G_\eta, \Pi(\sigma)] = 0. 
\end{align}
The permutation representation decomposes multiplicity-freely as $\Pi \simeq \mathrm{triv} \oplus \mathrm{std}$, with the trivial space spanned by $|+\rangle = \frac{1}{\sqrt{n}}\sum_i |i\rangle$ and the standard space is the orthogonal complement of $|+\rangle$. Therefore, 
\begin{align}
    G_\eta = a_\eta^{\mathrm{triv}}|+\rangle \langle +| + a_\eta P_\perp, \quad a_\eta^{\mathrm{triv}}, a_\eta \geq 0, 
\end{align}
where $P_\perp  = \mathbb{I}-|+\rangle \langle +|$. 

We first determine the trivial eigenvalue. From Eq.~\eqref{eq:ginvariance}, $A_\eta\ket{+}$ is invariant under $\rho_{\hat{\lambda}}\otimes\rho_\eta$.
The restriction of $\rho_{\hat{\lambda}}$ from the $n+1$-level to the $n$-th level contains only $\rho_\alpha$, and hence Schur's lemma gives
\begin{align}
  A_\eta\ket{+}
  \propto
  \begin{cases}
    \ket{\Omega_{\lambda\alpha}} & (\eta=\alpha)\\
    0 & (\eta = \beta, \gamma)
  \end{cases}.
  \label{eq:direct-trivial-range}
\end{align}
However, when $\eta = \alpha$, every $|g_{i,\alpha}\rangle$ is orthogonal to $|\Omega_{\alpha}\rangle$ by definition. Therefore, $a_\eta^{\mathrm{triv}}=0$ and it remains to compute $a_\eta$. First, the hook length formula gives 
\begin{align}
    \frac{m_{\lambda^-}}{m_\alpha} = \frac{1}{n} \prod_{j=1}^k \frac{k-j+d+1}{k-j+d} = \frac{k+d}{dn} = \frac{n+\Delta}{d^2 n}, 
\end{align}
and 
\begin{align}
    \frac{m_{\mu^-}}{m_\alpha} = 1- \frac{m_{\lambda^-}}{m_\alpha} = \frac{(n-1)\Delta}{nd^2}. 
\end{align}
Now, consider the $\eta=\beta, \gamma$ branch. Since
\begin{align}
    \langle g_{i,\eta}|g_{i,\eta} \rangle = \langle g_{n,\eta}|g_{n,\eta} \rangle = \langle h_{n,\eta}|h_{n,\eta} \rangle= \frac{d \Delta}{m_\alpha d_\eta} \qty(\frac{\sqrt{d d_\eta}}{\Delta})^2 m_{\mu^-} = \frac{d^2}{\Delta} \frac{m_{\mu^-}}{m_\alpha} = \frac{n-1}{n}
\end{align}
we have 
\begin{align}
    (n-1)a_\eta = \Tr G_\eta \Leftrightarrow a_\eta = 1 \quad \eta \in \{\beta, \gamma\}. 
\end{align}
Similarly, we have 
\begin{align}
    \langle h_{i,\alpha}|h_{i,\alpha} \rangle = \langle h_{n,\alpha}|h_{n,\alpha} \rangle = \frac{d \Delta}{m_\alpha d_\alpha} \qty(m_{\lambda^-}+\frac{m_{\mu^-}}{\Delta^2}) 
    = {\Delta} \frac{m_{\lambda^-}}{m_\alpha}+\frac{1}{\Delta}\frac{m_{\mu^-}}{m_\alpha} = 1+ \frac{\Delta-1}{n}.
\end{align}
The overlap between $|\Omega_\alpha\rangle$ is
\begin{align}
    \langle \Omega_\alpha|h_{i,\alpha} \rangle = \langle \Omega_\alpha|h_{n,\alpha} \rangle = \frac{1}{\sqrt{m_\alpha}} \sqrt{\frac{d \Delta}{m_\alpha d_\alpha}} \qty(m_{\lambda^-}-\frac{m_{\mu^-}}{\Delta})= \sqrt{\Delta} \frac{m_{\lambda^-}}{m_\alpha}-\frac{1}{\sqrt{\Delta}}\frac{m_{\mu^-}}{m_\alpha} = \frac{\sqrt{\Delta}}{n}.
\end{align}
Therefore, 
\begin{align}
    |h_{i,\alpha} \rangle = |g_{i,\alpha} \rangle + \frac{\sqrt{\Delta}}{n} |\Omega_\alpha\rangle
\end{align}
and 
\begin{align}
    \langle g_{i,\alpha}|g_{i,\alpha}\rangle = \langle h_{i,\alpha}|(\mathbb{I}-|\Omega_\alpha \rangle \langle \Omega_\alpha|)|h_{i,\alpha}\rangle = \langle h_{i,\alpha}|h_{i,\alpha} \rangle - |\langle h_{i,\alpha}|\Omega_\alpha\rangle|^2 = \frac{n-1}{n}\qty(1+\frac{\Delta}{n}).
\end{align}
This implies 
\begin{align}
    a_\alpha = 1 + \frac{\Delta}{n}. 
\end{align}
\end{proof}

\section{Proof of Theorem 3}
Here we prove Theorem 3. Analogous to noisy unitary purification, we first prove the upper bound on the achievable fidelity within sequential strategies for all $n\geq d-1$. 
We then prove a complementary lower bound that provides a concrete $\mathrm{SU}(d)$-covariant parallel protocol attaining the fidelity upper bound. 

\begin{thm}[Asymptotically optimal fidelity for noisy unitary conjugation]
The leading order coefficient $C'_{n,d}$ satisfies 
\begin{align}
C'_{n,d} \geq  \frac{1}{d^2}\qty[n(d^2-1) - \frac{\mathfrak{B}_{n,d}^2}{d^2-1}], 
\end{align}
where 
\begin{align}
\mathfrak{B}_{n,d} = \sqrt{n+2-\frac{d^2-1}{n}} +\frac{(d-1)(d+2)}{2}\sqrt{n-d+1} + \frac{(d-2)(d+1)}{2}\sqrt{n+d+1}
\end{align}
for all $n\geq d-1, \, n\in \mathbb{N}$. If $n=kd-1, k\in \mathbb{N}$, there is an $\mathrm{SU}(d)$-covariant parallel strategy that saturates the upper bound. Consequently, the optimal fidelity of the noisy unitary conjugation problem has the following form: 
\begin{align}
    g_{n,d}(p) = 1- \qty[\frac{(d^2-1)(d^2+2)}{4d^2n} + o(n^{-1})]p + o(p).
\end{align}
\end{thm}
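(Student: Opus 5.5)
The plan is to repeat the Theorem~\ref{thm:main} argument with the target $\mathcal{U}$ replaced by $\mathcal{U}^*$, and then to run the construction of Theorem~2 with the encoder adapted to the conjugate irrep.

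\emph{Upper bound (all $n\ge d-1$).} By Corollary~\ref{cor:noiselessconjugate} we may restrict to combs with $\Xi(\mathcal{U}^{\otimes n})=\mathcal{U}^*$, twirl them, and dilate to $V_{\mathrm{Pur}}$. The covariance now reads $(C^*_F\otimes S_C)V_0C_P^{T}=V_0$, so Lemma~\ref{lm:covariance} carries over with $T_a$ replaced by the conjugate generator $-T_a^T$ on the output. Concretely, $V_0^\dagger V_{a,r}=c_r(-T_a^T)$ up to the identification of input and output, with $K_{rr}=1$ and $S_{rr},L_{rr}$ fixed by $c_r$. The sign pattern changes, because $(-T^T_a)(-T^T_b)$ has structure constants $d_{abc}$ and $-if_{abc}$ relative to $-T^T_c$. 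Differentiating $V_{\mathrm{Pur}}(U,\dots,U)=U^*\otimes|\eta\rangle$ again gives $\sum_r c_r=1$.

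Lemmas~\ref{lm:truncation}--\ref{lm:LPsolution} are insensitive to which irrep is targeted, except in one place. The reference now carries $\Box$ instead of $\bar\Box$, so $\rho_{ER}$ is organised by $\mathrm{Adj}\otimes\Box\simeq\Box\oplus R'_+\oplus R'_-$, the conjugate decomposition with the same dimensions $d$, $\tfrac{(d-1)d(d+2)}{2}$, $\tfrac{(d-2)d(d+1)}{2}$. I would redo the Casimir computation of Lemma~\ref{lm:irrepdecomp} with the adjusted signs, obtaining new linear combinations $A_0,A_\pm$ of $K',S',L'$. Their traces, evaluated with $K_{rr}=1$ and the diagonal constraints, should give
\[
\Tr A_0=n+2-\Delta s,\qquad \Tr A_+=n-d+1,\qquad \Tr A_-=n+d+1,
\]
where $s=\sum_r c_r^2\ge 1/n$. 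These are exactly the three radicands in $\mathfrak{B}_{n,d}$.

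The remaining steps match Lemma~\ref{lm:LPsolution}. The bound $f_{\mathrm{root}}\le\sqrt{\Tr\cdot\Tr}$ gives $\mathcal{S}\le\mathfrak{B}_{n,d}$, and optimizing $z$ then yields $C'_{n,d}\ge\frac{1}{d^2}[n\Delta-\mathfrak{B}_{n,d}^2/\Delta]$.

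\emph{Main obstacle.} The main obstacle is the sign bookkeeping: checking which of $R'_\pm$ receives $S'$ and $L'$ with which sign, so that the constants come out to $-d+1$ and $+d+1$, and checking that $A_\pm\ge0$ still holds via the analogue of Eq.~\eqref{eq:KSLprime}.

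\emph{Achievability for $n=kd-1$.} Take $\alpha=(k^{d-1},k-1)$, so that $\mathcal{U}_\alpha\simeq\bar\Box\otimes\det^k$ and $f_\alpha(U)=(\det U)^kU^*$. Use the same entanglement-assisted encoder $E_{n,d}$ with environment maximally entangled with $\mathcal{S}_\alpha$. By the projector identity, $E^\dagger T_{a,r}E=-\frac1n T_a^T$, and as in Theorem~2 the cross terms satisfy $E^\dagger T_{a,r}T_{b,s}E=-\frac1n\delta_{ab}$ for $r\ne s$. Hence $c_r=1/n$, $K=(1+\frac1n)I-|u\rangle\langle u|$, and $S,L\propto I/n$ with the new signs. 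All $A$'s are then multiples of $P_\perp$, the choice $\hat R=P_\perp/(n-1)$ saturates every Cauchy--Schwarz step, and Bény--Oreshkov guarantees a recovery attaining the bound. The condition $n\ge d-1$ is also what makes the $R'_+$ trace $n-d+1$ nonnegative.

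\emph{Asymptotics.} Expand $\mathfrak{B}_{n,d}=(\Delta-1)\sqrt n+\sqrt n\,\big(1+\tfrac{1}{n}+O(n^{-2})\big)+\dots$ to second order in $1/n$ and substitute into $n\Delta-\mathfrak{B}^2/\Delta$. The $O(n)$ terms cancel, and the constant should equal $(d^2-1)(d^2+2)/4$. For general $n$, discarding queries down to $kd-1$ gives a matching upper bound to leading order, which yields the stated form of $g_{n,d}(p)$.
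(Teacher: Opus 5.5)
Your upper bound and the reduction of general $n$ to $n=kd-1$ by discarding queries follow the paper's proof. So does your achievability route: the entanglement-assisted encoder on $\alpha=(k^{d-1},k-1)$ followed by B\'eny--Oreshkov. Your $\sum_r c_r=1$ with $V_0^\dagger V_{a,r}=c_r(-T_a^T)$ is the paper's $\sum_r c_r=-1$ with $V_0^\dagger V_{a,r}=c_rT_a^*$.

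One sign is off. Relative to $-T_c^T$, the product $(-T_a^T)(-T_b^T)$ has coefficients $-d_{abc}$ and $+if_{abc}$, not $d_{abc}$ and $-if_{abc}$. With the correct sign, the paper's convention gives $S'=S-cc^T$ and $L'=L+cc^T$. Then $\Tr K'=n-s$, $\Tr S'=-1-s$, $\Tr L'=-1+s$ yield exactly your three traces. With your stated sign, spurious $s$-dependence would appear, e.g.\ $\Tr A_+=n-d+1-2s$.

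The genuine gap is in the achievability step, where you import the cross terms from Theorem~2. For conjugation, $(U^{\otimes n}\otimes\mathbb{I}_E)E=EU^*$ gives $S_aE=-ET_a^*$, so
\[
E^\dagger S_aS_bE=T_a^*T_b^*=\delta_{ab}+(d_{abc}-if_{abc})T_c^*,\qquad \sum_r E^\dagger T_{a,r}T_{b,r}E=n\delta_{ab}-(d_{abc}+if_{abc})T_c^*.
\]
In the difference, the $f$-terms cancel but the $d$-terms add. Permutation invariance then gives, for $r\neq s$,
\[
E^\dagger T_{a,r}T_{b,s}E=-\frac{1}{n}\delta_{ab}+\frac{2d_{abc}}{n(n-1)}T_c^*.
\]
So for $d\ge 3$, $S$ has off-diagonal entries $\pm\frac{2}{n(n-1)}$ and is not proportional to $I$.

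This matters. With your diagonal $S$, $S'$ would not be supported on $P_\perp$. Unlike in Theorem~2, the $cc^T$ shift no longer cancels the diagonal along $|u\rangle$, so $S'$ would carry an extra $\pm\frac{2}{n}|u\rangle\langle u|$ component. The $A$'s would then not be multiples of $P_\perp$. Indeed $A_+$ would have eigenvalue $-(d-2)/n$ on $|u\rangle$, which is impossible for a Gram matrix. So saturation by $\hat R=P_\perp/(n-1)$ does not follow from the values you state.

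The correct entries give $S'=-\frac{n+1}{n(n-1)}P_\perp$, together with $K'=(1+\frac1n)P_\perp$ and $L'=-\frac1nP_\perp$, as in the paper, and then your saturation argument goes through. A cleaner repair avoids the entries altogether:
\begin{itemize}
\item $S_aE=-ET_a^*$ and $\sum_rc_r=-1$ give $\sum_rW_{a,r}=0$, so $K'$, $S'$, $L'$ all annihilate $|u\rangle$.
\item Permutation invariance then forces each of them to be a multiple of $P_\perp$.
\item The traces are fixed by the diagonal constraints and $s=1/n$, so no off-diagonal values are needed.
\end{itemize}

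Finally, in the asymptotics the $1/n$ correction to $\mathfrak{B}_{n,d}/\sqrt n$ vanishes identically. Hence both the $O(n)$ and the $O(1)$ parts of $n\Delta-\mathfrak{B}_{n,d}^2/\Delta$ cancel. The quantity that equals $(d^2-1)(d^2+2)/4$ is the coefficient of $1/n$, and computing it needs the second-order terms of all three square roots.
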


\subsection{Upper bound on the achievable fidelity}
\begin{lm}[Sequential upper bound] 
For all $n\geq d-1$, the first-order infidelity coefficient satisfies 
\begin{align}
C'_{n,d} \geq  \frac{1}{d^2}\qty[n(d^2-1) - \frac{\mathfrak{B}_{n,d}^2}{d^2-1}], 
\end{align}
where 
\begin{align}
\mathfrak{B}_{n,d} = \sqrt{n+2-\frac{d^2-1}{n}} +\frac{(d-1)(d+2)}{2}\sqrt{n-d+1} + \frac{(d-2)(d+1)}{2}\sqrt{n+d+1}. 
\end{align}
\end{lm}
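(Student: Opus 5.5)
We mirror the proof of Theorem~\ref{thm:main}, replacing the target $\mathcal{U}$ by $\mathcal{U}^*$. By Corollary~\ref{cor:noiselessconjugate}, it suffices to optimize over twirled sequential superchannels with $\Xi'(\mathcal{U}^{\otimes n})=\mathcal{U}^*$. We dilate $\Xi'$ to $\Xi'_{\mathrm{Pur}}$ with isometry $V_{\mathrm{Pur}}(U_1,\dots,U_n)$, and define $V_0$ and $V_{a,r}$ exactly as before.

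\textbf{Step 1: covariance constraints.} Covariance now reads $(C^*_F\otimes S_C)V_{a,r}C_P^{T}=\sum_b R_{ba}(C)V_{b,r}$. The operators $V_0^\dagger V_{a,r}$ and $V_{a,r}^\dagger V_{b,s}$ therefore transform under conjugation by $C^*$, i.e.\ as $\mathrm{Adj}$ realized on $\bar{\Box}$. By the argument of Lemma~\ref{lm:covariance}, we get $V_0^\dagger V_{a,r}=c_r\tilde{T}_a$ and $V_{a,r}^\dagger V_{b,s}=K_{rs}\delta_{ab}+(S_{rs}d_{abc}+iL_{rs}f_{abc})\tilde{T}_c$, where $\tilde{T}_a$ is the image of $T_a$ under $U\mapsto U^*$, namely $-T_a^T$ up to the basis. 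Differentiating $V_{\mathrm{Pur}}(U,\dots,U)=U^*\otimes|\eta_\theta\rangle$ with $U=e^{iT_a\theta}$ gives $\sum_r V_0^\dagger V_{a,r}=-T_a^T$, hence $\sum_r c_r=1$ with this sign convention. The diagonal relation $V_{a,r}^\dagger V_{b,r}=\delta_{ab}+(d_{abc}+if_{abc})V_0^\dagger V_{c,r}$ gives $K_{rr}=1$. It also gives $S_{rr}=-c_r$ and $L_{rr}=c_r$, because transposition flips the sign of the $d$-term relative to the $f$-term when we rewrite $T_c=-\tilde{T}_c^{\,T}$.

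\textbf{Step 2: reduce to a block structure.} Lemmas~\ref{lm:truncation}--\ref{lm:LPsolution} go through verbatim. The reference now carries $\Box$ instead of $\bar\Box$, so the relevant decomposition is $\mathrm{Adj}\otimes\Box\simeq\Box\oplus R_+\oplus R_-$. The operators $X_d,X_f$ have the same eigenvalues as in Lemma~\ref{lm:irrepdecomp}. Combined with the sign flip of $S$, this yields blocks
\begin{align}
A_0=K'-(\tfrac{d^2}{2}-2)S''+\tfrac{d^2}{2}L',\quad A_\pm=K'\mp(\tfrac d2-1)S''\mp\tfrac d2 L' \ (\text{up to relabeling}),
\end{align}
with $S''=-S-cc^T$. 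The bound again becomes
\begin{align}
1-t\Big(n\Delta-\tfrac{1}{\Delta}\max\mathcal{S}(\hat R)^2\Big),\qquad \mathcal{S}(\hat R)\le\sqrt{\Tr A_0}+\tfrac{d_+}{d}\sqrt{\Tr A_+}+\tfrac{d_-}{d}\sqrt{\Tr A_-}.
\end{align}

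\textbf{Step 3: evaluate the traces.} Using $\Tr K'=n-s$, $\Tr L'=1-s$, $\Tr S''=-1-s$ and $s=\sum c_r^2$, the traces are linear in $s$. We expect
\begin{align}
\Tr A_0=n+2-\Delta s,\qquad \Tr A_+=n-d+1,\qquad \Tr A_-=n+d+1.
\end{align}
The $s$-dependence should cancel in $A_\pm$, as it did in the purification case. Applying $s\ge 1/n$ then gives exactly $\mathfrak{B}_{n,d}$.

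\textbf{Main obstacle.} The hardest step is tracking the signs precisely. This includes the conjugate-representation sign in $S_{rr}$, which irrep among $R_\pm$ receives which eigenvalue after exchanging $\bar\Box$ for $\Box$, and checking that $n\ge d-1$ makes every $\Tr A_\pm\ge 0$. I would first redo the Casimir computation of Lemma~\ref{lm:irrepdecomp} for $\mathrm{Adj}\otimes\Box$, with labels $(\Box\Box,\Box)$ and $(\Box\!\!\!\!\;^{\Box},\Box)$. I would then fix all signs against a sanity check: the noiseless optimum at $n=d-1$ must satisfy $\Tr A_+=0$, which forces $A_+=0$.
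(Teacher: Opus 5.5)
Your architecture is the paper's own: twirl and dilate, use covariance to fix $c_r,K,S,L$, apply the Bény--Oreshkov truncation and the twirled $\bar\sigma_E$, decompose $\mathrm{Adj}\otimes\Box\simeq\Box\oplus R_+\oplus R_-$, and finish with $f_{\mathrm{root}}(X,Y)\le\sqrt{\Tr X\Tr Y}$ and $s\ge 1/n$. The one genuinely new step is the sign bookkeeping that turns the purification traces $\Tr A_\pm=n-1$ into $n\mp d+1$. You carry that step out incorrectly, and Step~3 only asserts the target traces rather than deriving them.

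\textbf{The diagonal constraint.} In your convention $V_0^\dagger V_{a,r}=c_r\tilde T_a$, $\tilde T_a=-T_a^*$, $\sum_r c_r=1$. The diagonal identity $V_{a,r}^\dagger V_{b,r}=\delta_{ab}+(d_{abc}+if_{abc})V_0^\dagger V_{c,r}$ then gives $S_{rr}=L_{rr}=c_r$ directly. No transposition enters, so $S_{rr}=-c_r$ is false. Moreover, with $S_{rr}=-c_r$ and $\sum_r c_r=1$ you would get $\Tr(-S-cc^T)=1-s$, not the $-1-s$ you use later.

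\textbf{Where the conjugate sign actually enters.} The $d$-part flips relative to the $f$-part in two other places:
\begin{itemize}
\item In $\tilde T_a\tilde T_b=\delta_{ab}+(-d_{abc}+if_{abc})\tilde T_c$, which you need when subtracting $c_rV_0\tilde T_a$ to form $W_{a,r}$. As a result the $d$-part of $W^\dagger W$ carries $S+cc^T=-S''$, while the $f$-part carries $L-cc^T=L'$.
\item In $\tilde T_c^{\,T}=-T_c$ on the reference, which fixes how $X_d$ and $X_f=i\sum_{abc}f_{abc}|a\rangle\langle b|\otimes T_c$ enter $\rho_{ER}$.
\end{itemize}

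\textbf{Your blocks do not give the traces you quote.} With $\Tr K'=n-s$, $\Tr S''=-1-s$, $\Tr L'=1-s$, your formulas yield
\[
\Tr A_0=n+d^2-2-3s,\qquad \Tr A_+=n-1+(d-2)s,\qquad \Tr A_-=n+1-ds .
\]
No relabeling of $R_\pm$ can repair this. Your $A_-$ uses the coefficient $\tfrac d2-1$, but the $X_d$-eigenvalue on the $\tfrac{(d-2)d(d+1)}{2}$-dimensional block is $-\tfrac d2-1$.

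\textbf{The correct blocks.} Tracking the signs, the $W$-part of $\rho_{ER}$ is $(K')^T\otimes I+(S'')^T\otimes X_d+(L')^T\otimes X_f$. Hence, with $A_+$ on the $\tfrac{(d-1)d(d+2)}{2}$-dimensional block,
\begin{align*}
A_0&=K'+\left(\tfrac{d^2}{2}-2\right)S''+\tfrac{d^2}{2}L',\\
A_+&=K'+\left(\tfrac{d}{2}-1\right)S''-\tfrac{d}{2}L',\\
A_-&=K'-\left(\tfrac{d}{2}+1\right)S''+\tfrac{d}{2}L',
\end{align*}
with traces $n+2-(d^2-1)s$, $n-d+1$ and $n+d+1$. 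These are exactly the paper's blocks. In the paper's convention ($V_0^\dagger V_{a,r}=c_rT_a^*$, $\sum_r c_r=-1$) its $c,S,L$ are minus yours, so its $S'=S-cc^T$ and $L'=L+cc^T$ equal your $S''$ and $-L'$. Only once these traces are established does $s\ge 1/n$ produce $\mathfrak{B}_{n,d}$.

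\textbf{Smaller points.} A single sanity check such as $\Tr A_+=0$ at $n=d-1$ cannot by itself pin down all of these signs. You also do not need to check $\Tr A_\pm\ge 0$. The $A$'s are isotypic blocks of the positive operator $\sum|a,r\rangle\langle b,s|\otimes(W_{b,s}^\dagger W_{a,r})^T=YY^\dagger$, with $Y=\sum_{a,r}|a,r\rangle\otimes W_{a,r}^T$. So $n\ge d-1$ is used only to make the noiseless face nonempty, as in Corollary~\ref{cor:noiselessconjugate}.
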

\begin{proof}
The performance operator satisfies 
\begin{align}
    [A^{\otimes {n+1}} \otimes B^{\otimes n+1}, \Omega_p] = 0 \quad \forall A,B\in \mathrm{SU}(d). 
\end{align}
Thus, we can assume without loss of generality that $\Xi$ is twirled.  
\begin{align}
    [A^{\otimes {n+1}} \otimes B^{\otimes n+1}, \mathcal{J}_\Xi] = 0 \quad \forall A,B\in \mathrm{SU}(d). 
\end{align}
Now, consider the Stinespring dilation of a noisy conjugation superchannel $\Xi$ and denote it $\Xi^{\mathrm{Pur}}$. Here, each memory operation and the final decoder of $\Xi$ is Stinespring-dilated to give $\Xi^{\mathrm{Pur}}$. Now, define $V_0 = \widetilde{\Xi}_{\mathrm{Pur}}(I, I, \cdots I), \, V_{a,r} = \widetilde{\Xi}_{\mathrm{Pur}}(I, I, \cdots T_a, \cdots I)$. Using covariance, one can derive 
\begin{align}
    &C^*(V_0^\dagger V_{a,r})C^T = \sum_b R_{ba}(C)V_0^\dagger V_{b,r}\\
    &C^*(V_{a,r}^\dagger V_{b,s})C^T = \sum_{c,e} R_{ca}(C)R_{eb}(C)V_{c,r}^\dagger V_{e,s}, 
\end{align}
where $R(C)$ is the adjoint representation 
\begin{align}
    CT_aC^\dagger = \sum_{b}R_{ba}(C)T_b. 
\end{align}
Define $\Phi_r: \overline{\mathrm{Adj}} \to \mathcal{L}(\mathbb{C}^d), \, \Phi_r(T_a^*) = V_0^\dagger V_{a,r}$. This is an $\mathrm{SU}(d)$-intertwiner, and we have $\mathcal{L}(\mathbb{C}^d) \simeq \mathcal{L}(\bar{\mathbb{C}}^d) \simeq \mathbb{I}\oplus \overline{\mathrm{Adj}}$, $\dim \mathrm{Hom}(\overline{\mathrm{Adj}}, \mathcal{L}(\mathbb{C}^d)) = 1$, so 
\begin{align}
    V_0^\dagger V_{a,r} = c_r T_a^*
\end{align}
for some $c_r \in \mathbb{C}$. $V_{a,r}$ is obtained by inserting a Hermitian traceless operator to the $r$-th slot of $V_0$, so we have $c_r \in \mathbb{R}$. We can similarly define $\Psi_{r,s}: \overline{\mathrm{Adj}}\otimes \overline{\mathrm{Adj}} \to \mathcal{L}(\mathbb{C}^d), \, \Psi_{r,s}(T_a^* \otimes T_b^*) =  V_{a,r}^\dagger V_{b,s}$, and $\Psi_{r,s}$ is an $\mathrm{SU}(d)$-intertwiner. Now, define the tensors 
\begin{align}
    T_a T_b = \delta_{ab} \mathbb{I} +(d_{abc}+if_{abc})T_c, 
\end{align}
where $d_{abc}$ is totally symmetric and $f_{abc}$ is totally antisymmetric. This gives 
\begin{align}
    T_a^* T_b^* = \delta_{ab} \mathbb{I} +(d_{abc}-if_{abc})T_c^*, 
\end{align}
and so we have 
\begin{align}
    V_{a,r}^\dagger V_{b,s} = K_{rs}\delta_{ab}\mathbb{I} + S_{rs}d_{abc}T_c^*+iL_{rs}f_{abc}T_c^*, 
\end{align}
with $K_{rs}^* = K_{sr}, \, S_{rs}^* = S_{sr},\, L_{rs}^* = L_{sr}$, derived from the fact $(V_{a,r}^\dagger V_{b,s})^\dagger = V_{b,s}^\dagger V_{a,r}$. Thus, $K,S,L$ are Hermitian matrices. 

Now, it suffices to optimize over the noiseless face $\Xi(\mathcal{U}^{\otimes n})=\mathcal{U}^*$ (see Corollary~\ref{cor:noiselessconjugate}), so 
we have $V_{\mathrm{Pur}}(U, \cdots U) = U^* \otimes |\eta_\theta\rangle$. Choosing $U=e^{iT_a\theta}$ and differentiate with respect to $\theta$ gives 
\begin{align}
    \sum_r c_r = -1. 
\end{align}
We also have 
\begin{align}
    V^\dagger_{a,r} V_{b,r} = \delta_{ab} \mathbb{I} + (d_{abc}+if_{abc})c_rT_c^*. 
\end{align}
This gives the constraint
\begin{align}
    K_{rr} = 1, \quad S_{rr} = L_{rr} = c_r. 
\end{align}
Let us use the Bény-Oreshkov condition to obtain the upper bound on the fidelity. 
\begin{align}
    \max_\mathcal{R}f_{\mathrm{Choi}}(\mathcal{R}\circ \Xi(\mathcal{N}_{\mathcal{U},p}^{\otimes n}), \mathcal{U}^*) &\leq \max_{\mathcal{R}'} f_{\mathrm{Choi}}(\mathcal{R}'\circ \Xi_{\mathrm{Pur}}(\mathcal{N}_{\mathcal{U},p}^{\otimes n}), \mathcal{U}^*) \nonumber\\
    &= \max_{\mathcal{\sigma_E}} f_{\mathrm{Choi}}( [\Xi_{\mathrm{Pur}}(\mathcal{N}_{\mathrm{id},p}^{\otimes n})]^c, \mathcal{T}_{\sigma_E}) \nonumber\\
    &= \max_{\mathcal{\sigma_E}} f\qty(\frac{1}{d}\sum_{\alpha, \beta}\sqrt{w_\alpha w_\beta}|\alpha\rangle \langle \beta|_E\otimes (V_\beta^\dagger V_\alpha)^T_R, \sigma_E\otimes \frac{\mathbb{I}_R}{d}) \nonumber\\
    &\leq \max_{\mathcal{\sigma_E}} f\qty(\frac{1}{d}\sum_{\alpha, \beta}\sqrt{w_\alpha w_\beta}\mathcal{P}(|\alpha\rangle \langle \beta|)_E\otimes (V_\beta^\dagger V_\alpha)^T_R, \mathcal{P}(\sigma_E)\otimes \frac{\mathbb{I}_R}{d}) \nonumber\\
    &= \max_{0 \leq q \leq 1} \left[\sqrt{q\max_{\sigma_{\leq 1}} f\qty(\frac{1}{d}\sum_{\alpha, \beta}\sqrt{w_\alpha w_\beta}(\Pi|\alpha\rangle \langle \beta|\Pi)_E\otimes (V_\beta^\dagger V_\alpha)^T_R, \sigma_{\leq 1}\otimes \frac{\mathbb{I}_R}{d})}\right.\nonumber\\
    &\left. + \sqrt{(1-q)\max_{\sigma_{\geq 2}} f\qty(\frac{1}{d}\sum_{\alpha, \beta}\sqrt{w_\alpha w_\beta}(\mathbb{I}-\Pi)|\alpha\rangle \langle \beta|(\mathbb{I}-\Pi)_E\otimes (V_\beta^\dagger V_\alpha)^T_R, \sigma_{\geq 2} \otimes \frac{\mathbb{I}_R}{d})}\right]^2 \nonumber\\
    &\leq \max_{\sigma_{\leq 1}} f\qty(\frac{1}{d}\sum_{\alpha, \beta}\sqrt{w_\alpha w_\beta}(\Pi|\alpha\rangle \langle \beta|\Pi)_E\otimes (V_\beta^\dagger V_\alpha)^T_R, \sigma_{\leq 1}\otimes \frac{\mathbb{I}_R}{d}) 
    \nonumber\\
    &+ \Tr\qty[\frac{1}{d}\sum_{\alpha, \beta}\sqrt{w_\alpha w_\beta}(\mathbb{I}-\Pi)|\alpha\rangle \langle \beta|(\mathbb{I}-\Pi)_E\otimes (V_\beta^\dagger V_\alpha)^T_R]\nonumber\\
    &= \max_{\sigma_E} f\qty(\frac{1}{d}\sum_{\alpha, \beta}\sqrt{w_\alpha w_\beta}(\Pi|\alpha\rangle \langle \beta|\Pi)_E\otimes (V_\beta^\dagger V_\alpha)^T_R, \sigma_E\otimes \frac{\mathbb{I}_R}{d}) + O_{n,d}(p^2)
\end{align}
where $\alpha, \beta$ denote the error labels, $w_\alpha, w_\beta$ are the corresponding weights, $\sigma_{\leq 1} := \frac{\Pi(\sigma_E)\Pi}{\Tr [\Pi(\sigma_E)]}, \sigma_{\geq 2} := \frac{(\mathbb{I}-\Pi)(\sigma_E)(\mathbb{I}-\Pi)}{{\Tr [(\mathbb{I}-\Pi)(\sigma_E)]}}, q= \Tr [\Pi(\sigma_E)]$, $R$ denotes the reference system, $(\cdot)^c$ denotes the complementary channel, and $\mathcal{P}$ is the pinching channel $\mathcal{P}(\cdot) = \Pi(\cdot) \Pi+(\mathbb{I}-\Pi)(\cdot)(\mathbb{I}-\Pi)$ with $\Pi$ being the projector onto the zero- and one-error sector.

Next, define $Q:=(1-\Delta t)^n, \epsilon:= t(1-\Delta t)^{n-1}$, where $\Delta:=d^2-1$, $t:=\frac{p}{d^2}$.
\begin{align}
    \rho_{ER} &:= \frac{1}{d}\left[Q|0\rangle \langle 0|_E\otimes \mathbb{I}_R+\sum_{a,r}c_r\sqrt{\epsilon Q}(|0\rangle \langle a,r|+|a,r \rangle \langle 0|)\otimes (T_a^*)^T+\sum_{a,r,b,s}\epsilon |a,r\rangle \langle b,s|\otimes (V_{b,s}^\dagger V_{a,r})^T\right]. 
\end{align}
Define
\begin{align}
    J|\bar{\psi}\rangle = \sum_a |a\rangle \otimes T_a|\bar{\psi} \rangle 
\end{align}
with $J^\dagger J = \Delta \mathbb{I}$. $U_0=J/\sqrt{\Delta}$ is an isometric embedding. Define 
$c:= \sum_r c_r |r\rangle \langle 0|$, and we have 
\begin{align}
    \rho_{ER} &:= \frac{1}{d}\left[Q|0\rangle \langle 0|\otimes \mathbb{I}_R+\sqrt{\epsilon Q}(c\otimes J + c^\dagger \otimes J^\dagger)+\sum_{a,r,b,s}\epsilon |a,r\rangle \langle b,s|\otimes (V_{b,s}^\dagger V_{a,r})^T\right]. 
\end{align}
Now, define 
\begin{align}
    W_{a,r}:= V_{a,r}-c_rV_0 T_a^*. 
\end{align}
This gives 
\begin{align}
\begin{dcases}
    V_0^\dagger W_{a,r} = 0\\
    V_{a,r}^\dagger V_{b,s} = (W_{a,r}+c_rV_0 T_a^*)^\dagger (W_{b,s}+c_sV_0 T_b^*) = W_{a,r}^\dagger W_{b,s} + c_rc_sT_a^*T_b^*. 
\end{dcases}
\end{align}
Therefore, 
\begin{align}
    W_{a,r}^\dagger W_{b,s} = K'_{rs}\delta_{ab} + (S'_{rs}d_{abc}+iL'_{rs}f_{abc})T_c^*, 
\end{align}
where $K',S',L'$ are Hermitian operators defined by 
\begin{align}
    K' = K -cc^T, \, S'=S-cc^T, \, L'= L+cc^T. 
\end{align}
This allows us to write $\rho_{ER}$ as 
\begin{align}
    \rho_{ER} &= \frac{1}{d}\left[Q|0\rangle \langle 0|\otimes \mathbb{I}_R+\sqrt{\epsilon Q}(c\otimes J + c^\dagger \otimes J^\dagger)+\epsilon cc^\dagger \otimes JJ^\dagger \right.\nonumber\\
    &\left.\quad \quad \quad \quad\quad \quad \quad \quad \quad \quad\quad \quad \quad \quad \quad \quad\quad \quad \quad + \epsilon[(K')^T \otimes I + (S')^T\otimes X_d - (L')^T\otimes X_f]\right], 
\end{align}
where we have defined 
\begin{align}
\begin{dcases}
    X_d = \sum_{a,b,c}d_{abc}|a\rangle \langle b| \otimes T_c\\
    X_f = i\sum_{a,b,c}f_{abc}|a\rangle \langle b| \otimes T_c. 
\end{dcases}
\end{align}
These operators are invariant under $R(U) \otimes U$, where $R(U)$ is the adjoint representation. This leads us to consider the decomposition
\begin{align}
    \mathrm{Adj} \otimes \Box \simeq (\raisebox{0em}{\scalebox{0.4}{\ydiagram{1}}}, \raisebox{0em}{\scalebox{0.4}{\ydiagram{1}}}) \otimes \scalebox{0.4}{\ydiagram{1}} \simeq (\raisebox{0em}{\scalebox{0.4}{\ydiagram{1}}}, \phi)\oplus (\raisebox{0em}{\scalebox{0.4}{\ydiagram{2}}}, \raisebox{0em}{\scalebox{0.4}{\ydiagram{1}}})\oplus (\raisebox{0.3em}{\scalebox{0.4}{\ydiagram{1,1}}}, \raisebox{0em}{\scalebox{0.4}{\ydiagram{1}}}).  
\end{align}
with $\dim R_+ = \frac{(d-1)d(d+2)}{2}, \dim R_- = \frac{(d-2)d(d+1)}{2}$. Therefore, $X_d, X_f$ both act as constants on each irrep $\Box, R_+, R_-$. Let's compute the eigenvalues of $X_d, X_f$. To do this, we construct the total Casimir operator
\begin{align}
    C_{\mathrm{tot}} = C_{\mathrm{Adj}} \otimes \mathbb{I} + \mathbb{I} \otimes C_{\Box} + 2 \sum_a F_a \otimes T_a, 
\end{align}
where $F_a = [T_a, \cdot ]$. We have 
\begin{align}
    X_f = i\sum_{a,b,c}f_{abc}|a\rangle \langle b| \otimes T_c = -\frac{1}{2}\sum_{a,b,c}[F_{c}]_{ab}|a\rangle \langle b| \otimes T_c = -\frac{1}{2} \sum_a F_a \otimes T_a. 
\end{align}
Some computation shows $C_{\mathrm{Adj}} = 2d^2 \mathbb{I}$, and $C_\Box = \Delta \mathbb{I}$. Now, if an irrep is represented by a Young diagram $\lambda=(\lambda_1, \lambda_2, \cdots \lambda_d)$ and $\sum_i \lambda_i = m$, one has
\begin{align}
    C(\lambda) = d^2m + d\sum_{i=1}^{d} \lambda_i(\lambda_i+1-2i)-m^2
\end{align}
Thus, 
\begin{align}
\begin{dcases}
    C_{\mathrm{tot}}(\Box) = d^2-1\\
    C_{\mathrm{tot}}(R_+) = 3d^2+2d-1\\
    C_{\mathrm{tot}}(R_-) = d^2(d-1)+d(-d^2+5d-4)-(d-1)^2 = 3d^2-2d-1. 
\end{dcases}
\end{align}
The eigenvalues of $X_f$ can be computed as 
\begin{align}
\begin{dcases}
\frac{d^2}{2} \quad (\Box, \textrm{multiplicity:}d)\\
-\frac{d}{2} \quad (R_+, \textrm{multiplicity:}\frac{(d-1)d(d+2)}{2})\\
+\frac{d}{2} \quad (R_-, \textrm{multiplicity:}\frac
{(d-2)d(d+1)}{2}).
\end{dcases}
\end{align}
Now, 
\begin{align}
    JJ^\dagger = \sum_{ab} |a\rangle \langle b| \otimes T_aT_b = \sum_{ab}|a\rangle \langle b| \otimes (\delta_{ab}\mathbb{I} + (d_{abc}+if_{abc}) T_c). 
\end{align}
Thus, $1+x_d+x_f = d^2-1$ on $d$ and $1+x_d+x_f = 0$ for $R_+, R_-$. Thus, the eigenvalues of $x_d$ can be computed as 
\begin{align}
\begin{dcases}
\frac{d^2}{2}-2 \quad (\Box, \textrm{multiplicity:}d)\\
\frac{d}{2}-1 \quad (R_+, \textrm{multiplicity:}\frac{(d-1)d(d+2)}{2})\\
-\frac{d}{2}-1 \quad (R_-, \textrm{multiplicity:}\frac{(d-2)d(d+1)}{2}). 
\end{dcases}
\end{align}
Therefore, we have 
\begin{align}
    (K')^T \otimes I + (S')^T\otimes X_d - (L')^T\otimes X_f \simeq (A_0)^T \otimes I_d \oplus (A_+)^T \otimes I_{d_+} \oplus (A_-)^T \otimes I_{d_-}, 
\end{align}
where $d_+=\dim R_+, \,d_- = \dim R_-$ and 
\begin{align}
\begin{dcases}
    A_0 = K' + \qty(\frac{d^2}{2}-2) S' - \frac{d^2}{2} L'\\
    A_+ = K' + \qty(\frac{d}{2}-1) S' + \frac{d}{2} L'\\
    A_- = K' - \qty(\frac{d}{2}+1) S' - \frac{d}{2} L'. 
\end{dcases}
\end{align}
This allows us to write $\rho_{ER}$ as  
\begin{align}
    \rho_{ER} \simeq  \rho_0 \oplus \rho_+ \oplus \rho_-, 
\end{align}
with 
\begin{align}
\begin{dcases}
    \rho_0 = \frac{1}{d}\mqty(Q & \sqrt{\Delta \epsilon Q} c^\dagger\\ \sqrt{\Delta \epsilon Q} c & \epsilon(\Delta cc^\dagger + A_0^T))\otimes I_d,\\
    \rho_+ = \frac{\epsilon}{d} A_+^T\otimes I_{d_+}, \quad \rho_- = \frac{\epsilon}{d}A_-^T\otimes I_{d_-}. 
\end{dcases}
\end{align}
Now, $\Xi_{\mathrm{Pur}}(\mathcal{C} \circ \mathcal{N}_{\mathcal{U},p} \circ \mathcal{C}^\dagger, \mathcal{C} \circ \mathcal{N}_{\mathcal{U},p} \circ \mathcal{C}^\dagger, \cdots, \mathcal{C} \circ \mathcal{N}_{\mathcal{U},p} \circ \mathcal{C}^\dagger) = \mathcal{C}^* \circ \Xi_{\mathrm{Pur}}(\mathcal{N}_{\mathcal{U},p}, \mathcal{N}_{\mathcal{U},p}, \cdots \mathcal{N}_{\mathcal{U},p}) \circ \mathcal{C}^T$, so 
\begin{align}
    \rho_{ER} = ([\boldsymbol{1}\oplus R(C)\otimes \mathbb{I}_n]\otimes C)\rho_{ER}([\boldsymbol{1}\oplus R(C)\otimes \mathbb{I}_n]\otimes C)^\dagger, 
\end{align}
where $R(C)$ is the adjoint representation of $C$,  $\boldsymbol{1}$ is the identity acting on the no-error label $|0\rangle$, and $\mathbb{I}_n$ is the identity operator acting on the position label. On the other hand, 
\begin{align}
    \sigma_E \otimes \frac{\mathbb{I}_R}{d} &\to  ([\boldsymbol{1}\oplus R(C)\otimes \mathbb{I}_n]\otimes C)\qty(\sigma_E \otimes \frac{\mathbb{I}_R}{d})([\boldsymbol{1}\oplus R(C)\otimes \mathbb{I}_n]\otimes C)^\dagger \nonumber\\
    &= ([\boldsymbol{1}\oplus R(C)\otimes \mathbb{I}_n]\otimes \mathbb{I}_d)\qty(\sigma_E \otimes \frac{\mathbb{I}_R}{d})([\boldsymbol{1}\oplus R(C)\otimes \mathbb{I}_n]\otimes \mathbb{I}_d)^\dagger. 
\end{align}
Now, define 
\begin{align}
    \overline{\sigma}_E  = \int \mathrm{d}C [\boldsymbol{1}\oplus R(C)\otimes \mathbb{I}_n]\sigma_E [\boldsymbol{1}\oplus R(C)\otimes \mathbb{I}_n]^\dagger. 
\end{align}
The concavity of the square root fidelity in the second argument shows that 
\begin{align}
    f_{\mathrm{root}}(\rho_{ER}, \overline{\sigma}_E \otimes \frac{
    \mathbb{I}_R}{d}) &\geq \int \mathrm{d}C f_{\mathrm{root}}(\rho_{ER}, [\boldsymbol{1}\oplus R(C)\otimes \mathbb{I}_n]\sigma_E [\boldsymbol{1}\oplus R(C)\otimes \mathbb{I}_n]^\dagger \otimes \frac{
    \mathbb{I}_R}{d})\nonumber\\
    &= f_{\mathrm{root}}(\rho_{ER}, \sigma_E \otimes \frac{\mathbb{I}_R}{d}). 
\end{align} 
Thus, to obtain an upper bound on the fidelity, we can maximize over $\overline{\sigma}_E$. Since 
\begin{align}
    [(\boldsymbol{1}\oplus R(C)\otimes \mathbb{I}_n), \overline{\sigma}_E] = 0, 
\end{align}
$\overline{\sigma}_E$ may be assumed to be of the form 
\begin{align}
    \overline{\sigma}_E = (1-tz)|0\rangle \langle 0|+ \frac{t\mathbb{I}_{\mathrm{Adj}}}{\Delta}\otimes R, \quad R\geq 0, \, \Tr R=z. 
\end{align}
and decompose 
\begin{align}
    \overline{\sigma}_E \otimes \frac{\mathbb{I}_R}{d} &= (1-tz)|0\rangle \langle 0|\otimes \frac{\mathbb{I}_R}{d} + \qty[\frac{t}{d\Delta}R\otimes \mathbb{I}_{d} \oplus \frac{t}{d\Delta}R\otimes \mathbb{I}_{d_+} \oplus \frac{t}{d\Delta}R \otimes \mathbb{I}_{d_-}] \simeq \sigma_0 \oplus \sigma_+ \oplus \sigma_-, 
\end{align}
where 
\begin{align}
\begin{dcases}
     \sigma_0 = \frac{1}{d}\mqty(1-tz & 0\\ 0 & \frac{t}{\Delta}R) \otimes \mathbb{I}_d\\
     \sigma_+ = \frac{t}{d\Delta} R \otimes \mathbb{I}_{d_+}, \\
     \sigma_- = \frac{t}{d\Delta} R \otimes \mathbb{I}_{d_-}. 
\end{dcases}
\end{align}
Thus, we get the expression for the square root fidelity
\begin{align}
    f_{\mathrm{root}}\qty(\rho_{ER}, \sigma_E \otimes \frac{\mathbb{I}_R}{d}) &= f_{\mathrm{root}}(LL^\dagger, DD^\dagger) + f_{\mathrm{root}}(\rho_+, \sigma_+) + f_{\mathrm{root}}(\rho_-, \sigma_-)\nonumber\\
    &= \lVert L^\dagger D \rVert_1 + \frac{d_+}{d}\sqrt{\frac{\epsilon t}{\Delta}}f_{\mathrm{root}}(A_+^T, R) + \frac{d_-}{d}\sqrt{\frac{\epsilon t}{\Delta}}f_{\mathrm{root}}(A_-^T, R)
\end{align}
where
\begin{align}
    L = \mqty(\sqrt{Q} & 0\\\sqrt{\Delta \epsilon} c & \sqrt{\epsilon}(A_0^T)^{1/2}), \quad D = \mqty(\sqrt{1-tz} & 0\\ 0  & \sqrt{\frac{tR}{\Delta}}). 
\end{align}
Substituting, one obtains 
\begin{align}
    f_{\mathrm{root}}\qty(\rho_{ER}, \sigma_E \otimes \frac{\mathbb{I}_R}{d}) =& \left \lVert \mqty(\sqrt{Q(1-tz)} & \sqrt{\epsilon t} c^\dagger R^{1/2}\\ 0 & \sqrt{\frac{\epsilon t}{\Delta}} (A_0^T)^{1/2}R^{1/2}) \right \rVert_1\nonumber\\
    &+ \frac{d_+}{d}\sqrt{\frac{\epsilon t}{\Delta}}f_{\mathrm{root}}(A_+^T, R) + \frac{d_-}{d}\sqrt{\frac{\epsilon t}{\Delta}}f_{\mathrm{root}}(A_-^T, R).
\end{align}
From this we have 
\begin{align}
    f_{\mathrm{root}}\qty(\rho_{ER}, \sigma_E \otimes \frac{\mathbb{I}_R}{d}) &= \sqrt{Q(1-tz)}+\frac{t}{\sqrt{\Delta}}\qty(f_{\mathrm{root}}(A_0^T, R)+ \frac{d_+}{d}f_{\mathrm{root}}(A_+^T, R) + \frac{d_-}{d}f_{\mathrm{root}}(A_-^T, R))+o(t)\nonumber\\
    &= 1-\frac{t}{2}(n\Delta + z) + t\sqrt{\frac{z}{\Delta}}\mathcal{S}(\hat{R}) + o(t)
\end{align}
where we have redefined $R = z\hat{R}$ and 
\begin{align}
    \mathcal{S}(\hat{R}) := f_{\mathrm{root}}(A_0^T, \hat{R})+ \frac{d_+}{d}f_{\mathrm{root}}(A_+^T, \hat{R}) + \frac{d_-}{d}f_{\mathrm{root}}(A_-^T, \hat{R}). 
\end{align}
The $z$ dependent part is a quadratic function of $\sqrt{z}$ and so $z^\star = \frac{\mathcal{S}(\hat{R})^2}{\Delta}$ achieves the maximum value. 
Noting that the target optimal value is the squared fidelity, we get 
\begin{align}
    g_{n,d}(p) \leq 1-t(n\Delta -\frac{1}{\Delta} \max_{\hat{R}\geq 0, \Tr \hat{R} = 1}[\mathcal{S}(\hat{R})]^2)+ o(t). 
\end{align}
Using $f_{\mathrm{root}}(X,Y) \leq \sqrt{\Tr X \Tr Y}$ and $s:= \sum_{r=1}^n c_r^2 \geq \frac{1}{n}$, $\Tr K' = n-s, \Tr S' = -1-s, \Tr L' = -1+s$, one can show 
\begin{align*}
    \Tr(A_0) = n+2 - \Delta s\\
    \Tr(A_+) = n-d+1\\
    \Tr(A_-) = n+d+1
\end{align*}
and 
\begin{align}
    \mathcal{S}(\hat{R}) &\leq \sqrt{\Tr A_0 \Tr \hat{R}} + \frac{d_+}{d} \sqrt{\Tr A_+ \Tr \hat{R}} + \frac{d_-}{d} \sqrt{\Tr A_- \Tr \hat{R}} \nonumber \\
    & = \sqrt{n+2-\Delta s} + \frac{(d-1)(d+2)}{2}\sqrt{n-d+1} + \frac{(d-2)(d+1)}{2}\sqrt{n+d+1} \nonumber \\
    & \leq \sqrt{n+2-\frac{d^2-1}{n}} +\frac{(d-1)(d+2)}{2}\sqrt{n-d+1} + \frac{(d-2)(d+1)}{2}\sqrt{n+d+1}
\end{align}
Therefore, we get 
\begin{align}
    g_{n,d}(p) \leq 1- \qty[\frac{(d^2-1)(d^2+2)}{4d^2n}+o(n^{-1})]p + o(p). 
\end{align}
\end{proof}
\subsection{Information-theoretical achievability}
We now prove the achievability. We first provide an information-theoretical proof. After that, we provide a concrete $\mathrm{SU}(d)$-covariant parallel protocol attaining the bound. 
\begin{lm}[Parallel achievability]
For $n=kd-1$, 
\begin{align}
C'_{n,d} \leq  \frac{1}{d^2}\qty[n(d^2-1) - \frac{\mathfrak{B}_{n,d}^2}{d^2-1}], 
\end{align}
where 
\begin{align}
\mathfrak{B}_{n,d} = \sqrt{n+2-\frac{d^2-1}{n}} +\frac{(d-1)(d+2)}{2}\sqrt{n-d+1} + \frac{(d-2)(d+1)}{2}\sqrt{n+d+1}. 
\end{align}
\end{lm}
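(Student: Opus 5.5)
We mirror the information-theoretic achievability proof of Theorem 2, with the fundamental representation replaced by the conjugate one. For $n=kd-1$ we take the partition $\alpha=(k^{d-1},k-1)\vdash_d n$. Its Weyl module is $d$-dimensional and carries $f_\alpha(U)=(\det U)^k U^*$, which equals $U^*$ on $\mathrm{SU}(d)$. We use the entanglement-assisted encoder
\begin{align}
E'_{n,d}|\psi\rangle=\frac{1}{\sqrt{m_\alpha}}\sum_{j}U_{\mathrm{Sch}}^\dagger(|\psi\rangle_{\mathcal{U}_\alpha}\otimes|\alpha\rangle\otimes|j\rangle_{\mathcal{S}_\alpha})\otimes|j\rangle_E .
\end{align}
It is an isometry with $(U^{\otimes n}\otimes\mathbb{I}_E)E'_{n,d}=E'_{n,d}U^*$ for all $U\in\mathrm{SU}(d)$. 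In the noiseless case the decoder simply undoes the encoding and discards $E$, so the channel lies on the noiseless face of Corollary~\ref{cor:noiselessconjugate}. Since all $n$ slots are parallel, the dilation $V_{\mathrm{Pur}}(U_1,\dots,U_n)=(U_1\otimes\cdots\otimes U_n\otimes\mathbb{I}_E)E'_{n,d}$ is covariant. Hence the structure of the upper-bound proof applies verbatim: $V_0^\dagger V_{a,r}=c_rT_a^*$, the same form for $V_{a,r}^\dagger V_{b,s}$, the same block decomposition $\rho_{ER}\simeq\rho_0\oplus\rho_+\oplus\rho_-$, and the same expression for $\mathcal{S}(\hat R)$.

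The first step computes $K,S,L,c$ for this code. As in Theorem 2, $E'^\dagger\mathcal{O}E'=\frac{1}{m_\alpha}\Tr_{\mathcal{S}_\alpha}(\Pi_\alpha\mathcal{O}\Pi_\alpha)$, so $E'^\dagger T_{a,r}E'$ does not depend on $r$. Differentiating the covariance gives $S_aE'=E'(-T_a^*)$ with $S_a=\sum_rT_{a,r}$. Therefore
\begin{align}
c_r=-\frac{1}{n},
\end{align}
which matches $\sum_r c_r=-1$. For $r=s$ we have $K_{rr}=1$ and $S_{rr}=L_{rr}=-1/n$. For $r\neq s$ we expand $E'^\dagger S_aS_bE'=T_a^*T_b^*=\delta_{ab}+(d_{abc}-if_{abc})T_c^*$ and subtract the diagonal contributions $n\delta_{ab}-(d_{abc}-if_{abc})T_c^*$. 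This leaves $\sum_{r\neq s}E'^\dagger T_{a,r}T_{b,s}E'=(1-n)\delta_{ab}+2(d_{abc}-if_{abc})T_c^*$.

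Unlike the purification case, the off-diagonal blocks now contain $T_c^*$ terms. By permutation symmetry each off-diagonal block equals $K_{rs}\delta_{ab}+S_{rs}d_{abc}T_c^*+iL_{rs}f_{abc}T_c^*$ with
\begin{align}
K_{rs}=-\frac{1}{n},\qquad S_{rs}=-L_{rs}=\frac{2}{n(n-1)}\quad(r\neq s).
\end{align}

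The second step forms $K'=K-cc^T$, $S'=S-cc^T$, $L'=L+cc^T$. All three are combinations of $I$ and $|u\rangle\langle u|$, and one checks that they annihilate $|u\rangle$, so each is proportional to $P_\perp$. Evaluating on $P_\perp$:
\begin{align}
K'=\Big(1+\frac1n\Big)P_\perp,\qquad S'=-\frac{n+1}{n(n-1)}P_\perp,\qquad L'=\frac{n+1}{n(n-1)}P_\perp .
\end{align}
Substituting into the conjugation versions of $A_0,A_\pm$, each $A_\bullet$ is a scalar multiple of $P_\perp$. Their traces should equal exactly $n+2-\Delta/n$, $n-d+1$ and $n+d+1$, i.e.\ the trace values from the sequential bound at $s=1/n$. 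Verifying this arithmetic is the main check: if the multiples came out different, the form of $S_{rs},L_{rs}$ would have to be recomputed.

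The third step is saturation. Since all $A_\bullet\propto P_\perp$, the choice $\hat R=P_\perp/(n-1)$ gives $f_{\mathrm{root}}(A_\bullet^T,\hat R)=\sqrt{\Tr A_\bullet}$. Both Cauchy–Schwarz inequalities used in the upper bound are then equalities, because $c_r$ is uniform and the supports are aligned. We then choose $z=\mathcal{S}(\hat R)^2/\Delta$.

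The main obstacle is turning this into an actual protocol. The upper-bound chain passed through the Bény–Oreshkov condition and a pinching onto the $\le1$-error sector. For achievability we argue as in Theorem 2: Bény–Oreshkov is an equality for the dilated code. The $\ge2$-error sector only contributes $O(p^2)$, and the pinching does not change the first-order fidelity, since the optimal $\sigma_E$ we exhibit is block diagonal. So a recovery map $\mathcal{R}$ reaches the computed value up to $o(p)$, which gives $C'_{n,d}\le\frac{1}{d^2}[n\Delta-\mathfrak{B}_{n,d}^2/\Delta]$. The stated asymptotics then follow by expanding $\mathfrak{B}_{n,d}\approx(d^2-1)\sqrt n+\frac{d^2+2}{2\sqrt n}\cdot\frac{1}{?}$ to order $n^{-1/2}$. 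For other $n$ we discard queries, exactly as in the discussion after Theorem 2.
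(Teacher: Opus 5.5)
Your route is the same as the paper's information-theoretic achievability argument: the encoder on $\alpha=(k^{d-1},k-1)$ with the Specht register entangled with $E$, reading off $c,K,S,L$, showing $K',S',L'\propto P_\perp$, saturating with $\hat R=P_\perp/(n-1)$, and invoking the Bény--Oreshkov equality. However, Step~1 contains a sign error that breaks exactly the check you deferred. The single-slot term is $E'^\dagger T_{a,r}T_{b,r}E'=\delta_{ab}+(d_{abc}+if_{abc})E'^\dagger T_{c,r}E'=\delta_{ab}-\frac1n(d_{abc}+if_{abc})T_c^*$. The product $T_aT_b$ is formed in the query slot, so it carries $+if_{abc}$; this is also what your own values $S_{rr}=L_{rr}=-1/n$ encode. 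Summing over $r$ gives $n\delta_{ab}-(d_{abc}+if_{abc})T_c^*$, not $n\delta_{ab}-(d_{abc}-if_{abc})T_c^*$. Subtracting this from $T_a^*T_b^*=\delta_{ab}+(d_{abc}-if_{abc})T_c^*$, the $f$-terms cancel:
\begin{align}
\sum_{r\neq s}E'^\dagger T_{a,r}T_{b,s}E'=(1-n)\delta_{ab}+2d_{abc}T_c^*.
\end{align}
So for $r\neq s$ one has $K_{rs}=-\frac1n$, $S_{rs}=\frac{2}{n(n-1)}$ and $L_{rs}=0$, not $L_{rs}=-S_{rs}$.

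This matters because the trace identities you planned to verify are actually automatic. The traces $\Tr K'=n-s$, $\Tr S'=-1-s$, $\Tr L'=-1+s$ depend only on the diagonal entries and on $c$. Hence any code with $c_r=-1/n$ gives $\Tr A_0=n+2-\Delta/n$ and $\Tr A_\pm=n\mp d+1$. The real content of saturation is that $K',S',L'$ all annihilate $|u\rangle$. Then every $A_\bullet$ is a multiple of the same projector $P_\perp$, and $f_{\mathrm{root}}(A_\bullet^T,\hat R)=\sqrt{\Tr A_\bullet\,\Tr\hat R}$.

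With your off-diagonal value $L_{rs}=-\frac{2}{n(n-1)}$ this fails, since $L'|u\rangle=-\frac2n|u\rangle$. The operator $L'=\frac{n+1}{n(n-1)}P_\perp$ you wrote down is not what your $L$ gives. Its trace is $\frac{n+1}{n}\neq-1+\frac1n$, so it cannot come from any code with $c_r=-1/n$. Plugging it in gives $\Tr A_0=\frac{(n+1)(n+1-d^2)}{n}$, which is negative for $n+1<d^2$ even though $A_0\geq 0$.

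With the corrected $L_{rs}=0$ you get $L'=-\frac1nP_\perp$, all three $A_\bullet$ are proportional to $P_\perp$, and the rest of your argument goes through as in the paper. Your remark that the exhibited $\bar\sigma_E$ is supported in the $\leq 1$-error sector is the right reason the truncation costs only $o(p)$ in the achievability direction. The closing asymptotic expansion, which you left with a placeholder, is not needed for this lemma.
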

\begin{proof}
Now, we move on to the lower bound. Set the irrep $\alpha = (k, k, \cdots k-1)$. Then, this irrep has dimension $d$. Now, define the encoder 
\begin{align}
    V_{n,d}|\psi\rangle = \frac{1}{\sqrt{m_{\alpha}}}\sum_{j=1}^{m_{\alpha}}U_{\mathrm{Sch}}^\dagger(|\alpha\rangle \otimes |\psi\rangle _{\mathcal{U}_{\alpha}} \otimes |j\rangle_{\mathcal{S}_{\alpha}}) \otimes |j\rangle_E. 
\end{align}
This satisfies the covariant condition
\begin{align}\label{eq:Ucovap2}
    (U^{\otimes n} \otimes \mathbb{I}_E)V_{n,d} = V_{n,d} U^* 
\end{align}
One can then show 
\begin{align}
    V_{n,d}^\dagger \mathcal{O}  V_{n,d} = \frac{1}{m_{\alpha}} \Tr_{\mathcal{S}_{\alpha}}(\Pi_{\alpha}\mathcal{O} \Pi_{\alpha})
\end{align}
for any observable $\mathcal{O}\in \mathcal{L}(\mathbb{C}^d)^{\otimes n}$, where $\Pi_{\alpha}$ is the Young projector. This implies 
\begin{align}
    V_{n,d}^\dagger T_{a,1} V_{n,d}  = V_{n,d}^\dagger T_{a,2} V_{n,d} = \cdots V_{n,d}^\dagger T_{a,n} V_{n,d}. 
\end{align}
Define $S_a := \sum_{r=1}^n T_{a,r}$. Differentiating the covariance condition~\eqref{eq:Ucovap2} gives 
\begin{align}
     S_a V_{n,d} = -V_{n,d}T_a^*, \quad V_{n,d}^\dagger T_{a,r} V_{n,d} = -\frac{1}{n}T_a^*. 
\end{align}
 For $r=s$, we have 
\begin{align}
    V_{n,d}^\dagger T_{a,r} T_{b,r} V_{n,d} = V_{n,d}^\dagger (\delta_{ab} + (d_{abc}+if_{abc}) T_{c,r}) V_{n,d} =  \delta_{ab} - \frac{1}{n} (d_{abc}+if_{abc}) T_{c}^*. 
\end{align}
For $r\neq s$, we have 
\begin{align}
    V_{n,d}^\dagger \qty(\sum_{r}\sum_{s}T_{a,r} T_{b,s})V_{n,d} = V_{n,d}^\dagger S_a S_b V_{n,d} = T_a^* T_b^* = \delta_{ab}+(d_{abc}-if_{abc})T_c^*. 
\end{align}
Since 
\begin{align}
    V_{n,d}^\dagger \qty(\sum_{r}\sum_{s}T_{a,r} T_{b,s}) V_{n,d}= n\delta_{ab} -  (d_{abc}+if_{abc}) T_{c}^*  +  V_{n,d}^\dagger \qty(\sum_{r\neq s}T_{a,r} T_{b,s}) V_{n,d},
\end{align}
we have 
\begin{align}
    V_{n,d}^\dagger T_{a,r} T_{b,s} V_{n,d} = -\frac{1}{n}\delta_{ab}+\frac{2d_{abc}}{n(n-1)}T_c^*. \quad (r\neq s)
\end{align}
Comparing with the upper bound argument, we get 
\begin{align}
\begin{dcases}
    K_{rr}=1, \quad c_r = S_{rr}=L_{rr}=-\frac{1}{n}\\
    K_{rs} = -\frac{1}{n}, \quad S_{rs} = \frac{2}{n(n-1)}, \quad L_{rs}=0 \quad (r\neq s)
\end{dcases}
\end{align}
and
\begin{align}
\begin{dcases}
    K' = K -cc^T = (1+\frac{1}{n})I-|u\rangle \langle u|-\frac{1}{n}|u\rangle \langle u| = \qty(1+\frac{1}{n})P_\perp \\
    S' = S - cc^T = -\frac{n+1}{n(n-1)}I + \frac{2}{n-1}|u\rangle \langle u| -\frac{1}{n}|u\rangle \langle u| = -\frac{n+1}{n(n-1)} P_\perp, \\
    L' = L + cc^T = -\frac{1}{n}I+\frac{1}{n}|u\rangle \langle u| = -\frac{1}{n} P_\perp 
\end{dcases}
\end{align}
where $|u\rangle = \frac{1}{\sqrt{n}} \sum_{r=1}|r\rangle$, and $P_\perp = I -|u\rangle \langle u|$. The choice 
\begin{align}
    \hat{R} = \frac{P_\perp}{n-1}
\end{align}
gives the saturation of the upper bound inequality, because  $f_{\mathrm{root}}(A_0,\hat{R}) = \sqrt{\Tr A_0 \Tr \hat{R}},\, f_{\mathrm{root}}(A_+, \hat{R}) = \sqrt{\Tr A_+ \Tr \hat{R}}, \,f_{\mathrm{root}}(A_-, \hat{R}) = \sqrt{\Tr A_- \Tr \hat{R}}$ and 
$s= \frac{1}{n}$. 
\end{proof}

\subsection{Quantum circuit for noisy unitary conjugation}
Here we prove the achievability of the $\mathrm{SU}(d)$-covariant protocol by constructing a concrete encoder and decoder. The arguments here are for the noisy unitary conjugation. 
\subsubsection{Registers and Young diagrams}
Fix integers
\begin{align}
 d\ge2,\qquad q\ge1,\qquad n=qd-1,\qquad N=n+1=qd.
 \label{eq:regime}
\end{align}
The logical input and final output are
$P\simeq F\simeq\C^d$.  The joint inputs and outputs of the $n$
parallel queries are
\[
 \bfI\coloneqq I_1 \otimes \cdots \otimes I_n\simeq(\C^d)^{\otimes n},
 \qquad
 \bfO\coloneqq O_1 \otimes \cdots \otimes O_n\simeq(\C^d)^{\otimes n}.
\]
For a Young diagram $\eta\vdash_d r$, let $\cU_\eta$ and
$\cS_\eta$ denote the corresponding $\mathrm U(d)$ and symmetric-group
irrep spaces, and put
\[
 d_\eta\coloneqq \dim\cU_\eta,\qquad
 m_\eta\coloneqq \dim\cS_\eta=|\Path{\eta}|.
\]

The balanced diagram and its rectangular successor are
\begin{align}
 \alpha=(q^{d-1},q-1)\vdash n,
 \qquad
 \lambda=(q^d)\vdash N.
 \label{eq:diagrams}
\end{align}
We also define Young diagrams
\begin{align}
  &\mu=(q+1,q^{d-2},q-1), \nonumber \\
  &\beta=(q+1,q^{d-3},(q-1)^2),\quad \gamma=(q+1,q^{d-2},q-2), \nonumber \\
  &\rho = (q^{d-2},(q-1)^2),\quad \sigma=(q^{d-1},q-2).
\end{align}
The one-box branching relations among these diagrams are summarized in Fig.~\ref{fig:young-branching-diagram}.
They obey
\begin{align}
 &d_\alpha=d,\qquad d_\beta = {d(d+1)(d-2)\over 2},\qquad d_\gamma={d(d-1)(d+2)\over 2} \nonumber\\
 &d_\lambda=1, \qquad d_\mu=d^2-1.
 \label{eq:dimensions}
\end{align}
We note that for $d=2$, the irrep $\beta$ does not exist, but the final results in the subsequent discussions are unaffected because $d_\beta = 0$ in the above formula. A fixed unitary
$J_\alpha:P\to\cU_\alpha$ specifies the logical basis.
\begin{figure}[h]
\centering
\includegraphics[width=0.5\linewidth]
{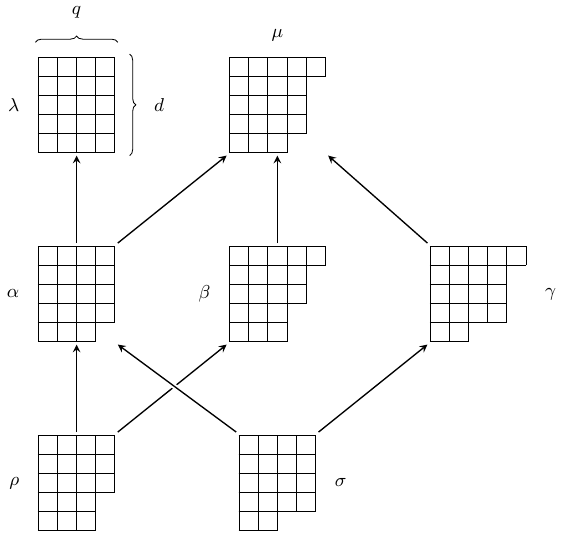}
\caption{The relationships among the Young diagrams used in the analysis of the fidelity.}
\label{fig:young-branching-diagram}
\end{figure}

Let $\cN_{U,p}:\mathcal{L}(\C^d)\to
\mathcal{L}(\C^d)$ be the queried channel.  The construction
below applies to any $\cN_{U,p}$; for the noisy-unitary setting one may
take
\[
 \cN_{U,p}=\mathcal D_p\circ\mathcal U,
 \qquad
 \mathcal U(\rho)=U\rho U^\dagger,\qquad
 \mathcal D_0=\operatorname{id}.
\]
Thus $p=0$ gives the noiseless unitary query.

\subsubsection{Encoder}
We use the Schur transform with output-register order
\begin{align}
 U_{\mathrm{Sch}}^{(n)}:
 (\C^d)^{\otimes n}
 \longrightarrow
 \bigoplus_{\eta\vdash_dn}
 \cS_\eta\otimes\ket{\eta}\otimes\cU_\eta.
 \label{eq:schur-order}
\end{align}
Since $\cU_\alpha\cong \overline{\C^d}$ holds, we can take an isomorphism $J_\alpha:P\to\cU_\alpha$.
Then, the encoder isometry $E_{n,d}:P\to\bfI\otimes E$ for $E\simeq \cS_\lambda$ is defined by
\begin{align}
  E_{n,d}\ket{\psi}_P \coloneqq \frac{1}{\sqrt{m_\alpha}} \sum_{s_\alpha \in \Path{\alpha}} U_{\mathrm{Sch}}^{(n)\dagger} \qty[\ket{s_\alpha}_{\cS_\alpha} \otimes \ket{\alpha} \otimes (J_\alpha \ket{\psi})_{\cU_\alpha}] \otimes \ket{s_\alpha \to \lambda}_E,
  \label{eq:encoder}
\end{align}
which defines the encoder channel $\cE_{n,d}\coloneqq E_{n,d}(\cdot) E_{n,d}^\dagger$.

\subsubsection{Decoder}

For $\eta\in \mu-\square = \{\alpha, \beta, \gamma\}$, we define the path frame vectors $\ket{g_{i,\eta}}$ by
\begin{align}
  \ket{g_{i,\eta}}
  \coloneqq
  \sum_{s_\eta\in\Path{\eta}}
  \rho_\lambda(\tau_i)
  R_{\lambda\to\mu}^\dagger
  \rho_\mu(\tau_i)
  \ket{s_\eta\to\mu}\otimes\ket{s_\eta},
  \label{eq:direct-frame-vectors}
\end{align}
where $\tau_i$ is the transposition $(i,n+1)$ in $\mathfrak S_{n+1}$, $\rho_\nu$ is Young's orthogonal representation of $\mfS_{n+1}$ on $\cS_\nu$ for $\nu\in \{\lambda, \mu\}$, and $R_{\lambda\to\mu}$ is defined by
\begin{align}
  R_{\lambda\to \mu} \coloneqq \sum_{s_\alpha\in \Path{\alpha}} \ketbra{s_{\alpha}\to \mu}{s_{\alpha}\to \lambda}
\end{align}
We define the normalized invariant vector by
\begin{align}
  \ket{\Omega_{\lambda\alpha}}
  \coloneqq
  \frac{1}{\sqrt{m_\alpha}}
  \sum_{s_\alpha\in\Path{\alpha}}
  \ket{s_\alpha\to\lambda}\otimes\ket{s_\alpha}.
  \label{eq:direct-omega}
\end{align}
We then define the centered frame vectors $\ket{\widetilde{g}_{i, \eta}}$ by
\begin{align}
  \ket{\widetilde{g}_{i,\alpha}} &\coloneqq (\mathbb{I} - \ketbra{\Omega_{\lambda\alpha}}) \ket{g_{i,\alpha}}, \\
  \ket{\widetilde{g}_{i,\eta}} &\coloneqq \ket{g_{i,\eta}} \quad \forall \eta\in\{\beta, \gamma\}
\end{align}
and the corresponding frame operators $A_\eta$, $\widetilde{A}_\eta$ and Gram matrices $G_\eta$, $\widetilde{G}_\eta$ by
\begin{align}
  A_\eta \coloneqq \sum_{i=1}^n \ketbra{g_{i,\eta}}{i}, \quad G_\eta \coloneqq A_\eta^\dagger A_\eta,
  \label{eq:direct-frame-and-gram-operators}\\
  \widetilde{A}_\eta \coloneqq \sum_{i=1}^n \ketbra{\widetilde{g}_{i,\eta}}{i}, \quad \widetilde{G}_\eta \coloneqq \widetilde{A}_\eta^\dagger \widetilde{A}_\eta.
\end{align}
As shown in the following proposition, the Gram matrices $\widetilde{G}_\eta$ have a single nonzero eigenvalue $g_\eta$ with multiplicity $n-1$:
\begin{prop}
\label{prop:direct-gram}
The Gram matrix $\widetilde{G}_{\eta}$ is given by
\begin{align}
  \widetilde{G}_{\eta} = g_\eta (\mathbb{I}-\ketbra{+}),
\end{align}
where $\ket{+} \coloneqq \frac{1}{\sqrt{n}} \sum_{i=1}^n \ket{i}$ and $g_\eta$ is the unique nonzero eigenvalue of $\widetilde{G}_\eta$ given by
\begin{align}
 \frac{g_\alpha}{m_\alpha}
 &=\frac{d^2(q^2-1)}{n(n-1)(d^2-1)},
 \label{eq:direct-g-alpha}\\
 \frac{g_\beta}{m_\alpha}
 &=\frac{d(d-2)(q+1)}{2(n-1)(d-1)},
 \label{eq:direct-g-beta}\\
 \frac{g_\gamma}{m_\alpha}
 &=\frac{d(d+2)(q-1)}{2(n-1)(d+1)}.
 \label{eq:direct-g-gamma}
\end{align}
\end{prop}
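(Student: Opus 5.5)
We follow the same route as in the proof of Proposition~\ref{prop:goverlap}: use $\mathfrak{S}_n$-covariance to fix the form of $\widetilde{G}_\eta$, then obtain $g_\eta$ from a single norm computation.

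\textbf{Step 1: covariance of the frame vectors.} For $\sigma\in\mathfrak{S}_n\subset\mathfrak{S}_{n+1}$ (fixing $n+1$) we have $\tau_{\sigma(i)}=\sigma\tau_i\sigma^{-1}$. The operator $R_{\lambda\to\mu}$ intertwines the restrictions of $\rho_\lambda$ and $\rho_\mu$ to $\mathfrak{S}_n$ on the $\alpha$-branch, since both are built from the same paths $s_\alpha$. The vector $\sum_{s_\eta}\ket{s_\eta\to\mu}\otimes\ket{s_\eta}$ is invariant under $\rho_\mu(\sigma)\otimes\overline{\rho_\eta(\sigma)}$, and Young's orthogonal form is real. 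Together these give
\begin{align}
\ket{g_{\sigma(i),\eta}}=[\rho_\lambda(\sigma)\otimes\rho_\eta(\sigma)]\ket{g_{i,\eta}}.
\end{align}
Because $\ket{\Omega_{\lambda\alpha}}$ is invariant under the same action, the projection defining $\ket{\widetilde g_{i,\alpha}}$ commutes with it. Hence $\widetilde{A}_\eta\Pi(\sigma)=[\rho_\lambda(\sigma)\otimes\rho_\eta(\sigma)]\widetilde{A}_\eta$, so $[\widetilde{G}_\eta,\Pi(\sigma)]=0$. Since $\Pi\simeq\mathrm{triv}\oplus\mathrm{std}$ is multiplicity-free, Schur's lemma gives $\widetilde{G}_\eta=g^{\mathrm{triv}}_\eta\ketbra{+}+g_\eta(\mathbb{I}-\ketbra{+})$.

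\textbf{Step 2: the trivial eigenvalue vanishes.} The vector $\widetilde{A}_\eta\ket{+}$ is invariant under $\rho_\lambda|_{\mathfrak{S}_n}\otimes\rho_\eta$. Because $\lambda=(q^d)$ is rectangular, its restriction to $\mathfrak{S}_n$ is exactly $\rho_\alpha$. By Schur's lemma, an invariant vector exists only for $\eta=\alpha$, and then it is proportional to $\ket{\Omega_{\lambda\alpha}}$. For $\eta=\alpha$ this component has been projected out, so $g^{\mathrm{triv}}_\eta=0$ in all three cases.

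\textbf{Step 3: computing $g_\eta$.} Taking traces, $(n-1)g_\eta=\Tr\widetilde G_\eta=\sum_i\|\widetilde g_{i,\eta}\|^2=n\|\widetilde g_{n,\eta}\|^2$, where the last equality uses the covariance of Step 1. For $i=n$ the transposition $\tau_n=(n,n+1)$ is a Coxeter generator, so Young's orthogonal form applies directly. Write a path through $\mu$ as $\nu\to\eta\to\mu$ with $\nu$ at level $n-1$. Then $\rho_\mu(\tau_n)$ sends $\ket{\nu\to\eta\to\mu}$ to
\begin{align}
\frac{1}{r}\ket{\nu\to\eta\to\mu}+\sqrt{1-\frac{1}{r^2}}\ket{\nu\to\eta'\to\mu},
\end{align}
where $r$ is the axial distance between the last two boxes and $\eta'$ is the other intermediate diagram. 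Next, $R^\dagger_{\lambda\to\mu}$ keeps only the components whose level-$n$ diagram is $\alpha$, and the unitary $\rho_\lambda(\tau_n)$ does not change norms. Therefore $\|g_{n,\eta}\|^2=\sum_\nu m_\nu\,|\text{coefficient on }\nu\to\alpha\to\mu|^2$, summed over the $\nu\in\{\rho,\sigma\}$ (or $\alpha$-compatible diagrams) lying below both $\eta$ and $\alpha$. Concretely:
\begin{itemize}
\item For $\eta=\beta$ the only such $\nu$ is $\rho$, and for $\eta=\gamma$ it is $\sigma$. Each contributes $m_\nu(1-1/r^2)$.
\item For $\eta=\alpha$ the contributions are diagonal terms $m_\nu/r_\nu^2$ with $\nu\in\{\rho,\sigma\}$.
\end{itemize}
For $\eta=\alpha$ we must also subtract $|\langle\Omega_{\lambda\alpha}|g_{n,\alpha}\rangle|^2$. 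This overlap is computed from the same diagonal coefficients, pulled back through $\rho_\lambda(\tau_n)$, which acts as $\pm1$ on the single-branch rectangle.

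\textbf{Step 4: reduce to closed form.} The axial distances are explicit: about $q+d-1$ or $q+1$, depending on which corner boxes are involved. The ratios $m_\rho/m_\alpha$ and $m_\sigma/m_\alpha$ follow from the hook length formula, with $m_\rho+m_\sigma=m_\alpha$. Substituting these and simplifying with $n=qd-1$ should reproduce Eqs.~\eqref{eq:direct-g-alpha}--\eqref{eq:direct-g-gamma}.

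The main obstacle is the bookkeeping in Step 3. One has to identify correctly which two-step paths pass through $\alpha$, fix the signs and axial distances in Young's orthogonal form, and handle the $\ket{\Omega_{\lambda\alpha}}$ subtraction for $\eta=\alpha$. I would first check the $d=2$ case, where $d_\beta=0$, and small $q$ numerically to pin down these conventions before simplifying in general.
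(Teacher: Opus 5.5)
Steps 1--3 are sound and are essentially the paper's argument. The differences are cosmetic. The paper applies Schur's lemma to the uncentered $G_\eta$, finds $g^{\mathrm{triv}}_\alpha=m_\alpha/n$ from $\langle\Omega_{\lambda\alpha}|g_{n,\alpha}\rangle=-\sqrt{m_\alpha}/n$, and gets $\Tr G_\alpha$ from the completeness relation $\sum_\eta \Tr G_\eta=nm_\alpha$. You instead apply Schur's lemma to $\widetilde G_\eta$ directly, so the trivial eigenvalue vanishes at once, and you read $\|g_{n,\alpha}\|^2$ off the diagonal entries of $\rho_\mu(\tau_n)$. Both routes give $(n-1)g_\alpha=n\|g_{n,\alpha}\|^2-n|\langle\Omega_{\lambda\alpha}|g_{n,\alpha}\rangle|^2$.

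The gap is Step 4. It carries all the quantitative content of the proposition, and the one concrete input you give there is wrong: the axial distances are not ``about $q+d-1$ or $q+1$''. On a path $\nu\to\eta\to\mu$ the last two boxes are the skew shape $\mu/\nu$.
\begin{itemize}
\item For $\nu=\rho$, these are the box in row $1$, column $q+1$ (content $q$) and the box in row $d-1$, column $q$ (content $q-d+1$), so $|r|=d-1$.
\item For $\nu=\sigma$, the second box is instead in row $d$, column $q-1$ (content $q-d-1$), so $|r|=d+1$.
\end{itemize}
These values are independent of $q$, and the target formulas force this. The factors $1-1/(d\mp1)^2=d(d\mp2)/(d\mp1)^2$ in $g_\beta,g_\gamma$ contain no $q$, and all $q$-dependence enters through $m_\rho/m_\alpha=(d-1)(q+1)/(2n)$ and $m_\sigma/m_\alpha=(d+1)(q-1)/(2n)$.

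You also need the signs in $\lambda$ explicitly, not just ``$\pm1$''. $\rho_\lambda(\tau_n)$ acts as $+1$ on $\sigma\to\alpha\to\lambda$ (two boxes in row $d$) and as $-1$ on $\rho\to\alpha\to\lambda$ (two boxes in column $q$). This relative sign gives
$\langle\Omega_{\lambda\alpha}|g_{n,\alpha}\rangle=m_\alpha^{-1/2}\bigl(m_\sigma/(d+1)-m_\rho/(d-1)\bigr)=-\sqrt{m_\alpha}/n$. With equal signs you would subtract $q^2m_\alpha/n^2$ instead of $m_\alpha/n^2$ and get the wrong $g_\alpha$.

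With the correct data your bookkeeping closes. First, $\|g_{n,\alpha}\|^2=m_\rho/(d-1)^2+m_\sigma/(d+1)^2=(n+2)m_\alpha/\bigl(n(d^2-1)\bigr)$. Hence
\[
\frac{g_\alpha}{m_\alpha}=\frac{1}{n-1}\Bigl(\frac{n+2}{d^2-1}-\frac{1}{n}\Bigr)=\frac{(n+1)^2-d^2}{n(n-1)(d^2-1)}=\frac{d^2(q^2-1)}{n(n-1)(d^2-1)}.
\]
Finally, $g_\beta=\frac{n}{n-1}m_\rho\bigl(1-\frac{1}{(d-1)^2}\bigr)$ and $g_\gamma=\frac{n}{n-1}m_\sigma\bigl(1-\frac{1}{(d+1)^2}\bigr)$ reproduce the other two formulas.
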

From this proposition, we can define partial isometries $V_\eta^{\mathrm{act}}$ for $\eta\in\{\alpha, \beta, \gamma\}$ as
\begin{align}
  V_\alpha^{\mathrm{act}}
  &\coloneqq 
  \ketbra{0, \lambda}{\Omega_{\lambda\alpha}}
  +
  \frac{1}{\sqrt{g_\alpha}}
  \sum_{i=1}^{n}\ketbra{i, \mu}{\widetilde{g}_{i,\alpha}},
  \label{eq:direct-active-V-alpha}\\
  V_\eta^{\mathrm{act}}
  &\coloneqq 
  \frac{1}{\sqrt{g_\eta}}
  \sum_{i=1}^{n}\ketbra{i, \mu}{\widetilde{g}_{i,\eta}}
  \quad \forall \eta\in\{\beta,\gamma\},
  \label{eq:direct-active-V-other}
\end{align}
which can be completed to isometries $V_\eta$ by adding orthonormal vectors in the orthogonal complement of the support of $V_\eta^{\mathrm{act}}$.
We also fix isometries $V_\eta$ for $\eta\notin\{\alpha, \beta, \gamma\}$ arbitrarily.
Then, we define the decoder isometry $D_{n,d}: E\otimes \bfO\to R\otimes Y\otimes F\otimes \cU_Y$ by
\begin{align}
  &D_{n,d}(\ket{s_\lambda}_E \otimes \ket{o_1\cdots o_n}_{\bfO}) \notag\\
  &\coloneqq \sum_{\eta\vdash_d n} \sum_{s_\eta\in \Path{\eta}} (\mathbb{I}_R \otimes \mathrm{dCG}^\dagger)V_\eta(\ket{s_\lambda} \otimes \ket{s_\eta}) \otimes (\bra{s_\eta} \otimes \bra{\eta}\otimes \mathbb{I}_{\cU_\eta})U_\mathrm{Sch}^{(n)}\ket{o_1\cdots o_n},
\end{align}
where $R = \mathrm{span}\{\ket{r}\}_{r=0}^{n}$, $Y = \mathrm{span}\{\ket{\nu}\}_{\nu\vdash_d n+1}$, and $\cU_Y = \bigoplus_{\nu\vdash_d n+1} \cU_\nu$, and we define the joint Hilbert space $Z \coloneqq R\otimes Y\otimes\cU_Y$.
We define the decoder channel $\cD_{n,d}$ by
\begin{align}
  \cD_{n,d}(\cdot)\coloneqq \Tr_{Z}\!\qty[D_{n,d} (\cdot) D_{n,d}^\dagger].
  \label{eq:decoder-channel-definition}
\end{align}
The output channel implemented by the full circuit is
\begin{align}
 \Phi_{U,p}
 =
 \cD_{n,d}\circ
 \left(\cN_{U,p}^{\otimes n}\otimes\operatorname{id}_E\right)
 \circ \cE_{n,d}.
 \label{eq:full-channel}
\end{align}
The complete circuit is given by Fig.~\ref{fig:full-circuit}.

\begin{figure}[htbp]
 \centering
 \includegraphics[width=0.75\linewidth]
 {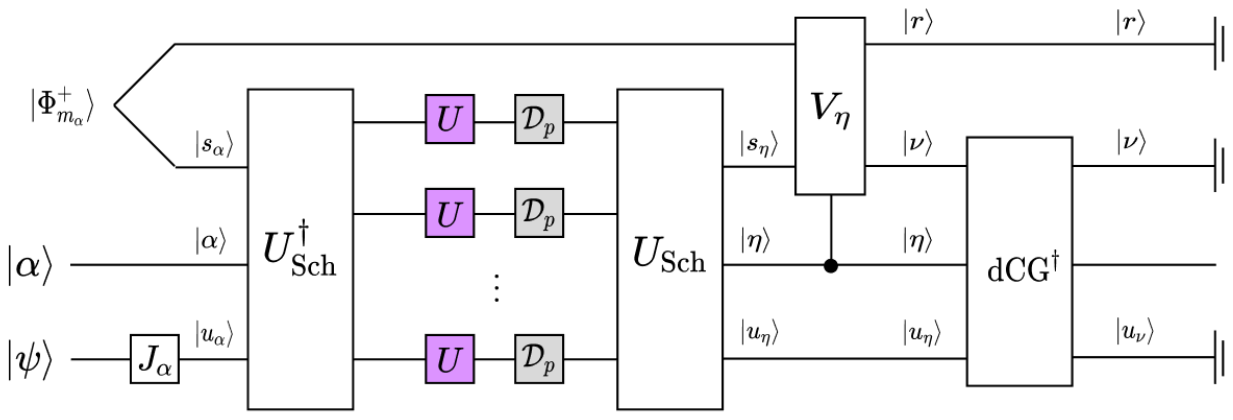}
 \caption{Full Stinespring circuit. Here, $|\Phi_{m_\alpha}^+\rangle := \frac{1}{\sqrt{m_\alpha}} \sum_{s_\alpha} |s_\alpha\rangle_{\mathcal{S}_\alpha} \otimes |s_\alpha \to \lambda\rangle_E$. The middle boxes represent the $n$ parallel calls
 $\cN_{U,p}^{\otimes n}:\bfI\to\bfO$.
 The Schur multiplicity register is retained coherently until
 $V_\eta$; it must not be measured or discarded.
 The final registers $r$, $\nu$, and $u_\nu$ comprise the decoder environment.}
 \label{fig:full-circuit}
\end{figure}

\subsubsection{Evaluation of the fidelity}
\label{sec:one-error-output}
The input noisy query is given by
\begin{gather}
\mathcal U(\rho)\coloneqq U\rho U^\dagger,\qquad \cT(\rho)\coloneqq \Tr(\rho)\frac{\mathbb{I}_d}{d},\qquad \cN_{U,p}=(1-p)\mathcal U+p\cT.
\label{eq:one-error-noise-model}
\end{gather}
We choose a Hermitian operator basis $\{T_a\}_{a=0}^{\Delta}$ with $\Delta\coloneqq d^2-1$ satisfying
\begin{gather}
T_0=\mathbb{I}_d,\qquad \Tr(T_aT_b)=d\delta_{ab}.
\label{eq:one-error-operator-basis}
\end{gather}
Here, $\{T_a\}_{a=1}^{\Delta}$ forms a basis of $\mathrm{Adj}$, and $T_0$ is newly introduced to form a basis of $\mathbb{C}^d \otimes \bar{\mathbb{C}}^d$. Then, the completely depolarizing channel can be written as
\begin{gather}
\cT(\rho)=\sum_{a=0}^{\Delta}{T_a \over d} (\cdot) {T_a\over d}.
\label{eq:one-error-replacer-kraus}
\end{gather}
Let $Z\coloneqq R\otimes Y\otimes\cU_Y$ denote all decoder registers that are traced out in Eq.~\eqref{eq:decoder-channel-definition}.
For an orthonormal basis $\{\ket{z}\}_z$ of $Z$, we define the decoder Kraus operators $\{D_z\}_z$ by
\begin{gather*}
D_z\coloneqq(\mathbb{I}_F\otimes\bra{z}_Z)D_{n,d}:\bfO\otimes E\longrightarrow F.
\label{eq:one-error-decoder-kraus}
\end{gather*}
We can evaluate the fidelity for the case of $U = \mathbb{I}_d$ due to the covariance of the protocol.
The output channel for $U = \mathbb{I}_d$ is given by
\begin{align}
  \Phi_p(\rho) = (1-p)^n \Phi^{(0)}(\rho) + p(1-p)^{n-1} \Phi^{(1)}(\rho) + O(p^2),
\end{align}
where $\Phi^{(0)}(\rho)$ and $\Phi^{(1)}(\rho)$ are defined by
\begin{align}
  \Phi^{(0)} &\coloneqq \cD_{n,d} \circ \qty[\mathbb{I}_d^{\otimes n} \otimes \mathbb{I}_E] \circ \cE_{n,d},\\
  \Phi^{(1)} &\coloneqq \sum_{i=1}^{n} \cD_{n,d} \circ \qty[\qty(\mathbb{I}^{\otimes (i-1)} \otimes \cT \otimes \mathbb{I}^{\otimes (n-i)}) \otimes \mathbb{I}_E] \circ \cE_{n,d}.
\end{align}
We define
\begin{align}
  K_{a,i,z}\coloneqq D_z \qty(\mathbb{I}^{\otimes (i-1)} \otimes {T_a\over d} \otimes \mathbb{I}^{\otimes (n-i)})E_{n,d},
\end{align}
and we denote $K_{0,z}\coloneqq K_{0,i,z}$ for any $i\in [n]$ since it does not depend on $i$.
Then, $\Phi^{(0)}$ and $\Phi^{(1)}$ can be expressed in terms of Kraus operators as
\begin{align}
  \Phi^{(0)}(\cdot) &\coloneqq d^2 \sum_z K_{0,z}(\cdot)K_{0,z}^\dagger,\\
  \Phi^{(1)}(\cdot) &\coloneqq \sum_{i=1}^{n}\sum_{a=0}^{\Delta}\sum_z K_{a,i,z}(\cdot)K_{a,i,z}^\dagger.
\end{align}
The fidelity between the output channel $\Phi_p$ and the ideal channel $\mathbb{I}_d$ is given by
\begin{align}
  F = (1-p)^n \sum_z|f_{0,z}|^2 + {p(1-p)^{n-1} \over d^2} \sum_{i=1}^{n}\sum_{a=0}^{\Delta}\sum_z |f_{a,i,z}|^2 + O(p^2),
\end{align}
where $f_{0,z}$ and $f_{a,i,z}$ are defined by
\begin{align}
  f_{0,z} \coloneqq \Tr[K_{0,z}], \quad f_{a,i,z} \coloneqq \Tr[K_{a,i,z}].
\end{align}
We evaluate $f_{0,z}$ and $f_{a,i,z}$ as follows.
First, the encoder can be written as
\begin{align}
  E_{n,d}\ket{\psi}_P
  &\coloneqq \frac{1}{\sqrt{m_\alpha}} \sum_{s_\alpha \in \Path{\alpha}} U_{\mathrm{Sch}}^{(n)\dagger} \qty[\ket{s_\alpha}_{\cS_\alpha} \otimes \ket{\alpha} \otimes (J_\alpha \ket{\psi})_{\cU_\alpha}] \otimes \ket{s_\alpha \to \lambda}_E \nonumber\\
  &= \frac{1}{\sqrt{m_\alpha}} \sum_{s_\alpha \in \Path{\alpha}} U_{\mathrm{Sch}}^{(n)\dagger}\qty[\ket{s_\alpha} \otimes \ket{\alpha} \otimes \bra{\alpha}\mathrm{CG}_{\lambda, \overline{\square}} (\ket{u_\lambda} \otimes \ket{\psi})] \otimes \ket{s_\alpha \to \lambda}_E \nonumber\\
  &= \sqrt{d \over m_\alpha} \sum_{s_\alpha \in \Path{\alpha}} U_{\mathrm{Sch}}^{(n)\dagger}\qty[\ket{s_\alpha} \otimes \ket{\alpha} \otimes (\mathbb{I}_{\cU_\alpha} \otimes \bra{\overline{\psi}})\mathrm{CG}_{\alpha, \square}^\dagger \ket{u_\lambda}] \otimes \ket{s_\alpha \to \lambda}_E \nonumber\\
  &= \sqrt{d \over m_\alpha} \sum_{s_\alpha \in \Path{\alpha}} (\mathbb{I}_d^{\otimes n} \otimes \bra{\overline{\psi}})U_\mathrm{Sch}^{(n+1)\dagger} \qty[\ket{s_\alpha\to\lambda} \otimes \ket{\lambda} \otimes \ket{u_\lambda}] \otimes \ket{s_\alpha \to \lambda}_E \nonumber\\
  &= \sqrt{d \over m_\alpha} \sum_{s_\lambda \in \Path{\lambda}} (\mathbb{I}_d^{\otimes n} \otimes \bra{\overline{\psi}})U_\mathrm{Sch}^{(n+1)\dagger} \qty[\ket{s_\lambda} \otimes \ket{\lambda} \otimes \ket{u_\lambda}] \otimes \ket{s_\lambda}_E
\end{align}
using a basis vector $\ket{u_\lambda}\in \cU_\lambda\cong \C$, where we use $\Path{\lambda} = \{\ket{s_\alpha\to\lambda} \mid s_\alpha\in\Path{\alpha}\}$ which comes from $\lambda-\square = \{\alpha\}$.
The decoder can be written as
\begin{align}
  &\bra{r}\bra{\nu}\bra{\overline{\psi}}\bra{u_\nu} D_{n,d}(\ket{s_\lambda} \otimes \ket{o_1\cdots o_n})\notag\\
  &= \sum_{\eta\vdash_d n} \sum_{s_\eta\in \Path{\eta}} \bra{r, \nu}V_\eta(\ket{s_\lambda} \otimes \ket{s_\eta}) \otimes (\bra{s_\eta} \otimes \bra{\eta}\otimes \bra{u_\nu} \bra{\overline{\psi}} \mathrm{CG}^\dagger_{\nu, \overline{\square}})U_\mathrm{Sch}^{(n)}\ket{o_1\cdots o_n} \nonumber\\
  &= \sum_{\eta\vdash_d n} \sqrt{d_\eta\over d_\nu} \sum_{s_\eta\in \Path{\eta}} \bra{r, \nu}V_\eta(\ket{s_\lambda} \otimes \ket{s_\eta}) \otimes (\bra{s_\eta} \otimes \bra{\eta}\otimes \bra{u_\nu}\mathrm{CG}_{\eta, \square} \ket{\psi})U_\mathrm{Sch}^{(n)}\ket{o_1\cdots o_n} \nonumber\\
  &= \sum_{\eta\vdash_d n} \sqrt{d_\eta\over d_\nu} \sum_{s_\eta\in \Path{\eta}}\bra{r, \nu}V_\eta(\ket{s_\lambda} \otimes \ket{s_\eta}) \bra{s_\eta\to\nu}\bra{\nu}\bra{u_\nu}U_{\mathrm{Sch}}^{(n+1)} \ket{o_1\cdots o_n\psi}.
\end{align}
Thus, we have ($z = (r, \nu, u_\nu)$)
\begin{align}
  f_{a,i,z}&\coloneqq \Tr\!\left[D_z\left(\left(\mathbb{I}_d^{\otimes(i-1)}\otimes\frac{T_a}{d}\otimes\mathbb{I}_d^{\otimes(n-i)}\right)\otimes\mathbb{I}_E\right)E_{n,d}\right]\notag\\
  &= \sum_{\eta\vdash_d n} \sum_{s_\eta\in \Path{\eta}} \sum_{s_\lambda\in\Path{\lambda}} \sqrt{d_\eta \over d m_\alpha d_\nu}\bra{r, \nu}V_\eta(\ket{s_\lambda} \otimes \ket{s_\eta}) \notag\\
  &\times \bra{s_\eta \to \nu} \bra{\nu} \bra{u_\nu} U_\mathrm{Sch}^{(n+1)} \left(\mathbb{I}_d^{\otimes(i-1)}\otimes T_a \otimes\mathbb{I}_d^{\otimes(n+1-i)}\right)U_\mathrm{Sch}^{(n+1)}(\ket{s_\lambda}\otimes \ket{\lambda}\otimes\ket{u_\lambda}) \nonumber\\
  &= \sum_{\eta\vdash_d n} \sum_{s_\eta\in \Path{\eta}} \sum_{s_\lambda\in\Path{\lambda}} \sqrt{d_\eta \over d m_\alpha d_\nu} \bra{r, \nu}V_\eta(\ket{s_\lambda} \otimes \ket{s_\eta}) \notag\\
  &\times \bra{s_\eta \to \nu} \rho_\nu(\tau_i) \bra{\nu} \bra{u_\nu} U_\mathrm{Sch}^{(n+1)} \left(\mathbb{I}_d^{\otimes n}\otimes T_a \right)U_\mathrm{Sch}^{(n+1)}(\rho_\lambda(\tau_i)\ket{s_\lambda}\otimes \ket{\lambda}\otimes\ket{u_\lambda}).
\end{align}
Here, $\cU_\alpha\otimes \cU_\square \cong \cU_\lambda\oplus \cU_\mu$ holds and $\cU_\lambda$ corresponds to the one-dimensional space spanned by $\ket{\Omega_{\alpha\square}} \coloneqq {1\over \sqrt{d}} \sum_{i=1}^{d} \ket{i} \otimes \ket{i} \in \cU_\alpha\otimes \cU_\square$ using the computational basis $\{\ket{i}\}_{i=1}^{d}$ of $\cU_\alpha\cong \C^d$ and $\cU_\square\cong \overline{\C^d}$.
Then, we have
\begin{align}
  (U_\mathrm{Sch}^{(n)} \otimes T_a)U_\mathrm{Sch}^{(n+1)\dagger}(\ket{s_\lambda} \otimes \ket{\lambda} \otimes \ket{u_\lambda})
  &= (\mathbb{I}_{\cU_\alpha} \otimes T_a) \ket{\Omega_{\alpha, \square}}.
\end{align}
Since $(\mathbb{I}_{\cU_\alpha} \otimes T_a) \ket{\Omega_{\alpha, \square}}$ is orthogonal to $\ket{\Omega_{\alpha, \square}}$ for $a\geq 1$, we have
\begin{align}
  \mathrm{CG}_{\alpha, \square}(\mathbb{I}_{\cU_\alpha} \otimes T_a)\ket{\Omega_{\alpha, \square}}
  \in
  \begin{cases}
    \cU_\lambda & (a=0)\\
    \cU_\mu & (a\geq 1)
  \end{cases}.
\end{align}
Thus, we can define a unitary operator $J_{\mathrm{ad}}: \mathrm{span}\{\ket{a}\}_{a=1}^{\Delta} \to \cU_\mu$ by
\begin{align}
  J_{\mathrm{ad}}\ket{a} \coloneqq \mathrm{CG}_{\alpha, \square}(\mathbb{I}_{\cU_\alpha} \otimes T_a) \ket{\Omega_{\alpha, \square}}.
\end{align}
Then, we have
\begin{align}
  U_\mathrm{Sch}^{(n+1)}(\mathbb{I}_d^{\otimes n} \otimes T_a)U_\mathrm{Sch}^{(n+1)\dagger}(\ket{s_\alpha\to\lambda} \otimes \ket{\lambda} \otimes \ket{u_\lambda})
  =
  \begin{cases}
    \ket{s_\alpha\to\lambda} \otimes \ket{\lambda} \otimes \ket{u_\lambda} & (a=0)\\
    \ket{s_\alpha \to \mu}\otimes \ket{\mu} \otimes J_{\mathrm{ad}} \ket{a} & (a\geq 1)
  \end{cases},
\end{align}
which shows that
\begin{align}
  &\bra{s_\eta \to \nu} \rho_\nu(\tau_i) \bra{\nu} \bra{u_\nu} U_\mathrm{Sch}^{(n+1)} \left(\mathbb{I}_d^{\otimes n}\otimes T_a \right)U_\mathrm{Sch}^{(n+1)}(\rho_\lambda(\tau_i)\ket{s_\lambda}\otimes \ket{\lambda}\otimes\ket{u_\lambda})\notag\\
  &=\begin{cases}
    \delta_{\eta\alpha} \delta_{s_\eta s_\alpha} \delta_{\nu\lambda} \delta_{u_\nu u_\lambda} & (a=0)\\
    \delta_{\nu\mu} \bra{s_\eta\to\mu}\rho_{\mu}(\tau_i) R_{\lambda\to\mu} \rho_{\lambda}(\tau_i) \ket{s_\lambda} \cdot \bra{u_\nu} J_{\mathrm{ad}} \ket{a} & (a\geq 1)
  \end{cases}.
\end{align}
Using this, we can further calculate $f_{a,i,z}$ as follows.
When $a=0$, we write $f_{0,z}\coloneqq f_{0,i,z}$ for any $i\in [n]$ since it is independent of $i$:
\begin{align}
  f_{0,z}
  &= \sqrt{1 \over m_\alpha} \delta_{\nu\lambda}\delta_{u_\nu u_\lambda} \sum_{s_\alpha\in\Path{\alpha}} \bra{r, \nu}V_\alpha(\ket{s_\alpha\to\lambda} \otimes \ket{s_\alpha}) \nonumber\\
  &= \delta_{\nu\lambda}\delta_{u_\nu u_\lambda} \bra{r,\nu} V_\alpha \ket{\Omega_{\lambda\alpha}} \nonumber\\
  &= \delta_{\nu\lambda} \delta_{u_\nu u_\lambda} \delta_{r,0}.
\end{align}
When $a\geq 1$,
\begin{align}
  f_{a,i,z}
  &=\sum_{\eta\vdash_d n} \sqrt{d_\eta \over d m_\alpha d_\mu} \delta_{\nu\mu} \sum_{\substack{s_\eta\in \Path{\eta}\\s_\lambda\in\Path{\lambda}}} \bra{r, \nu}V_\eta(\ket{s_\lambda} \otimes \ket{s_\eta}) \cdot \bra{s_\eta\to\mu} \rho_{\mu}(\tau_i) R_{\lambda\to\mu} \rho_{\lambda}(\tau_i) \ket{s_\lambda} \cdot \bra{u_\nu} J_{\mathrm{ad}} \ket{a} \nonumber\\
  &= \sum_{\eta\vdash_d n} \sqrt{d_\eta \over d m_\alpha d_\mu} \delta_{\nu\mu} \sum_{\substack{s_\lambda\in\Path{\lambda}\\s_\eta\in\Path{\eta}}} \bra{r, \nu}V_\eta(\ket{s_\lambda} \otimes \ket{s_\eta}) \cdot \bra{g_{i,\eta}} (\ket{s_\lambda} \otimes \ket{s_\eta}) \cdot \bra{u_\nu} J_{\mathrm{ad}} \ket{a} \nonumber\\
  &= \sum_{\eta\vdash_d n} \sqrt{d_\eta \over d m_\alpha d_\mu} \delta_{\nu\mu} \bra{r, \nu}V_\eta\ket{g_{i,\eta}} \cdot \bra{u_\nu} J_{\mathrm{ad}} \ket{a} \nonumber\\
  &= \sum_{\eta\in \mu-\square} \sqrt{d_\eta g_\eta\over d m_\alpha d_\mu} \delta_{\nu\mu} \qty(\delta_{r,i}-{1\over n}) \bra{u_\nu} J_{\mathrm{ad}} \ket{a}.
\end{align}
Then, the fidelity is evaluated as follows. Since
\begin{align}
  \sum_z |f_{0,z}|^2 &= \sum_z (\delta_{\nu\lambda} \delta_{u_\nu u_\lambda} \delta_{r,0})^2 = 1,\\
  \sum_{a=1}^{\Delta} \sum_z |f_{a,i,z}|^2
  &= \sum_{a=1}^{\Delta} \sum_z \qty(\sum_{\eta\in \mu-\square} \sqrt{d_\eta g_\eta \over d m_\alpha d_\mu g_\eta} \delta_{\nu\mu} \qty(\delta_{r,i}-{1\over n}) \bra{u_\nu} J_{\mathrm{ad}} \ket{a})^2 \nonumber\\
  &= {n-1\over n}\qty(\sum_{\eta\in \mu-\square} \sqrt{d_\eta g_\eta \over d m_\alpha d_\mu g_\eta})^2 \Tr(J_\mathrm{ad}^\dagger J_{\mathrm{ad}}) \nonumber\\
  &= {n-1\over n} \qty(\sum_{\eta\in \mu-\square} \sqrt{d_\eta g_\eta \over d m_\alpha})^2, 
\end{align}
we obtain
\begin{align}
  F = 1-n\left[1-f_1(d,q)\right]p+O(p^2),
\end{align}
with
\begin{gather}
f_1(d,q)=\frac1{d^2}+{n-1 \over nd^3}\left(\sum_{\eta\in\mu-\square}\sqrt{\frac{d_\eta g_\eta}{m_\alpha}}\right)^2.
\label{eq:one-error-fidelity-branch}
\end{gather}
By substituting Eqs.~\eqref{eq:direct-g-alpha}, \eqref{eq:direct-g-beta}, and \eqref{eq:direct-g-gamma} into Eq.~\eqref{eq:one-error-fidelity-branch}, we obtain
\begin{align}
  f_1(d,q)
  &=\frac1{d^2}+\frac1n\left[\sqrt{\frac{q^2-1}{n(d^2-1)}}+\frac{d-2}{2}\sqrt{\frac{(d+1)(q+1)}{d(d-1)}}+\frac{d+2}{2}\sqrt{\frac{(d-1)(q-1)}{d(d+1)}}\right]^2 \nonumber\\
  &= 1-{(d^2-1)(d^2+2) \over 4d^2 n^2} + O(n^{-3}),
\end{align}
i.e.,
\begin{align}
  F = 1-{(d^2-1)(d^2+2) \over 4d^2 n}p + O(n^{-2}p, p^2).
\end{align}

\subsubsection{Proof of Proposition~\ref{prop:direct-gram}}
\begin{proof}
Recall the frame operator $A_\eta$ and Gram matrix $G_\eta$ from Eq.~\eqref{eq:direct-frame-and-gram-operators}.
For $\pi\in\mathfrak S_n$, let
\begin{align}
  \Pi(\pi)\coloneqq \sum_{i=1}^n\ketbra{\pi(i)}{i}
  \label{eq:direct-position-representation}
\end{align}
be the permutation representation on the query-position space.
The identity $\tau_{\pi(i)}=\pi\tau_i\pi^{-1}$ and the Young branching rule imply
\begin{align}
  \ket{g_{\pi(i),\eta}}
  =
  \bigl(\rho_\lambda(\pi)\otimes
  \overline{\rho_\eta(\pi)}\bigr)\ket{g_{i,\eta}}.
  \label{eq:direct-gram-symmetry}
\end{align}
Indeed, on the path subspaces ending at $\alpha$, the restrictions of $\rho_\lambda(\pi)$ and $\rho_\mu(\pi)$ both equal $\rho_\alpha(\pi)$, while on the path subspace ending at $\eta$, the restriction of $\rho_\mu(\pi)$ equals $\rho_\eta(\pi)$.
Equation~\eqref{eq:direct-gram-symmetry} is equivalent to the intertwining relation
\begin{align}
  A_\eta\Pi(\pi)
  =
  \bigl(\rho_\lambda(\pi)\otimes
  \overline{\rho_\eta(\pi)}\bigr)A_\eta.
  \label{eq:direct-frame-intertwiner}
\end{align}
Consequently,
\begin{align}
  G_\eta\Pi(\pi)=\Pi(\pi)G_\eta.
  \label{eq:direct-gram-commutant}
\end{align}

The permutation representation decomposes multiplicity-freely as $\Pi\simeq\mathrm{triv}\oplus\mathrm{std}$, with the trivial space spanned by $\ket{+}$ and the standard space given by the orthogonal complement of $\ket{+}$.
Schur's lemma therefore gives
\begin{align}
  G_\eta
  =
  g_\eta^{\mathrm{triv}}\ketbra{+}
  +g_\eta (\mathbb{I}_n - \ketbra{+}),
  \qquad
  g_\eta^{\mathrm{triv}},g_\eta\geq 0.
  \label{eq:direct-gram-decomposition}
\end{align}

We first determine the trivial eigenvalue.
Because $\Pi(\pi)\ket{+}=\ket{+}$, Eq.~\eqref{eq:direct-frame-intertwiner} shows that $A_\eta\ket{+}$ is invariant under $\rho_\lambda\otimes\overline{\rho_\eta}$.
The restriction of $\rho_\lambda$ from $\mathfrak S_N$ to $\mathfrak S_n$ contains only $\rho_\alpha$, and hence Schur's lemma gives
\begin{align}
  A_\eta\ket{+}
  \propto
  \begin{cases}
    \ket{\Omega_{\lambda\alpha}} & (\eta=\alpha)\\
    0 & (\eta = \beta, \gamma)
  \end{cases}.
  \label{eq:direct-trivial-range}
\end{align}
The covariance in Eq.~\eqref{eq:direct-gram-symmetry} also shows that $\braket{\Omega_{\lambda\alpha}}{g_{i,\alpha}}$ is independent of $i$.
Therefore
\begin{align}
  \bra{\Omega_{\lambda\alpha}}A_\alpha\ket{+}
  =
  \sqrt n\,
  \braket{\Omega_{\lambda\alpha}}{g_{n,\alpha}}.
  \label{eq:direct-trivial-overlap-reduction}
\end{align}
It remains to evaluate the overlap on the right-hand side.
Equations~\eqref{eq:direct-frame-vectors} and \eqref{eq:direct-omega} give
\begin{align}
  \braket{\Omega_{\lambda\alpha}}{g_{n,\alpha}}
  =
  \frac{1}{\sqrt{m_\alpha}}
  \sum_{s_\alpha,s'_\alpha\in\Path{\alpha}}
  \bra{s'_\alpha\to\lambda}\rho_\lambda(\tau_n)
  \ketbra{s_\alpha\to\lambda}{s_\alpha\to\mu}
  \rho_\mu(\tau_n)\ket{s'_\alpha\to\mu}.
  \label{eq:direct-trivial-overlap-expanded}
\end{align}
Now, the action of the transposition $\tau_n = (n,n+1)$ in the Young--Yamanouchi basis is given as follows~\cite{ceccherini2010representation}: 
\begin{lm}
\label{lem:young-orthogonal-last-swap}
Let $\xi\vdash n-1$, let $e_i$ denote the unit vector that adds one box to row $i$, and suppose
\begin{align}
  s_\nu
  =
  s_\xi\to\xi+e_i\to\xi+e_i+e_j,
  \notag\\
  \xi+e_i\vdash n, \quad \nu=\xi+e_i+e_j\vdash n+1.
  \label{eq:young-path-last-two-boxes}
\end{align}
Define the axial distance
\begin{align}
  r_{s_\nu}\coloneqq (\xi_j-j)-(\xi_i-i)+\delta_{ij}.
  \label{eq:young-axial-distance}
\end{align}
If $i\ne j$ and $\xi+e_j\vdash n$, define
\begin{align}
  \ket{\tau_n\cdot s_\nu}
  \coloneqq 
  \ket{s_\xi\to\xi+e_j\to\xi+e_j+e_i}.
  \label{eq:young-swapped-path-valid}
\end{align}
Otherwise, set
\begin{align}
  \ket{\tau_n\cdot s_\nu}\coloneqq 0.
  \label{eq:young-swapped-path-zero}
\end{align}
Young's orthogonal representation then satisfies
\begin{align}
  \rho_\nu(\tau_n)\ket{s_\nu}
  =
  \frac{1}{r_{s_\nu}}\ket{s_\nu}
  +\sqrt{1-\frac{1}{r_{s_\nu}^2}}\,
  \ket{\tau_n\cdot s_\nu}.
  \label{eq:young-orthogonal-last-swap}
\end{align}
\end{lm}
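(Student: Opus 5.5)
The statement is the classical Young orthogonal form for the adjacent transposition $\tau_n=(n,n+1)$. The plan is to derive it from the Jucys--Murphy (Okounkov--Vershik) characterization of the Young--Yamanouchi basis rather than from the hook or axial formulas directly. Recall that the Young--Yamanouchi basis $\{\ket{s_\nu}\}$ is the Gelfand--Tsetlin basis for the chain $\mathfrak S_1\subset\cdots\subset\mathfrak S_{n+1}$. Concretely, each $\ket{s_\nu}$ is a joint eigenvector of the Jucys--Murphy elements $X_k=\sum_{l<k}(l,k)$, and the eigenvalue of $X_k$ is the content $c_k=(\text{column})-(\text{row})$ of the box labelled $k$ in the standard tableau $s_\nu$. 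I will take this characterization as known~\cite{ceccherini2010representation}. The phase convention fixing Young's \emph{orthogonal} form, namely nonnegative off-diagonal entries, is part of the definition.

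First, I localize the action of $\tau_n$. Since $\tau_n$ commutes with $\mathfrak S_{n-1}$ and with $X_1,\dots,X_{n-1}$, it preserves the isotypic data of levels $1,\dots,n-1$, that is, the prefix $s_\xi$. Since $\tau_n$ lies in $\mathfrak S_{n+1}$, it also preserves the endpoint $\nu$. Finally, $\tau_n$ commutes with $X_n+X_{n+1}$, whose eigenvalue is $c_n+c_{n+1}$. Together these show that $\tau_n$ maps $\ket{s_\nu}$ into the span of paths $s_\xi\to\xi'\to\nu$ with $\xi'\vdash n$ lying between $\xi$ and $\nu$. This span is two-dimensional, spanned by $\ket{s_\nu}$ and $\ket{\tau_n\cdot s_\nu}$, exactly when $i\neq j$ and $\xi+e_j$ is a valid diagram. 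In every other case it is one-dimensional.

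Second, I compute the $2\times 2$ (or $1\times1$) block using the relations $\tau_n X_n\tau_n=X_{n+1}-\tau_n$, equivalently $X_{n+1}\tau_n-\tau_n X_n=\mathbb{I}$, together with $\tau_n^2=\mathbb{I}$ and $\tau_n^\dagger=\tau_n$. Taking the diagonal matrix element of $X_{n+1}\tau_n-\tau_nX_n=\mathbb{I}$ in $\ket{s_\nu}$ gives
\[
(c_{n+1}-c_n)\bra{s_\nu}\tau_n\ket{s_\nu}=1 .
\]
It remains to identify $c_{n+1}-c_n$ with the axial distance. Box $n$ sits in row $i$ with content $\xi_i+1-i$. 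If $j\neq i$, box $n+1$ sits in row $j$ with content $\xi_j+1-j$. If $j=i$, box $n+1$ sits in the same row one column further right, with content $\xi_i+2-i$. In both cases $c_{n+1}-c_n=(\xi_j-j)-(\xi_i-i)+\delta_{ij}=r_{s_\nu}$, which gives the diagonal entry $1/r_{s_\nu}$.

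Third, I handle the two cases separately. In the one-dimensional case, $\tau_n$ acts as $\pm1$ there. The diagonal formula then forces $r_{s_\nu}=\pm1$, so the coefficient $\sqrt{1-1/r^2}$ vanishes, consistent with $\ket{\tau_n\cdot s_\nu}=0$. In the two-dimensional case, the swapped path has axial distance $-r_{s_\nu}$, so its diagonal entry is $-1/r_{s_\nu}$. The block is a real symmetric involution with these diagonal entries. The relation $\tau_n^2=\mathbb{I}$ then forces the off-diagonal entry to have modulus $\sqrt{1-1/r_{s_\nu}^2}$. The orthogonal-form phase convention makes it positive, which yields the claimed formula.

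The main obstacle I expect is the localization step: justifying rigorously that $\tau_n$ moves only between paths sharing the prefix $s_\xi$ and the endpoint $\nu$, and that there are no other partners. I would first try the commutation argument above combined with the fact that Gelfand--Tsetlin subspaces are one-dimensional, so joint eigenvalues of $X_1,\dots,X_{n+1}$ determine paths. The sign convention is definitional, so it does not require proof.
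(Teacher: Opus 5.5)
Your proof is correct. The paper does not prove this lemma itself but imports it from Ceccherini-Silberstein, Scarabotti, and Tolli, and your argument is the standard Okounkov--Vershik derivation by which that reference establishes it. It has three steps: localize $\tau_n$ to the at most two paths through $\nu/\xi$; get the diagonal entry $1/(c_{n+1}-c_n)$ from $X_{n+1}\tau_n-\tau_nX_n=\mathbb{I}$; then fix the off-diagonal entry from $\tau_n^2=\mathbb{I}$ together with the positivity convention. One small remark: the commutation with $X_n+X_{n+1}$ is redundant, since fixing the prefix $s_\xi$ and the endpoint $\nu$ already confines $\tau_n$ to that span.
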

Lemma~\ref{lem:young-orthogonal-last-swap} makes the first matrix element diagonal because $\lambda$ has the unique parent $\alpha$.
For the two possible penultimate diagrams $\rho$ and $\sigma$, it gives
\begin{align}
  \bra{s'_\alpha\to\lambda}\rho_\lambda(\tau_n)
  \ket{s_\sigma\to\alpha\to\lambda}
  &=
  \delta_{s'_\alpha,s_\sigma\to\alpha}, \nonumber\\
  \bra{s'_\alpha\to\lambda}\rho_\lambda(\tau_n)
  \ket{s_\rho\to\alpha\to\lambda}
  &=
  -\delta_{s'_\alpha,s_\rho\to\alpha}.
  \label{eq:direct-lambda-diagonal-elements}
\end{align}
The same lemma in the $\mu$ representation gives
\begin{align}
  \bra{s_\sigma\to\alpha\to\mu}\rho_\mu(\tau_n)
  \ket{s_\sigma\to\alpha\to\mu}
  &=
  \frac{1}{d+1},
  \notag\\
  \bra{s_\rho\to\alpha\to\mu}\rho_\mu(\tau_n)
  \ket{s_\rho\to\alpha\to\mu}
  &=
  \frac{1}{d-1}.
  \label{eq:direct-mu-diagonal-elements}
\end{align}
Here the path variables range over $s_\rho\in\Path{\rho}$ and $s_\sigma\in\Path{\sigma}$, respectively.
If a boundary diagram is absent, its path set is empty and its contribution below is zero.

The hook-length formula yields
\begin{align}
  \frac{m_\rho}{m_\alpha}
  =
  \frac1n
  \prod_{j=1}^{q-1}\frac{q-j+2}{q-j+1}
  \prod_{a=1}^{d-2}\frac{d-a}{d-a-1}
  =
  \frac{(d-1)(q+1)}{2n},
  \label{eq:direct-rho-dimension-ratio}\\
  \frac{m_\sigma}{m_\alpha}
  =
  \frac1n
  \prod_{j=1}^{q-2}\frac{q-j}{q-j-1}
  \prod_{a=1}^{d-1}\frac{d-a+2}{d-a+1}
  =
  \frac{(d+1)(q-1)}{2n}.
  \label{eq:direct-sigma-dimension-ratio}
\end{align}
All products with an empty index range are understood to equal one.
Substituting Eqs.~\eqref{eq:direct-lambda-diagonal-elements}--\eqref{eq:direct-sigma-dimension-ratio} into Eq.~\eqref{eq:direct-trivial-overlap-expanded} gives
\begin{align}
  &\braket{\Omega_{\lambda\alpha}}{g_{n,\alpha}}
  =
  \frac{1}{\sqrt{m_\alpha}}
  \left(\frac{m_\sigma}{d+1}
  -\frac{m_\rho}{d-1}\right),
  \notag\\
  &\braket{\Omega_{\lambda\alpha}}{g_{n,\alpha}}
  =
  \sqrt{m_\alpha}
  \left(\frac{q-1}{2n}-\frac{q+1}{2n}\right)
  =
  -\frac{\sqrt{m_\alpha}}{n}.
  \label{eq:direct-trivial-overlap-value}
\end{align}
Equations~\eqref{eq:direct-trivial-range}, \eqref{eq:direct-trivial-overlap-reduction}, and \eqref{eq:direct-trivial-overlap-value} imply
\begin{align}
  A_\alpha\ket{+}
  =
  \begin{cases}
    -\sqrt{\frac{m_\alpha}{n}}\,
    \ket{\Omega_{\lambda\alpha}} & (\eta=\alpha)\\
    0 & (\eta = \beta, \gamma)
  \end{cases}.
  \label{eq:direct-frame-trivial-action}
\end{align}
Thus
\begin{align}
  g_\eta^{\mathrm{triv}}
  =
  \begin{cases}
    \frac{m_\alpha}{n} & (\eta=\alpha)\\
    0 & (\eta = \beta, \gamma)
  \end{cases}.
  \label{eq:direct-gram-trivial-eigenvalue}
\end{align}
Moreover, Eq.~\eqref{eq:direct-frame-trivial-action} and the definitions of the centered vectors show that
\begin{align}
  \widetilde{A}_\eta&=A_\eta (\mathbb{I}_n-\ketbra{+}),\\
  \widetilde G_\eta
  &=\widetilde{A}_\eta^\dagger\widetilde{A}_\eta
  =(\mathbb{I}_n-\ketbra{+}) G_\eta (\mathbb{I}_n-\ketbra{+})
  =g_\eta (\mathbb{I}_n-\ketbra{+}).
  \label{eq:direct-centered-gram-proof}
\end{align}

We next calculate the standard eigenvalue from the trace.
Equation~\eqref{eq:direct-gram-decomposition} gives
\begin{align}
  g_\eta
  =
  \frac{\Tr G_\eta-g_\eta^{\mathrm{triv}}}{n-1}.
  \label{eq:direct-gram-std-eigenvalue}
\end{align}
By Eq.~\eqref{eq:direct-gram-symmetry}, all frame vectors in a fixed branch have the same norm, and hence
\begin{align}
  \Tr G_\eta
  &=
  \sum_{i=1}^n\braket{g_{i,\eta}}{g_{i,\eta}} \nonumber\\
  &=
  n\braket{g_{n,\eta}}{g_{n,\eta}} \nonumber\\
  &=
  n\sum_{s_\eta\in\Path{\eta}}
  \sum_{s_\alpha\in\Path{\alpha}}
  \left|
  \bra{s_\alpha\to\mu}\rho_\mu(\tau_n)
  \ket{s_\eta\to\mu}
  \right|^2.
  \label{eq:direct-gram-trace}
\end{align}
For the $\beta$ branch, Lemma~\ref{lem:young-orthogonal-last-swap} gives the only nonzero transition amplitudes
\begin{align}
  \bra{s_\alpha\to\mu}\rho_\mu(\tau_n)
  \ket{s_\beta\to\mu}
  =
  \sqrt{1-\frac{1}{(d-1)^2}}
  \sum_{s_\rho\in\Path{\rho}}\delta_{s_\alpha,s_\rho\to\alpha}\delta_{s_\beta,s_\rho\to\beta}.
  \label{eq:direct-beta-transition}
\end{align}
Therefore
\begin{align}
  \Tr G_\beta
  &=
  n\left(1-\frac{1}{(d-1)^2}\right)m_\rho \nonumber\\
  &=
  \frac{d(d-2)(q+1)}{2(d-1)}\,m_\alpha.
  \label{eq:direct-beta-trace}
\end{align}
Likewise, the only nonzero $\gamma$-branch transition amplitudes are
\begin{align}
  \bra{s_\alpha\to\mu}\rho_\mu(\tau_n)
  \ket{s_\gamma\to\mu}
  =
  \sqrt{1-\frac{1}{(d+1)^2}}
  \sum_{s_\sigma\in\Path{\sigma}}\delta_{s_\alpha,s_\sigma\to\alpha}\delta_{s_\gamma,s_\sigma\to\gamma}.
  \label{eq:direct-gamma-transition}
\end{align}
Hence
\begin{align}
  \Tr G_\gamma
  &=
  n\left(1-\frac{1}{(d+1)^2}\right)m_\sigma \nonumber\\
  &=
  \frac{d(d+2)(q-1)}{2(d+1)}\,m_\alpha.
  \label{eq:direct-gamma-trace}
\end{align}
The one-box restriction of $\cS_\mu$ is the orthogonal direct sum over $\eta\in\mu-\square$.
Completeness of these parent subspaces and unitarity of $\rho_\mu(\tau_n)$ imply
\begin{align}
  \sum_{\eta\in\mu-\square}\Tr G_\eta
  =
  n\sum_{s_\alpha\in\Path{\alpha}}
  \bra{s_\alpha\to\mu}
  \rho_\mu(\tau_n)\rho_\mu(\tau_n)^\dagger
  \ket{s_\alpha\to\mu}
  =
  nm_\alpha.
  \label{eq:direct-total-gram-trace}
\end{align}
Using Eqs.~\eqref{eq:direct-beta-trace} and \eqref{eq:direct-gamma-trace}, we obtain
\begin{align}
  \Tr G_\alpha
  =
  nm_\alpha-\Tr G_\beta-\Tr G_\gamma
  =
  \frac{n+2}{d^2-1}\,m_\alpha.
  \label{eq:direct-alpha-trace}
\end{align}
Finally, substituting Eqs.~\eqref{eq:direct-gram-trivial-eigenvalue}, \eqref{eq:direct-beta-trace}, \eqref{eq:direct-gamma-trace}, and \eqref{eq:direct-alpha-trace} into Eq.~\eqref{eq:direct-gram-std-eigenvalue} yields
\begin{align}
  g_\alpha
  =
  \frac{d^2(q^2-1)}
  {n(n-1)(d^2-1)}\,m_\alpha,
  \notag\\
  g_\beta
  =
  \frac{d(d-2)(q+1)}
  {2(n-1)(d-1)}\,m_\alpha,
  \notag\\
  g_\gamma
  =
  \frac{d(d+2)(q-1)}
  {2(n-1)(d+1)}\,m_\alpha.
  \label{eq:direct-standard-eigenvalues}
\end{align}
These are precisely Eqs.~\eqref{eq:direct-g-alpha}--\eqref{eq:direct-g-gamma}.
\end{proof}

\end{document}